\newtheorem{lemma}{Lemma}
\newtheorem{proposition}{Proposition}
\newtheorem{remark}{Remark}
\newtheorem{corollary}{Corollary}
\long\def\symbolfootnote[#1]#2{\begingroup%
\def\thefootnote{\fnsymbol{footnote}}\footnote[#1]{#2}\endgroup}
\newtheorem{theorem}{Theorem}
\newtheorem{definition}{Definition}
\newtheorem{example}{Example}
\newcommand{\dv}{\mathbf} % determenistic vector
\newcommand{\mc}{\mathcal} % determenistic vector
\newcommand{\mkv}{-\!\!\!\!\minuso\!\!\!\!-}
\algnewcommand{\Inputs}[1]{%
  \State \textbf{Inputs:}
  \Statex \hspace*{\algorithmicindent}\parbox[t]{.8\linewidth}{\raggedright #1}
}
\algnewcommand{\Initialize}[1]{%
  \State \textbf{initialization}
  \Statex \hspace*{\algorithmicindent}\parbox[t]{.95\linewidth}{\raggedright #1}
}
\begin{document}
%\fontencoding{OT1}\fontsize{10}{11}\selectfont
%\setlength{\columnsep}{0.15 in}

\title{On the Capacity of Cloud Radio Access Networks with Oblivious Relaying }
\author{\small I\~naki Estella Aguerri \qquad \quad Abdellatif Zaidi   \quad \quad Giuseppe Caire  \quad \quad Shlomo Shamai (Shitz)\\
\thanks{The results in this paper have been partially presented in \cite{EZSC:ISIT:2017}. Inaki Estella Aguerri is with the Mathematical and Algorithmic Sciences Lab, Paris Research Center, Huawei Technologies, 92100 Boulogne-Billancourt, France. Abdellatif Zaidi is with Universit\'e Paris-Est, France, and currently on leave at the Mathematical and Algorithmic Sciences Laboratory, Huawei Paris Research Center, 92100 Boulogne-Billancourt, France. Giuseppe Caire is with the Technische Universit\"at Berlin, 10587 Berlin, Germany.  Shlomo Shamai (Shitz) is with the Technion Institute of Technology, Technion City, Haifa 32000, Israel. The work of G. Caire is supported by an Alexander von Humboldt Professorship. The work of S. Shamai has been supported by the European Union's Horizon 2020 Research And Innovation Programme, grant agreement no. 694630, and partly by the US-Israel, Binational Science Foundation (BSF). Emails. \{inaki.estella@huawei.com, abdellatif.zaidi@u-pem.fr, caire@tu-berlin.de, sshlomo@ee.technion.ac.il\}. }}

% make the title area
\maketitle
\begin{abstract}
We study the transmission over a network in which users send information to a remote destination through relay nodes that are connected to the destination via finite-capacity error-free links, i.e., a cloud radio access network. The relays are constrained to operate without knowledge of the users' codebooks, i.e., they perform oblivious processing. The destination, or central processor, however, is informed about the users' codebooks. We establish a single-letter characterization of the capacity region of this model for a class of discrete memoryless channels in which the outputs at the relay nodes are independent given the users' inputs. We show that both relaying \`a-la Cover-El Gamal, i.e.,  compress-and-forward with joint decompression and decoding, and ``noisy network coding'', are optimal. The proof of the converse part establishes, and utilizes,  connections with the Chief Executive Officer (CEO) source coding problem under logarithmic loss distortion measure.
Extensions to general discrete memoryless channels are also investigated. In this case, we establish inner and outer bounds on the capacity region. For memoryless Gaussian channels within the studied class of channels, we characterize the capacity region when the users are constrained to time-share among Gaussian codebooks. Furthermore, we also discuss the suboptimality of separate decompression-decoding and the role of time-sharing.
% Furthermore, we study the related distributed information bottleneck problem and characterize optimal tradeoffs between rates (i.e., complexity) and information (i.e., accuracy) in the vector Gaussian case.
\end{abstract}

% The paper headers
%\markboth{IEEE Transactions on Communications}%
%{Submitted paper}

% keywords
%\sqrt{•}
\IEEEpeerreviewmaketitle

%%%%%%%%%%%%%%%%%

\section{Introduction}

Cloud radio access networks (CRAN) provide a new architecture for next-generation wireless cellular systems in which base stations (BSs) are connected to a cloud-computing central processor (CP) via error-free finite-rate fronthaul links. This architecture is generally seen as an efficient means to increase spectral efficiency in cellular networks by enabling joint processing of the signals received by multiple BSs at the CP and, so, possibly alleviating the effect of interference. Other advantages include low cost deployment and flexible network utilization~\cite{Peng2015FHConstCRAN}.

In a CRAN network, each BS acts essentially as a relay node; and so it can in principle implement any relaying strategy, e.g., decode-and-forward~\cite[Theorem 1]{Cover:1979}, compress-and-forward~\cite[Theorem 6]{Cover:1979} or combinations of them. Relaying strategies in CRANs can be divided roughly into two classes: i) strategies that require the relay nodes to know the users' codebooks (i.e., modulation, coding),  such as decode-and-forward, compute-and-forward~\cite{Nazer:IT:2011, HongCaire:IT:2013,Nazer:ISIT:2009} or variants thereof, and ii) strategies in which the relay nodes operate without knowledge of the users' codebooks, often referred to as \textit{oblivious relay processing} (or \textit{nomadic transmission})~\cite{Sanderovich2008:IT:ComViaDesc, Simeone2011:IT:CodebookInfoOutBand,Sanderovich:2009:IT}. This second class is composed essentially of strategies in which the relays  implement forms of compress-and-forward~\cite{Cover:1979}, such as successive Wyner-Ziv compression~\cite{Hwan:2013:VT,ZhouYu:2013:JSAC, DBLP:journals/corr/ZhouX0C16} and quantize-map-and-forward~\cite{ADT11} or noisy-network coding \cite{Lim:IT:2011:NoisyNetwork}. Schemes that combine the two approaches have been shown to possibly outperform the best of the two~\cite{Estella2016CQCF}, especially in scenarios in which there are more users than relay nodes.

In essence, however, a CRAN architecture is usually envisioned as one in which BSs operate as simple radio units (RUs) that are constrained to implement only radio functionalities such as analog-to-digital conversion and filtering while the baseband functionalities are migrated to the CP. For this reason, while relaying schemes that involve partial or full decoding of the users' codewords can sometimes offer rate gains, they do not seem to be suitable in practice.  In fact, such schemes assume that all or a subset of the relay nodes are fully aware (at all times!) of the codebooks and encoding operations used by the users. For this reason, the signaling required to enable such awareness is generally prohibitive, particularly as the network size gets large. Instead, schemes in which relay nodes perform oblivious processing are preferred in practice. Oblivious processing was first introduced in~\cite{Sanderovich2008:IT:ComViaDesc}. The basic idea is that of using \textit{randomized encoding} to model lack of information about codebooks.  For related works, the reader may refer to~\cite{Simeone2011:IT:CodebookInfoOutBand, Dytso2014} and~\cite{Tian2011}. In particular,~\cite{Simeone2011:IT:CodebookInfoOutBand} extends the original definition of oblivious processing of \cite{Sanderovich2008:IT:ComViaDesc}, which rules out time-sharing, to include settings in which transmitters are allowed to switch among different codebooks, constrained relay nodes are unaware of the codebooks but are given, or can acquire, time- or frequency-schedule information\footnote{Typically, this information is small, e.g., 1 bit that captures on/off activity; and, so, obtaining it is generally much less demanding that obtaining full information about the users' codebooks.}. The framework is referred to therein as ``\textit{oblivious processing with enabled time-sharing}".

%----------------------------------------
\begin{figure}[t!]
\centering
\includegraphics[width=0.85\textwidth]{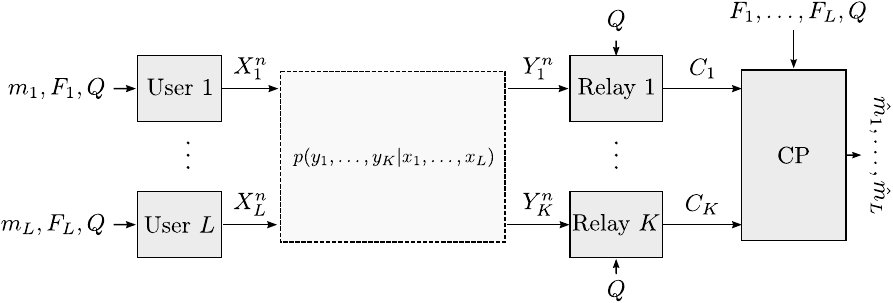}
\vspace{-5mm}
\caption{CRAN model with oblivious relaying and time-sharing.}
\label{fig:Schm}
\end{figure}
%----------------------------------------

In this work, we consider transmission over a CRAN in which the relay nodes are constrained to operate without knowledge of the users' codebooks, i.e., are oblivious, and only know time- or frequency-sharing information. The model is shown in Figure~\ref{fig:Schm}. Focusing on a class of discrete memoryless channels in which the relay outputs are independent conditionally on the users' inputs, we establish a single-letter characterization of the capacity region of this class of channels. We show that both relaying \`a-la Cover-El Gamal, i.e., compress-and-forward with joint decompression and decoding\cite{Sanderovich2008:IT:ComViaDesc,Park:2013:SPLett}, and noisy network coding \cite{Lim:IT:2011:NoisyNetwork} are optimal. For the proof of the converse part, we utilize useful connections with the Chief Executive Officer (CEO) source coding problem under logarithmic loss distortion measure~\cite{Courtade2014LogLoss}.  Extensions to general discrete memoryless channels are also investigated. In this case, we establish inner and outer bounds on the capacity region. For memoryless Gaussian channels within the studied class, we provide a full characterization of the capacity region under \textit{Gaussian signaling}, i.e., when the users' channel inputs are restricted to be Gaussian. In doing so, we also investigate the role of time-sharing.

\subsection*{Outline and Notation}

The rest of this paper is organized as follows. Section II provides a formal description of the model, as well as some definitions that are related to it. Section III contains the main result of this paper, which is a single-letter characterization of the capacity region of a class of discrete memoryless CRANs with oblivious processing at relays and enabled time-sharing in which the channel outputs at the relay nodes are independent conditionally on the users' channel inputs. This section also provides inner and outer bounds on the capacity region of general discrete memoryless CRANs with constrained relays, as well as some discussions on the suboptimality of successive decompression and decoding and the role of time-sharing. Finally, in Section IV, we study a memoryless vector Gaussian CRAN model with oblivious processing at relays and enabled time-sharing, for which we characterize the capacity region under Gaussian signaling.
% Finally, in Section V, we characterize the rate-information region of the vector Gaussian distributed information bottleneck problem.

Throughout this paper, we use the following notation. Upper case letters are used to denote random variables, e.g., $X$; lower case letters are used to denote realizations of random variables $x$; and calligraphic letters denote sets, e.g., $\mathcal{X}$.  The cardinality of a set $\mc X$ is denoted by $|\mc X|$. The length-$n$ sequence $(X_1,\ldots,X_n)$ is denoted as  $X^n$; and, for integers $j$ and $k$ such that $1 \leq k \leq j \leq n$, the sub-sequence $(X_k,X_{k+1},\ldots, X_j)$ is denoted as  $X_{k}^j$. Probability mass functions (pmfs), are denoted by $p_X(x)=\mathrm{Pr}\{X=x\}$; or for short, as $p(x)=\mathrm{Pr}\{X=x\}$. Boldface upper case letters denote vectors or matrices, e.g., $\dv X$, where context should make the distinction clear. For an integer $L \geq 1$, we denote the set of integers smaller or equal $L$ as $\mc L := \{ l \in \mathbb{N} :  1 \leq l \leq L\}$. Sometimes, this set will also be denoted as $[1\!:\!L]$. For a set of integers $\mc K \subseteq \mc L$, the notation $X_{\mc K}$ designates the set of random variables $\{X_k\}$ with indices $k$ in the set $\mc K$, i.e., $X_{\mc K}=\{X_k\}_{k \in \mc K}$.
 We denote the covariance of a zero mean vector $\mathbf{X}$ by $\mathbf{\Sigma}_{\mathbf{x}}:=\mathrm{E}[\mathbf{XX}^H]$;  $\mathbf{\Sigma}_{\mathbf{x},\mathbf{y}}$ is the cross-correlation  $\mathbf{\Sigma}_{\mathbf{x},\mathbf{y}}:= \mathrm{E}[\mathbf{XY}^H]$, and the conditional correlation matrix of $\mathbf{X}$ given $\mathbf{Y}$ as $\mathbf{\Sigma}_{\mathbf{x}|\mathbf{y}}:= \mathbf{\Sigma}_{\mathbf{x}}-\mathbf{\Sigma}_{\mathbf{x},\mathbf{y}}\mathbf{\Sigma}_{\mathbf{y}}^{-1}\mathbf{\Sigma}_{\mathbf{y},\mathbf{x}}$.

\vspace{-3mm}
\section{System Model}\label{sec:System}

Consider the discrete memoryless (DM) CRAN model shown in Figure~\ref{fig:Schm}. In this model, $L$ users communicate with a common destination or central processor (CP) through $K$ relay nodes, where $L \geq 1$ and $K \geq 1$. Relay node $k$, $ 1 \leq k \leq K$, is connected to the CP via an error-free finite-rate fronthaul link of capacity $C_k$. In what follows, we let $\mc L := [1\!:\!L]$ and $\mc K := [1\!:\!K]$ indicate the set of users and relays, respectively.

\noindent Similar to~\cite{Simeone2011:IT:CodebookInfoOutBand}, the relay nodes are constrained to operate without knowledge of the users' codebooks and only know a time-sharing sequence $Q^n$, i.e., a set of time instants at which users switch among different codebooks. The obliviousness of the relay nodes to the actual codebooks of the users is modeled via the notion of \textit{randomized encoding} \cite{Sanderovich2008:IT:ComViaDesc} (see also~\cite{Lapidoth:TIT:98:Reliable} for an earlier introduction of this notion in the context of coding for channels with unknown states). That is, users or transmitters select their codebooks at random and the relay nodes are \textit{not} informed about the currently selected  codebooks, while the CP is given such information. Specifically, in this setup, user $l$, $l \in \mc L$, sends codewords $X^n_l(F_l, M_l, Q^n)$ that depend not only on the message $M_l \in [1\!:\!2^{nR_l}]$ of rate $R_l$ that is to be transmitted to the CP by the user and the time-sharing sequence $Q^n$, but also on the index $F_l$ of the codebook selected by this user. This codebook index $F_l$ runs over all possible codebooks of the given rate $R_l$, i.e., $F_l \in [1\!:\!|\mc X_l|^{n2^{nR_l}}]$, and is unknown to the relay nodes. The CP, however, knows all indices of the currently selected codebooks by the users. Also, it is assumed that all terminals know the time-sharing sequence.

\vspace{-3mm}
\subsection{Formal Definitions}

The discrete memoryless CRAN model with oblivious relay processing and enabled time-sharing that we study in this paper is defined as follows.
\begin{enumerate}
\item \textit{Messages and Codebooks:}
Transmitter $l$, $l\in \mathcal{L}$, sends message $M_l\in [1\!:\!2^{nR_l}]$ to the CP using a codebook from a set of codebooks $\{\mathcal{C}_l(F_l)\}$ that is indexed by $F_l\in [1\!:\!|\mathcal{X}_l|^{n2^{nR_l}}]$. The index $F_l$ is picked at random and shared with the CP, but not the relays.
\item \textit{Time-sharing sequence:} All terminals, including the relay nodes, are aware of a time-sharing sequence $Q^n$, distributed as $p_{Q^n}(q^n)=\prod_{i=1}^n p_{Q}(q_i)$ for a pmf $p_{Q}(q)$.

\item  \textit{Encoding functions:} The encoding function at user $l$, $l \in \mc L$, is defined by a pair $(p_{X_l},\phi_l)$ where $p_{X_l}$ is a single-letter pmf and $\phi_l$ is a mapping $\phi_l:[1\!:\!|\mc X_l|^{n2^{nR_l}}]\times [1\!:\!2^{nR_l}]\times \mathcal{Q}^n\rightarrow \mathcal{X}_l^n$ that assigns the given codebook index $F_l$, message $M_l$ and time-sharing variable $Q^n$ to a channel input $X_{l}^n=\phi_l(F_l,M_l, Q^n)$. Conditioned on a time-sharing sequence $Q^n=q^n$,  the probability of selecting a codebook $F_l \in [1\!:\!|\mathcal{X}_l|^{n2^{nR_l}}]$ is given by
\begin{align}
p_{F_l|Q^n}(f_l|q^n) = \prod_{m_l\: \in \: [1\!:\!2^{nR_l}]} p_{X_l^n|Q^n}(\phi_l(f_l,m_l,q^n)|q^n),
\label{probability-distribution-codebook-index}
\vspace{-3mm}
\end{align}
where $p_{X_l^n|Q^n}(x_l^n|q^n) = \prod_{i=1}^n p_{X_l|Q}(x_{l,i}|q_i)$ for some given conditional pmf $p_{X_l|Q}(x_l|q)$.

\item \textit{Relaying functions:} The relay nodes receive the outputs of a memoryless interference channel defined by
\begin{equation}
p_{Y_{\mathcal{K}}^n|X_{\mathcal{L}}^n}(y_{\mathcal{K}}^n|x_{\mathcal{L}}^n)=\prod_{i=1}^np_{Y_{\mc K}|X_{\mathcal{L}}}(y_{\mc K,i}|x_{\mathcal{L},i}).
\label{conditional-distribution-general-DM-CRAN}
\end{equation}
Relay node $k$ , $k \in \mathcal{K}$, is unaware of the codebook indices $F_{\mathcal{L}}= (F_1,\ldots, F_L)$, and maps its received channel output $Y_k^n\in \mathcal{Y}_k^n$ into an index $J_k\in[1\!:\!2^{nC_k}]$ as $J_k=\phi_k^r(Y_k^n,Q^n)$. The index $J_k$ is then sent the to the CP over the error-free link of capacity~$C_k$.

\item \textit{Decoding function:} Upon receiving the indices $J_{\mathcal{K}} := (J_1\ldots,J_K)$, the CP estimates the users' messages $M_{\mathcal{L}}:=(M_1,\ldots,M_L)$ as
\vspace{-3mm}
\begin{align}
(\hat{M}_1,\ldots,\hat{M}_L)= g(F_1,\ldots,F_L,J_1,\ldots,J_K,Q^n),
\end{align}
\vspace{-5mm}
where
\vspace{-3mm}
\begin{align}
g:&[1\!:\!|\mathcal{X}_1|^{n2^{nR_1}}]\times\cdots\times [1\!:\! |\mathcal{X}_L|^{n2^{nR_L}}]\times [1\!:\!2^{nC_1}]\times \cdots \times [1\!:\! 2^{nC_K}]\times \mathcal{Q}^n\nonumber\\
&\rightarrow [1\!:\! 2^{nR_1}] \times \hdots \times [1\!:\! 2^{nR_L}]
\end{align}
 is the decoding function at the CP.
\end{enumerate}

\begin{definition}
A $(n,R_1,\ldots,R_L)$ code for the studied DM CRAN model with oblivious relay processing and enabled time-sharing consists of $L$ encoding functions
%\begin{align}
$\phi_l:[1\!:\!|\mc X_l|^{n2^{nR_l}}]\times [1\!:\!2^{nR_l}]\times \mathcal{Q}^n\rightarrow \mathcal{X}^n_l$,
%\end{align}
$K$ relaying functions
$\phi_k^r: \mathcal{Y}^n_k\times \mathcal{Q}^n\rightarrow [1\!:\!2^{nC_k}]$,
and a decoding function
%\begin{align}
$g:[1\!:\! |\mathcal{X}_1|^{n2^{nR_1}}]\times\cdots\times [1\!:\! |\mathcal{X}_L|^{n2^{nR_L}}]\times [1\!:\! 2^{nC_1}]\times\cdots\times [1\!:\! 2^{nC_K}] \times \mathcal{Q}^n \rightarrow [1\!:\! 2^{nR_1}] \times \hdots \times [1\!:\! 2^{nR_L}]$.
%\end{align}
\end{definition}

\begin{definition}
A rate tuple $(R_{1},\ldots,R_{L})$ is said to be achievable if, for any $\epsilon>0$, there exists a sequence of  $(n,R_1,\ldots,R_L)$ codes such that
\vspace{-2mm}
\begin{align}
 \mathrm{Pr}\{(M_1,\hdots, M_L)\neq (\hat{M}_1,\hdots, \hat{M}_L)\}\leq \epsilon,
 \end{align}
where the probability is taken with respect to a uniform distribution of messages $M_l \in [1\!:\!2^{nR_l}]$, $l=1,\hdots,L$, and with respect to independent indices $F_l$, $l=1,\hdots,L$, whose joint distribution, conditioned on the time-sharing sequence, is given by the product of~\eqref{probability-distribution-codebook-index}.

\noindent For given individual fronthaul constraints $C_{\mathcal{K}} :=(C_1,\hdots,C_K)$, the capacity region $\mathcal{C}(C_{\mathcal{K}})$ is the closure of all achievable rate tuples $(R_{1},\ldots,R_{L})$.
\end{definition}

\vspace{-3mm}
In this work, we are interested in characterizing the capacity region $\mathcal{C}(C_{\mathcal{K}})$.

\subsection{Some Useful Implications}

As shown in~\cite{Simeone2011:IT:CodebookInfoOutBand}, the above constraint of oblivious relay processing with enabled time-sharing means that, in the absence of information regarding the indices $F_{\mathcal{L}}$ and the messages $M_{\mathcal{L}}$, a codeword $x_l^n(f_l,m_l,q^n)$ taken from a $(n,R_l)$ codebook has independent but non-identically distributed entries.
 \vspace{0.2cm}
\begin{lemma}\label{lem:IIDinput}
Without the knowledge of the selected codebooks indices $(F_1,\ldots,F_L)$, the distribution of the transmitted codewords conditioned on the time-sharing sequence are given by
\begin{align}
\mathrm{Pr}\{X_l^n(F_l,W_l,Q^n)=x_l^n|Q^n=q^n\}=\prod_{i=1}^np_{X_l|Q}(x_{l,i}|q_i).
\label{eq:ProdX}
\end{align}
Thus, the channel output  $Y_k^n$ at relay $k\in \mc K$ is distributed as
\begin{align}
&p_{Y_k^n|Q^n}(y^n_k|q^n)\nonumber\!
=\!\prod_{i=1}^n\sum_{x_1,\ldots, x_L}\!\!\!p_{Y_k|X_{\mathcal{L}}}({y}_{k,i}|x_{\mathcal{L},i})\prod_{i=1}^Lp_{X_l|Q}(x_{l,i}|q_i).\nonumber
\end{align}
\end{lemma}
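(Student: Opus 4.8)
The plan is to compute the left-hand side of \eqref{eq:ProdX} directly from the randomized-encoding law \eqref{probability-distribution-codebook-index}, exploiting the fact that a codebook index $F_l$ is nothing but an ordered list of $2^{nR_l}$ length-$n$ codewords — one per message — each drawn independently according to $p_{X_l^n|Q^n}(\cdot|q^n)=\prod_{i=1}^n p_{X_l|Q}(\cdot|q_i)$. Concretely, I would fix a time-sharing realization $Q^n=q^n$ and a message value $M_l=m_l$, identify the index $f_l\in[1\!:\!|\mc X_l|^{n2^{nR_l}}]$ with the tuple $\big(\phi_l(f_l,m_l',q^n)\big)_{m_l'\in[1\!:\!2^{nR_l}]}\in(\mc X_l^n)^{2^{nR_l}}$, and write
\begin{align}
\mathrm{Pr}\{X_l^n(F_l,m_l,q^n)=x_l^n\,|\,Q^n=q^n\}=\sum_{f_l} p_{F_l|Q^n}(f_l|q^n)\,\mathds{1}\{\phi_l(f_l,m_l,q^n)=x_l^n\}.\nonumber
\end{align}

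Substituting \eqref{probability-distribution-codebook-index}, the sum over $f_l$ factorizes over the per-message codeword choices: the factor carrying the codeword of message $m_l$ is pinned to $x_l^n$ and contributes $p_{X_l^n|Q^n}(x_l^n|q^n)$, whereas every factor carrying the codeword of a message $m_l'\neq m_l$ is an unconstrained sum $\sum_{x_l'^n\in\mc X_l^n}p_{X_l^n|Q^n}(x_l'^n|q^n)=1$. Hence the whole expression collapses to $p_{X_l^n|Q^n}(x_l^n|q^n)=\prod_{i=1}^n p_{X_l|Q}(x_{l,i}|q_i)$, which is independent of $m_l$; averaging over the uniform message $M_l$ (independent of $F_l$ given $Q^n$) therefore leaves it unchanged, which is \eqref{eq:ProdX}.

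For the distribution of the relay output $Y_k^n$, I would first note that, conditioned on $Q^n=q^n$, the pairs $\{(F_l,M_l)\}_{l\in\mc L}$ are mutually independent — the $F_l$ are independent by definition of achievability and the $M_l$ are independent and uniform — so the codewords $X_l^n=\phi_l(F_l,M_l,q^n)$, $l\in\mc L$, are mutually independent given $q^n$, each with the product law just derived. Then, using the marginal channel $p_{Y_k^n|X_{\mc L}^n}(y_k^n|x_{\mc L}^n)=\prod_{i=1}^n p_{Y_k|X_{\mc L}}(y_{k,i}|x_{\mc L,i})$ obtained from \eqref{conditional-distribution-general-DM-CRAN}, and plugging in the product input laws, the sum over $x_{\mc L}^n$ decouples across the time index $i$, which yields the claimed single-letter product form for $p_{Y_k^n|Q^n}$.

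The argument is essentially bookkeeping; the one step I would flag as the crux is the factorization of $\sum_{f_l}$ over per-message codewords, i.e., making the identification $F_l\leftrightarrow(\mc X_l^n)^{2^{nR_l}}$ precise and checking that it is compatible with the encoding map $\phi_l(\cdot,m_l,q^n)$ and with \eqref{probability-distribution-codebook-index} being a valid pmf. Once this is set up cleanly, the ``marginalize one codeword, the remaining ones sum to one'' collapse is immediate, and the result mirrors the corresponding observation in \cite{Simeone2011:IT:CodebookInfoOutBand}.
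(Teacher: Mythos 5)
Your proof is correct, and it is essentially the argument the paper invokes by reference: the paper omits the proof of Lemma~\ref{lem:IIDinput}, stating it follows the proof of \cite[Lemma 1]{Sanderovich2008:IT:ComViaDesc}, which is precisely your marginalization over the random codebook index (identifying $F_l$ with an ordered list of $2^{nR_l}$ independently drawn codewords, pinning the codeword of the transmitted message and summing the others to one), followed by independence across users and the memoryless channel law. No gap to report; your flagged ``crux'' about the bijection between $[1\!:\!|\mc X_l|^{n2^{nR_l}}]$ and $(\mc X_l^n)^{2^{nR_l}}$ is exactly the implicit convention of the randomized-encoding model and of \eqref{probability-distribution-codebook-index} being a pmf.
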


\begin{proof}
The proof of this lemma, whose result was also used in~\cite{Simeone2011:IT:CodebookInfoOutBand}, is along the lines of that of~\cite[Lemma 1]{Sanderovich2008:IT:ComViaDesc} and is therefore omitted for brevity.
\end{proof}

\begin{remark}
Equation~\eqref{eq:ProdX} states that, when averaged over the probability of selecting a codebook $F_l$ and over the uniform distribution of the message set, but conditioned on the time-sharing variable $Q^n$, the transmitted codeword $X_l^n$ has a pmf according to a product distribution $p_{X_l|Q}$ of independent but non-identically distributed entries. That is, in the absence of codebook information, the codewords lack structure. When a node is informed of the codebook index $F_l=f_l$, the codebook structure is provided by the selected codebook.
\end{remark}

\vspace{-3mm}
%----------------------------------------
\section{Discrete Memoryless Model}\label{sec:Main}

\subsection{Capacity Region of a Class of CRANs}\label{ssec:MainClass}

In this section, we establish a single-letter characterization of the capacity region of a class of discrete memoryless CRANs with oblivious relay processing and enabled time-sharing in which the channel outputs at the relay nodes are independent conditionally on the users' inputs. Specifically, consider the following class of DM CRANs in which equation~\eqref{conditional-distribution-general-DM-CRAN} factorizes as
\begin{equation}
p_{Y_{\mathcal{K}}^n|X_{\mathcal{L}}^n}(y_{\mathcal{K}}^n|x_{\mathcal{L}}^n)=\prod_{i=1}^n\prod_{k=1}^Kp_{Y_{k}|X_{\mathcal{L}}}(y_{k,i}|x_{\mathcal{L},i}).
\label{conditional-distribution-class-DM-CRAN}
\end{equation}
Equation~\eqref{conditional-distribution-class-DM-CRAN} is equivalent to that, for all $k \in \mc K$ and all $i \in [1\!:\!n]$,
\begin{align}
Y_{k,i} \mkv X_{\mathcal{L},i} \mkv Y_{\mathcal{K}/k,i}
\label{eq:MKChain_pmf}
\end{align}
forms a Markov chain. The following theorem provides the capacity region of this class of channels.

\vspace{0.1cm}
\begin{theorem}\label{th:MK_C_Main}
For the class of DM CRANs with oblivious relay processing and enabled time-sharing for which~\eqref{eq:MKChain_pmf} holds, the capacity region $\mc C(C_{\mc K})$ is given by the union of all rate tuples $(R_1,\ldots, R_L)$ which satisfy
\begin{align}
\sum_{t\in \mathcal{T}}R_t\leq& \sum_{s\in \mathcal{S}} [C_s-I(Y_{s};U_{s}|X_{\mathcal{L}},Q)]
+ I(X_{\mathcal{T}};U_{\mathcal{S}^c}|X_{\mathcal{T}^c},Q),\label{eq:MK_C_Main}
\end{align}
for all non-empty subsets $\mathcal{T} \subseteq \mathcal{L}$ and all $\mathcal{S} \subseteq \mathcal{K}$, for some joint measure of the form
\begin{align}\label{eq:PMF_Cap}
p(q)\prod_{l=1}^L p(x_l|q)\prod_{k=1}^Kp(y_k|x_{\mathcal{L}})\prod_{k=1}^{K}p(u_k|y_k,q).
\end{align}
\end{theorem}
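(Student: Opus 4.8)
The plan is to prove achievability and converse separately, with the converse being the heart of the argument via its connection to the CEO problem under logarithmic loss. For achievability, I would use a compress-and-forward scheme à la Cover–El Gamal with \emph{joint decompression and decoding} at the CP (noisy network coding gives the same region). Each relay $k$, knowing only $Q^n$, generates a quantization codebook $\{U_k^n(j_k)\}$ of size $2^{n\tilde{C}_k}$ drawn i.i.d.\ according to $\prod_i p_{U_k|Q}(u_{k,i}|q_i)$, and sends the bin index of a $U_k^n$ jointly typical with $(Y_k^n,Q^n)$; by the covering lemma this requires $\tilde{C}_k \geq I(Y_k;U_k|Q)$, and binning down to rate $C_k$ is possible as long as $\tilde{C}_k - C_k \leq I(Y_k;U_k|X_{\mathcal{L}},Q)$ so that the CP can resolve the bin using its side information (this is where the Markov structure~\eqref{eq:MKChain_pmf} makes the $U_k$'s conditionally independent given $X_{\mathcal{L}},Q$, so the descriptions decouple cleanly). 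The CP then looks for a unique jointly typical tuple $(x_{\mathcal{L}}^n(F_{\mathcal{L}},m_{\mathcal{L}},Q^n), u_{\mathcal{K}}^n, Q^n)$; a standard error-event analysis over all subsets $\mathcal{T}$ of wrongly-decoded users and all subsets $\mathcal{S}$ of relays whose bin indices are resolved incorrectly yields exactly the constraints~\eqref{eq:MK_C_Main}. One must be careful that the codewords $X_l^n$ have the product (non-i.i.d., since conditioned on $Q^n$) distribution guaranteed by Lemma~\ref{lem:IIDinput}, which is precisely what legitimizes treating $X_l^n$ as a random i.i.d.-$p_{X_l|Q}$ sequence in the typicality arguments despite obliviousness.

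For the converse, I would start from Fano's inequality: for any achievable $(R_1,\dots,R_L)$ and any non-empty $\mathcal{T}\subseteq\mathcal{L}$, $n\sum_{t\in\mathcal{T}}R_t \leq I(M_{\mathcal{T}};J_{\mathcal{K}},F_{\mathcal{L}},Q^n \mid M_{\mathcal{T}^c}) + n\epsilon_n$. Using independence of messages and codebook indices and the fact that $J_{\mathcal{K}}$ is a function of $(Y_{\mathcal{K}}^n,Q^n)$, this reduces to bounding $I(X_{\mathcal{T}}^n;J_{\mathcal{K}}\mid X_{\mathcal{T}^c}^n,F_{\mathcal{L}},Q^n)$ type terms; the obliviousness constraint~\eqref{probability-distribution-codebook-index} and Lemma~\ref{lem:IIDinput} let me strip the dependence on $F_{\mathcal{L}}$ appropriately. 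The key move is to split, for each subset $\mathcal{S}\subseteq\mathcal{K}$, the mutual information into a ``rate'' part controlled by $\sum_{s\in\mathcal{S}}C_s$ (since $H(J_s)\leq nC_s$) and a residual part, and then to \emph{identify the auxiliary random variables}: I would set $U_{k,i} := (J_k, Y_k^{i-1}, Q^{n\setminus i})$ or a similar past/future-augmented choice, verify the Markov chain $U_{k,i} \markov (Y_{k,i},Q_i) \markov (\text{everything else})$ required by~\eqref{eq:PMF_Cap}, and invoke the single-letterization. The CEO connection enters precisely here: the term $I(Y_s; U_s \mid X_{\mathcal{L}},Q)$ plays the role of the rate needed to convey the relay's ``quantization'' of its observation given the sources, and the converse for the CEO problem under logarithmic loss of Courtade–Weissman~\cite{Courtade2014LogLoss} supplies the tight bound that closes the gap — in particular the logarithmic-loss distortion $\mathbb{E}[\log 1/\hat{P}(X_{\mathcal{L},i})]$ equals a conditional entropy, which is exactly what makes $I(X_{\mathcal{T}};U_{\mathcal{S}^c}|X_{\mathcal{T}^c},Q)$ appear with the right sign.

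The main obstacle I anticipate is the converse single-letterization step: correctly choosing the auxiliaries $U_k$ so that (i) the Markov chain $U_k \markov (Y_k,Q) \markov (X_{\mathcal{L}}, Y_{\mathcal{K}\setminus k}, U_{\mathcal{K}\setminus k})$ holds — which is where conditional independence of the relay outputs~\eqref{eq:MKChain_pmf} is indispensable — and (ii) the bound telescopes into the single-letter form~\eqref{eq:MK_C_Main} for \emph{every} pair $(\mathcal{T},\mathcal{S})$ simultaneously, with the $-I(Y_s;U_s|X_{\mathcal{L}},Q)$ terms and the $+I(X_{\mathcal{T}};U_{\mathcal{S}^c}|X_{\mathcal{T}^c},Q)$ term emerging with exactly matching constants. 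The logarithmic-loss CEO converse is the tool that handles the entropy terms $H(Y_s^n|\cdots)$ that would otherwise not single-letterize, so the bookkeeping of which conditioning sets get absorbed into the CEO bound versus handled by chain-rule expansion is the delicate part. A secondary technical point is handling time-sharing: $Q$ must be kept as a conditioning variable throughout (not averaged out prematurely), and one checks that the standard time-sharing random variable introduced in the single-letterization can be merged with $Q$ without enlarging the region, which is consistent with the ``oblivious processing with enabled time-sharing'' framework of~\cite{Simeone2011:IT:CodebookInfoOutBand}.
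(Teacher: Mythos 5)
Your proposal follows essentially the same route as the paper: achievability via Cover--El Gamal compress-and-forward with Wyner--Ziv binning and \emph{joint} decompression--decoding (the paper obtains it by specializing Theorem~\ref{th:NNC_all_MK_inner} and using the Markov chain~\eqref{eq:MKChain_pmf} to decouple $I(Y_{\mc S};U_{\mc S}|X_{\mc L},U_{\mc S^c},Q)$ into $\sum_{s\in\mc S}I(Y_s;U_s|X_{\mc L},Q)$), and a converse that combines Fano's inequality, Lemma~\ref{lem:IIDinput}, the auxiliary identification $U_{k,i}=(J_k,Y_k^{i-1})$ with $\bar Q_i=(X_{\mc L}^{i-1},X_{\mc L,i+1}^n,Q^n)$, the CEO/log-loss-style conditional-entropy bound (the paper's step~\eqref{eq:MK_C_BoundGamma}), and merging of $\bar Q_{Q'}$ with the standard time-sharing variable. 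The only loose point is your per-relay bin-resolvability condition in the achievability sketch, which is not needed under joint decoding; the subset-wise error-event analysis you also describe is what actually yields~\eqref{eq:MK_C_Main}.
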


\vspace{0.1cm}

\begin{proof}
The proof of Theorem~\ref{th:MK_C_Main} appears in Appendix~\ref{app:ConverseClass}.
\end{proof}

\begin{remark}
Our main contribution in Theorem~\ref{th:MK_C_Main} is the proof of the converse part. As mentioned in Appendix~\ref{app:ConverseClass}, the direct part of Theorem~\ref{th:MK_C_Main} can be obtained by a coding scheme in which each relay node compresses its channel output by using Wyner-Ziv binning \cite{Wyner1978} to exploit the correlation with the channel outputs at the other relays, and forwards the bin index to the CP over its rate-limited link. The CP jointly decodes the compression indices (within the corresponding bins) and the transmitted messages, i.e., Cover-El Gamal compress-and-forward~\cite[Theorem 3]{Cover:1979} with joint decompression and decoding (CF-JD)\footnote{The rate region achievable by this scheme for a general DM CRAN, i.e., without the Markov chain~\eqref{eq:MKChain_pmf}, is given by Theorem~\ref{th:NNC_all_MK_inner}.}.
Alternatively, the rate region of Theorem~\ref{th:MK_C_Main} can also be obtained by a direct application of the noisy network coding (NNC) scheme of~\cite[Theorem 1]{Lim:IT:2011:NoisyNetwork}. Observe that the fact that the two operations of decompression and decoding are performed jointly in the scheme CF-JD is critical to achieve the full rate-region of Theorem~\ref{th:MK_C_Main}, in the sense that if the CP first jointly decodes the compression indices and then jointly decodes the users' messages, i.e., the two operations are performed successively, this results in a region that is generally strictly suboptimal. Similar observations can be found \mbox{in~\cite{Sanderovich2008:IT:ComViaDesc}, \cite{DBLP:journals/corr/ZhouX0C16} and \cite{Park:2013:SPLett}.}
\end{remark}

\begin{remark}
Key element to the proof of the converse part of Theorem~\ref{th:MK_C_Main} is the connection with the Chief Executive Officer (CEO) source coding problem\footnote{Because the relay nodes are connected to the CP through error-free finite-rate links, the scenario, as seen by the relay nodes, is similar to one in which a remote vector source $(X^n_1,\hdots,X^n_L)$ needs to be compressed distributively and conveyed to a single decoder. There are important differences, however, as the vector source is not i.i.d. here but given by a codebook that is subject to design.}. For the case of $K \geq 2$ encoders, while the characterization of the optimal rate-distortion region of this problem for general distortion measures has eluded the information theory for now more than four decades, a characterization of the optimal region in the case of logarithmic loss distortion measure has been provided recently in~\cite{ Courtade2014LogLoss}. A key step in  \cite{ Courtade2014LogLoss} is that the log-loss distortion measure admits a lower bound in the form of the entropy of the source conditioned on the decoders input. Leveraging on this result, in our converse proof of Theorem~\ref{th:MK_C_Main} we derive a single letter  upper-bound on the entropy of the channel inputs conditioned on the indices $J_{\mc K}$ that are sent by the relays, in the absence of knowledge of the codebooks indices $F_{\mc L}$. (Cf. the step~\eqref{eq:MK_C_BoundGamma} in Appendix~\ref{app:ConverseClass}).
%The connection with the CEO problem is discussed further in Section~\ref{sec:LogLoss}.
\end{remark}
\begin{remark}
In the special case in which $K=L$ and the memoryless channel~\eqref{conditional-distribution-class-DM-CRAN} is such that $Y_k=X_k$ for $k\in \mc K$, the source coding counter-part of the problem treated in this section reduces to a distributed source coding setting with \textit{independent} sources (recall that the users input symbols are independent here) under logarithmic loss distortion measure. Note that, for $K>2$ and general, i.e., arbitrarily correlated, sources, the problem appears to be of remarkable complexity, and is still to be solved. In fact, the Berger-Tung coding scheme~\cite{Berger1978} can be suboptimal in this case, as is known to be so for Korner-Marton's modulo-two adder problem~\cite{Korner:IT:1979HowToEncode}.
\end{remark}

\subsection{Inner and Outer Bounds for the General DM CRAN Model}\label{sec:GenOCRAN}

In this section, we study the general DM CRAN model~\eqref{conditional-distribution-general-DM-CRAN}. That is, the Markov chains given by~\eqref{eq:MKChain_pmf} are not necessarily assumed to hold. In this case, we establish inner and outer bounds on the capacity region that do not coincide in general. The bounds extend those of~\cite{Sanderovich2008:IT:ComViaDesc}, which are established therein for a setup with a single transmitter and no time-sharing, to the case of multiple transmitters and enabled time-sharing.

\noindent The following theorem provides an inner bound on the capacity region of the general DM CRAN model~\eqref{conditional-distribution-general-DM-CRAN} with oblivious relay processing and time-sharing.
\begin{theorem}\label{th:NNC_all_MK_inner}
For the general DM CRAN model~\eqref{conditional-distribution-general-DM-CRAN} with oblivious relay processing and enabled time-sharing, the achievable rate
region $\mc{R}_{\mathrm{CF-JD}}$ of the scheme CF-JD is given by the union of all rate tuples $(R_1,\ldots,R_L)$ that satisfy, for all non-empty subsets $\mathcal{T} \subseteq \mathcal{L}$ and all $ \mathcal{S} \subseteq \mathcal{K}$,
\begin{align}\label{eq:NNC_all_MK_inner}
\sum_{t\in \mathcal{T}}R_t\leq& \sum_{s\in \mathcal{S}} C_s-I(Y_{S};U_{\mathcal{S}}|X_{\mathcal{L}},U_{\mathcal{S}^c},Q)
+ I(X_{\mathcal{T}};U_{\mathcal{S}^c}|X_{\mathcal{T}^c},Q),
\end{align}
for some joint measure of the form
\begin{align}
p(q)\prod_{l=1}^{L}p(x_l|q) p(y_{\mathcal{K}}|x_{\mathcal{L}})\prod_{k=1}^{K}p(u_k|y_k,q).
\end{align}
\end{theorem}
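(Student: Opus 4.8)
The plan is to establish achievability of $\mc R_{\mathrm{CF\text{-}JD}}$ by a compress-and-forward scheme with Wyner--Ziv binning at the relays and joint decompression-and-decoding at the CP, taking care that the relays operate obliviously to the users' codebooks while being aware of the time-sharing sequence $Q^n$. First I would fix the input measure $p(q)\prod_l p(x_l|q)\,p(y_{\mc K}|x_{\mc L})\prod_k p(u_k|y_k,q)$ and generate, for each user $l$, a random codebook $\mc C_l(F_l)$ of $2^{nR_l}$ codewords $x_l^n(m_l)$, each drawn i.i.d.\ according to $\prod_i p_{X_l|Q}(x_{l,i}|q_i)$ conditioned on the common time-sharing sequence $q^n$; the index $F_l$ of this codebook is revealed to the CP but not to the relays, exactly as in the model of Section~\ref{sec:System}. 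At relay $k$, independently of any user codebook, I would generate $2^{n\tilde R_k}$ compression codewords $u_k^n(j_k)$ i.i.d.\ $\sim\prod_i p_{U_k|Q}(u_{k,i}|q_i)$ and partition them into $2^{nC_k}$ bins; upon observing $y_k^n$ the relay finds a jointly typical $u_k^n(j_k)$ and forwards the bin index. Standard covering-lemma arguments give that a suitable $j_k$ exists with high probability provided
\begin{align}
\tilde R_k \geq I(Y_k;U_k|Q),\qquad k\in\mc K.\label{eq:plan-cover}
\end{align}

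Next I would analyze decoding at the CP. Since the CP knows $F_{\mc L}$, it possesses the structured user codebooks; it performs joint typicality decoding over the tuple $(m_{\mc L},j_{\mc K})$, searching for a unique message tuple together with compression indices consistent with the received bin indices $J_{\mc K}$. The error-probability analysis splits according to which users are decoded incorrectly (indexed by $\mc T\subseteq\mc L$) and which relay descriptions are in error (indexed by $\mc S\subseteq\mc K$): for a fixed error event of this type the wrong codewords $\{x_t^n\}_{t\in\mc T}$ and the wrong descriptions $\{u_s^n\}_{s\in\mc S}$ are independent of the true remaining variables $(X_{\mc T^c}, U_{\mc S^c})$ given $Q$. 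Counting the number of such candidate tuples — $2^{n\sum_{t\in\mc T}R_t}$ messages times $2^{n\sum_{s\in\mc S}(\tilde R_s-C_s)}$ descriptions within the signalled bins — and bounding the probability that any one of them is jointly typical by the usual product-of-typical-set estimate yields the error-exponent condition
\begin{align}
\sum_{t\in\mc T}R_t + \sum_{s\in\mc S}(\tilde R_s - C_s)
\;<\; I(X_{\mc T},U_{\mc S};\,U_{\mc S^c},X_{\mc T^c},Y_{\mc S}\mid Q)\;-\;(\text{terms already present on the truth side}),
\end{align}
which after substituting \eqref{eq:plan-cover} and simplifying the mutual-information expressions (using the Markov structure $U_k - (Y_k,Q) - (\text{rest})$ and $X_l - Q - X_{l'}$) collapses to
\begin{align}
\sum_{t\in\mc T}R_t \;\leq\; \sum_{s\in\mc S}C_s \;-\; I(Y_{\mc S};U_{\mc S}\mid X_{\mc L},U_{\mc S^c},Q)\;+\;I(X_{\mc T};U_{\mc S^c}\mid X_{\mc T^c},Q).
\end{align}
Taking $\tilde R_k$ at its minimum value $I(Y_k;U_k|Q)$ is what makes the binning feasible while keeping the rate penalty $\tilde R_s - C_s$ as small as possible; one checks that no tighter constraint arises from the covering step once these are combined. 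Finally, averaging over the random codebook ensemble shows the existence of a deterministic code (with the prescribed oblivious structure at the relays) achieving any rate tuple in the interior of $\mc R_{\mathrm{CF\text{-}JD}}$, and closure gives the full region.

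The main obstacle, and the step deserving the most care, is the bookkeeping in the joint error-probability analysis: because decompression and decoding are performed jointly, the error events are indexed by the pair $(\mc T,\mc S)$ and the corresponding mutual-information quantity must be expanded correctly — in particular isolating the ``new randomness'' $(X_{\mc T},U_{\mc S})$ from the ``old'' $(X_{\mc T^c},U_{\mc S^c},Y_{\mc S})$ and verifying that the relevant independence relations hold conditioned on $Q$. This is where the oblivious-encoding structure matters: by Lemma~\ref{lem:IIDinput} the user codewords, absent codebook knowledge, have the product distribution $\prod_i p_{X_l|Q}(\cdot|q_i)$, which is exactly what is needed for the relays' typicality tests and for the independence of the wrong user codewords from the relay descriptions; one must check that introducing the codebook indices $F_{\mc L}$ at the CP does not disturb these estimates, only supplies the extra structure the CP exploits. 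Once the $(\mc T,\mc S)$-indexed exponents are written out, the algebraic simplification to \eqref{eq:NNC_all_MK_inner} is routine, and I would not grind through it in detail.
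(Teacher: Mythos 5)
Your proposal is correct and follows essentially the same route as the paper's proof in Appendix~\ref{app:NNC_all_MK_inner}: Wyner--Ziv-style compression codebooks with random binning at the oblivious relays, covering constraint $\hat R_k \geq I(Y_k;U_k|Q)$, joint typicality decoding of $(m_{\mc L}, i_{\mc K})$ at the CP, and an error analysis indexed by the pair $(\mc T,\mc S)$ whose exponent is then simplified via the Markov chains $U_k \mkv (Y_k,Q) \mkv (X_{\mc L},U_{\mc K\setminus k})$ and the conditional independence of the inputs to yield~\eqref{eq:NNC_all_MK_inner}. The algebra you defer is exactly the chain carried out in the paper, so no gap remains.
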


\begin{proof}
The proof of Theorem~\ref{th:NNC_all_MK_inner} appears in Appendix~\ref{app:NNC_all_MK_inner}.
\end{proof}

\begin{remark}
The coding scheme that we employ for the proof of Theorem~\ref{th:NNC_all_MK_inner}, which we denote by
compress-and-forward with joint decompression and decoding (CF-JD), is one in which every relay node compresses its output \`a-la Cover-El Gamal compress-and-forward~\cite[Theorem 3]{Cover:1979}. The CP jointly decodes the compression indices and users' messages. The scheme, as detailed in Appendix~\ref{app:NNC_all_MK_inner}, generalizes~\cite[Theorem 3]{Sanderovich2008:IT:ComViaDesc} to the case of multiple users and enabled time-sharing.
\end{remark}

\noindent We now provide an outer bound on the capacity region of the general DM CRAN model with oblivious relay processing and time-sharing. The following theorem states the result.

\begin{theorem}\label{th:NNC_all_MK_outer}
For the general DM CRAN model~\eqref{conditional-distribution-general-DM-CRAN} with oblivious relay processing and enabled time-sharing, if a rate tuple $(R_1,\ldots,R_L)$ is achievable then for all non-empty subsets $\mathcal{T} \subseteq \mathcal{L}$ and $ \mathcal{S} \subseteq \mathcal{K}$ it holds that
\vspace{-2mm}
\begin{align}\label{eq:NNC_all_MK_outer}
\sum_{t\in \mathcal{T}}R_t\leq &\sum_{s\in \mathcal{S}} C_s-I(Y_{S};U_{\mathcal{S}}|X_{\mathcal{L}},U_{\mathcal{S}^c},Q)
 + I(X_{\mathcal{T}};U_{\mathcal{S}^c}|X_{\mathcal{T}^c},Q),
\end{align}
for some $(Q,X_{\mathcal{L}},Y_{\mathcal{K}},U_{\mathcal{K}},W)$ distributed according to
\begin{align}
p(q)\prod_{l=1}^{L}p(x_l|q) ~ p(y_{\mathcal{K}}|x_{\mathcal{L}})~p(w|q),
\end{align}
where $u_{k} = f_k(w,y_k,q)$ for $k\in \mc K$; for some random variable $W$ and deterministic functions $\{f_{k}\}$, for $k\in \mc K$.
\end{theorem}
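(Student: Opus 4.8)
The plan is to follow the standard converse recipe — Fano, single-letterization via a time-sharing index — but with two features dictated by the obliviousness constraint: (i) the auxiliary random variable $W$ will be built out of the codebook indices $F_{\mc L}$, which are the object the relays are oblivious to, and (ii) the compression auxiliaries $U_k$ must be \emph{deterministic} functions of $(W, Y_k, Q)$, reflecting that relay $k$ maps $Y_k^n$ (together with $Q^n$, and in the single-letterization with $W$) into $J_k$. Assume a sequence of codes with vanishing error probability achieving $(R_1,\ldots,R_L)$. Fix a nonempty $\mc T\subseteq\mc L$ and $\mc S\subseteq\mc K$. By Fano's inequality, $\sum_{t\in\mc T} nR_t \le I(M_{\mc T}; \hat M_{\mc T}) + n\epsilon_n \le I(M_{\mc T}; J_{\mc K}, F_{\mc L}, Q^n \mid M_{\mc T^c}) + n\epsilon_n$, where I have used that $M_{\mc T}$ is independent of $(M_{\mc T^c}, F_{\mc L}, Q^n)$ and that $\hat M_{\mc T}$ is a function of $(F_{\mc L}, J_{\mc K}, Q^n)$. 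Conditioning further on $M_{\mc T^c}$, and noting $X_{\mc T^c}^n$ is a function of $(F_{\mc T^c}, M_{\mc T^c}, Q^n)$, the plan is to expand this mutual information into two pieces matching the two terms in~\eqref{eq:NNC_all_MK_outer}: a "fronthaul" piece $\sum_{s\in\mc S} H(J_s)\le \sum_{s\in\mc S} nC_s$ minus a conditional-entropy correction $H(J_{\mc S}\mid X_{\mc L}^n, J_{\mc S^c}, \ldots)$, and a "channel" piece $I(X_{\mc T}^n; J_{\mc S^c}\mid X_{\mc T^c}^n, \ldots)$.

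Concretely, I would write
\begin{align}
I(M_{\mc T}; J_{\mc K}, F_{\mc L}, Q^n \mid M_{\mc T^c})
&= I(M_{\mc T}; F_{\mc L}, Q^n\mid M_{\mc T^c}) + I(M_{\mc T}; J_{\mc K}\mid F_{\mc L}, M_{\mc T^c}, Q^n)\nonumber\\
&\le I(M_{\mc T}; J_{\mc K}\mid F_{\mc L}, M_{\mc T^c}, Q^n),\nonumber
\end{align}
since $M_{\mc T}\perp (F_{\mc L}, M_{\mc T^c}, Q^n)$. Then split $J_{\mc K}=(J_{\mc S}, J_{\mc S^c})$ via the chain rule and bound $I(M_{\mc T}; J_{\mc S}\mid F_{\mc L}, M_{\mc T^c}, J_{\mc S^c}, Q^n)\le H(J_{\mc S}\mid F_{\mc L}, M_{\mc T^c}, J_{\mc S^c}, Q^n)$. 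The key bookkeeping step is to add and subtract to turn $H(J_{\mc S}\mid \cdot)$ into $\sum_{s\in\mc S}H(J_s) - \big[\sum_{s\in\mc S}H(J_s) - H(J_{\mc S}\mid\cdot)\big]$; the first term is at most $\sum_{s\in\mc S}nC_s$, and the negative term should be reorganized — using that conditioning on $(F_{\mc L},M_{\mc L})$ fixes $X_{\mc L}^n$, so one can enlarge the conditioning to include $X_{\mc L}^n$ — into a sum of per-symbol conditional mutual informations $I(Y_{s,i};U_{s,i}\mid X_{\mc L,i}, U_{\mc S^c,i}, Q_i)$. The remaining piece $I(M_{\mc T}; J_{\mc S^c}\mid F_{\mc L}, M_{\mc T^c}, Q^n)$ becomes, after identifying $X_{\mc T}^n$ as carrying the relevant randomness of $(M_{\mc T}, F_{\mc T})$ given the conditioning, bounded by $I(X_{\mc T}^n; J_{\mc S^c}\mid X_{\mc T^c}^n, F_{\mc L}, Q^n)$, which single-letterizes to $\sum_i I(X_{\mc T,i}; U_{\mc S^c,i}\mid X_{\mc T^c,i}, Q_i)$ with the standard telescoping identity and the memorylessness of the channel.

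For the single-letterization I would introduce the auxiliaries $U_{k,i} := (J_k, Y_k^{i-1}, Q^{n}\setminus Q_i)$ — or more precisely the components needed so that $U_{k,i}$ is a \emph{deterministic} function of $(W, Y_{k,i}, Q_i)$ with $W:=(F_{\mc L}, \text{past/future } Y\text{'s and } Q\text{'s as required})$ — and then define the time-sharing variable $Q_i$ as the paper's $Q$ together with a uniform index if needed; here the paper's model already furnishes $Q$, so the cleanest route is to let $Q$ play the role of the standard time-sharing variable directly and absorb the index into it. The distribution $p(q)\prod_l p(x_l\mid q)\,p(y_{\mc K}\mid x_{\mc L})\,p(w\mid q)$ with $u_k=f_k(w,y_k,q)$ then holds by construction: $X_{l}^n$ has the product form~\eqref{eq:ProdX} of Lemma~\ref{lem:IIDinput} given $Q^n$ (this is exactly where obliviousness enters — without it the $X_l$'s would not be conditionally i.i.d.\ given $Q$), the channel is memoryless, and $W$ depends on the codebook indices which are drawn independently of the messages and only through $Q^n$. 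The main obstacle I anticipate is precisely keeping the conditioning sets consistent across the two terms so that the $U_{\mc S^c}$ appearing in the negative "fronthaul" correction is the \emph{same} object as the $U_{\mc S^c}$ in the positive "channel" term — this forces a careful choice of what goes into $W$ versus what stays as a per-letter conditioning variable, and is the step where the deterministic-function structure $u_k=f_k(w,y_k,q)$ (rather than a Markov kernel $p(u_k\mid y_k,q)$ as in Theorem~\ref{th:NNC_all_MK_inner}) is unavoidable, which is exactly why the outer bound is stated with a general $W$ and does not match the inner bound in general.
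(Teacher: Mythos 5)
Your information-theoretic skeleton is workable and is essentially a mirror image of the paper's argument: the paper starts from the fronthaul side ($nC_{\mc S}\geq H(J_{\mc S})\geq I(Y_{\mc K}^n;J_{\mc S}|X_{\mc T^c}^n,J_{\mc S^c},\tilde Q)$) and imports the rate through the Fano-based inequality $H(X_{\mc T}^n|X_{\mc T^c}^n,J_{\mc K},\tilde Q)\leq \sum_i H(X_{\mc T,i}|X_{\mc T^c,i},\bar Q_i)-nR_{\mc T}$, whereas you start from $nR_{\mc T}$ via Fano and split $J_{\mc K}=(J_{\mc S^c},J_{\mc S})$ by the chain rule. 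With the same identifications $U_{k,i}=(J_k,Y_{\mc K}^{i-1})$ and $\bar Q_i=(X_{\mc L}^{i-1},X_{\mc L,i+1}^n,\tilde Q)$, and with Lemma~\ref{lem:IIDinput} doing the same work (it is what lets you add $X_{\mc T,i+1}^n$ to the conditioning and still factor per letter), your two pieces single-letterize to $\sum_i I(X_{\mc T,i};U_{\mc S^c,i}|X_{\mc T^c,i},\bar Q_i)$ and $\sum_s nC_s-\sum_i I(Y_{\mc S,i};U_{\mc S,i}|X_{\mc L,i},U_{\mc S^c,i},\bar Q_i)$, which is the bound~\eqref{eq:NNC_all_MK_outer}; you do still need the step (as in the paper) that removes $F_{\mc T^c},M_{\mc T^c}$ from the conditioning via the Markov chain $X_{\mc T}^n \mkv (X_{\mc T^c}^n,J,\tilde Q)\mkv F_{\mc T^c}$, otherwise codebook indices leak into $\bar Q_i$ and spoil the per-letter factorization, but this is bookkeeping, not a new idea.

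The genuine gap is your construction of $W$. You propose $W:=(F_{\mc L},\text{past/future }Y\text{'s},\ldots)$, motivated by the idea that $W$ should encode what the relays are oblivious to. This is both unnecessary and invalid. Unnecessary: precisely because the relays are oblivious, $J_k=\phi_k^r(Y_k^n,Q^n)$ does not depend on $F_{\mc L}$ at all, so $U_{k,i}=(J_k,Y_{\mc K}^{i-1})$ is already a deterministic function of $(W,Y_{k,i},Q)$ with the paper's choice $W=(Y_{\mc K}^{i-1},Y_{\mc K,i+1}^n)$ and $Q=(\bar Q_{Q'},Q')$. Invalid: the theorem demands the factorization $p(q)\prod_l p(x_l|q)\,p(y_{\mc K}|x_{\mc L})\,p(w|q)$, i.e., $W$ must be conditionally independent of $(X_{\mc L},Y_{\mc K})$ given $Q$. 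With $Q$ containing $(X_{\mc L}^{i-1},X_{\mc L,i+1}^n,Q^n)$, the index $F_l$ is \emph{not} conditionally independent of $X_{l,i}$: given the codebook and the other $n-1$ symbols of the transmitted codeword, $X_{l,i}$ is essentially pinned down to the $i$-th symbols of the consistent codewords, rather than distributed as $p_{X_l|Q}(\cdot|q_i)$. Your justification (that $F_{\mc L}$ is drawn independently of the messages) addresses the wrong independence; the needed one is exactly what fails once codebook structure is conditioned on, and it is the point of Lemma~\ref{lem:IIDinput} that the required product form holds only when $F_{\mc L}$ is averaged out. Replacing your $W$ by the past/future channel outputs $(Y_{\mc K}^{Q'-1},Y_{\mc K,Q'+1}^n)$, whose conditional independence from $(X_{\mc L,Q'},Y_{\mc K,Q'})$ given $\bar Q_{Q'}$ follows from Lemma~\ref{lem:IIDinput} and the memorylessness of the channel, repairs the proof and makes the rest of your outline go through.
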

%\noindent\textbf{Proof:} The proof of Theorem~\ref{th:NNC_all_MK_outer} appears in Appendix~\ref{app:NNC_all_MK_outer}. \qed

\begin{proof}
The proof of Theorem~\ref{th:NNC_all_MK_outer} appears in Appendix~\ref{app:NNC_all_MK_outer}.
\end{proof}

\begin{remark}
The inner bound of Theorem~\ref{th:NNC_all_MK_inner} and the outer bound of Theorem \ref{th:NNC_all_MK_outer} do not coincide in general. This is because in Theorem \ref{th:NNC_all_MK_inner}, the auxiliary random variables $U_1,\ldots, U_K$ satisfy the Markov chains $U_k \mkv (Y_k,Q) \mkv (X_{\mathcal{L}}, Y_{\mathcal{L}/k},U_{\mathcal{K}/k})$, while in Theorem~\ref{th:NNC_all_MK_outer} each $U_k$ is a function of $Y_k$ but also of a ``common'' random variable $W$. In particular,  the Markov chains $U_k \mkv (Y_k,Q) \mkv U_{\mathcal{K}/k}$ do not necessarily hold for the auxiliary random variables of the outer bound.
\end{remark}

\begin{remark}
As we already mentioned, the class of DM CRAN models satisfying~\eqref{eq:MKChain_pmf} connects with the CEO problem under logarithmic loss distortion measure. The rate-distortion region of this problem is characterized in the excellent contribution~\cite{Courtade2014LogLoss} for an arbitrary number of (source) encoders (see \cite[Theorem 3]{Courtade2014LogLoss}  therein). For general DM CRAN channels, i.e., \textit{without} the Markov chain~\eqref{eq:MKChain_pmf} the model connects with the distributed source coding problem under logarithmic loss distortion measure. While a solution of the latter problem for the case of two encoders has been found in~\cite[Theorem 6]{Courtade2014LogLoss}, generalizing the result to the case of arbitrary number of encoders poses a significant challenge. In fact, as also mentioned in~\cite{Courtade2014LogLoss},  the Berger-Tung inner bound is known to be generally suboptimal (e.g., see the Korner-Marton lossless modulo-sum problem~\cite{Korner:IT:1979HowToEncode}). Characterizing the capacity region of the general DM CRAN model under the constraint of oblivious relay processing and enabled time-sharing poses a similar challenge, even for the case of two relays.
Finally, we mention that in the context of multi-terminal distributed source coding  with general distortion measure, an outer bound has been derived in \cite{Wagner:IT:2008}; and is shown to be tight in certain cases. The proof technique therein is based on introducing a random source $X$ such that the observations at the encoders are conditionally independent on $X$, i.e., a Markov chain similar to that in~\eqref{eq:MKChain_pmf} holds. Note however that the connection of the outer bound that we develop here for the uplink CRAN model with oblivious relay processing with that of \cite{Wagner:IT:2008} is only of high level nature as the proof techniques are different.
\end{remark}

\subsection{On the Suboptimality of Separate Decompression-Decoding and Role of Time-Sharing}\label{sec:SumRateOptimality}

For the general DM CRAN model~\eqref{conditional-distribution-general-DM-CRAN}, the scheme CF-JD of Theorem~\ref{th:NNC_all_MK_inner} is based on a \textit{joint} decoding of the compression indices and users' messages. That is, the CP performs the operations of the decoding of the quantization codewords and the decoding of the users' messages \textit{simultaneously}. A more practical strategy, considered also in~\cite{Sanderovich2008:IT:ComViaDesc} and~\cite{DBLP:journals/corr/ZhouX0C16}, consists in having the CP first decode the quantization codewords (jointly), and then decode the users' messages (jointly). That is, compress-and-forward with \textit{separate} decompression and decoding operations. In what follows, we refer to such a scheme as CF-SD. The following proposition provides the rate-region allowed by this scheme for the DM CRAN model~\eqref{conditional-distribution-general-DM-CRAN}.

\begin{proposition}{(~\cite[Theorem 1]{Sanderovich2008:IT:ComViaDesc})}\label{prop:CFSD_achi}
For the general DM CRAN model \eqref{conditional-distribution-general-DM-CRAN} with oblivious relay processing and enabled time-sharing, the achievable rate region $\mc {R}_{\mathrm{CF-SD}}$ of the scheme CF-SD is the union of all rate tuples $(R_1,\ldots, R_L)$ that satisfy, for all non-empty $\mc T\subseteq \mc L$ and $\mathcal{S}\subseteq\mathcal{K}$
\begin{subequations}
\begin{align}
\sum_{t\in \mc T}R_{t} &\leq I(X_{\mathcal{T}};U_{\mathcal{K}}|X_{\mathcal{T}^c},Q)\label{eq:SD_constMAC}\\
\sum_{s\in \mathcal{S}}C_s &\geq I(U_{\mathcal{S}};Y_{\mathcal{S}}|U_{\mathcal{S}^c},Q),\label{eq:SD_const2}
\end{align}
\end{subequations}
for some pmf $p(q)\prod_{l=1}^L p(x_l|q)p(y_{\mathcal{K}}|x_{\mathcal{L}})\prod_{k=1}^{K}p(u_k|y_k,q)$.
\end{proposition}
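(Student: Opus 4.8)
The plan is to establish Proposition~\ref{prop:CFSD_achi} by a standard \emph{random coding / joint typicality} argument that separates the two decoding stages, while being careful that the relay encoders and the resulting quantization codebooks are oblivious to the users' codebooks — the structure needed for this is exactly what is recorded in Lemma~\ref{lem:IIDinput}. Concretely, fix a joint pmf $p(q)\prod_{l}p(x_l|q)\,p(y_{\mathcal{K}}|x_{\mathcal{L}})\prod_{k}p(u_k|y_k,q)$. First I would generate, for each relay $k\in\mc K$, a Wyner--Ziv-type quantization codebook: $2^{n\tilde R_k}$ sequences $u_k^n$ drawn i.i.d.\ according to $\prod_i p_{U_k|Q}(u_{k,i}|q_i)$, and since here all relays compress to the CP (which has all side information), no binning is needed — each relay simply sends the full quantization index, so $C_k\ge \tilde R_k$ suffices per relay, and the joint (Berger--Tung / distributed Wyner--Ziv style) covering analysis over subsets yields the constraint $\sum_{s\in\mathcal{S}}\tilde R_s \ge I(U_{\mathcal{S}};Y_{\mathcal{S}}|U_{\mathcal{S}^c},Q)$, which is~\eqref{eq:SD_const2}. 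This is the first decoding stage: the CP reconstructs $U_{\mc K}^n$ from $(J_{\mc K},Q^n)$ with vanishing error probability by a standard joint-typicality (mutual covering lemma) argument.

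Next I would treat the second stage as a multiple access channel with the channel inputs $X_1^n,\ldots,X_L^n$ and ``super-output'' $(U_{\mc K}^n,Q^n)$ available at the CP; here the crucial point is that once the codebook indices $F_{\mc L}$ are revealed to the CP, the users' codewords do carry structure (they are drawn from i.i.d.\ $p_{X_l|Q}$ codebooks, cf.\ the Remark after Lemma~\ref{lem:IIDinput}), so the standard MAC achievability with joint typicality decoding over all non-empty $\mc T\subseteq\mc L$ gives the conditions $\sum_{t\in\mc T}R_t \le I(X_{\mathcal{T}};U_{\mathcal{K}}|X_{\mathcal{T}^c},Q)$, i.e.~\eqref{eq:SD_constMAC}, provided the reconstruction $U_{\mc K}^n$ from stage one is jointly typical with the true $(X_{\mc L}^n,Y_{\mc K}^n,Q^n)$ (Markov Lemma). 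I would then combine the two stages via the union bound on the error events, noting that the stage-one constraints~\eqref{eq:SD_const2} only involve the fronthaul capacities and the auxiliaries, while the stage-two constraints~\eqref{eq:SD_constMAC} only involve the message rates, so there is no rate-splitting interplay between them — this is precisely what distinguishes CF-SD from the CF-JD region of Theorem~\ref{th:NNC_all_MK_inner}. Finally I would invoke a time-sharing / coded time-sharing argument over $Q$ to convexify, and let $n\to\infty$, $\epsilon\to 0$.

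The main obstacle — and the place where this model differs from a textbook derivation — is bookkeeping the obliviousness constraint: the relay quantizers $\phi_k^r$ must be designed \emph{without} knowledge of $F_{\mc L}$, and correspondingly the analysis of the covering/quantization step must use only the ``structureless'' marginal of $Y_k^n$ from Lemma~\ref{lem:IIDinput} (the product distribution $\prod_i p_{Y_k|Q}$), whereas the final MAC-decoding step at the CP \emph{does} exploit the codebook structure revealed by $F_{\mc L}$. Making these two facts coexist rigorously — in particular verifying that the joint typicality of $(X_{\mc L}^n,Y_{\mc K}^n,U_{\mc K}^n,Q^n)$ holds with high probability under the randomized-encoding measure, so that the Markov Lemma applies in stage two even though stage one only ``saw'' the oblivious marginal — is the delicate part. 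Since this is exactly the achievability argument of~\cite[Theorem 1]{Sanderovich2008:IT:ComViaDesc} (single user, no time-sharing), I would organize the proof as a reduction: show that, conditioned on $Q^n=q^n$, the per-letter distributions and the structure of the argument are identical to those in~\cite{Sanderovich2008:IT:ComViaDesc} applied with the conditional law $p_{X_l|Q=q}$, extend the MAC step from one user to $L$ users by the usual subset-of-senders decoding analysis, and then average over the time-sharing sequence; the remaining steps are routine and can be deferred to an appendix.
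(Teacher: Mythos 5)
Your stage-two (MAC decoding) and time-sharing/obliviousness bookkeeping are fine, but stage one contains a genuine error: you assert that ``no binning is needed --- each relay simply sends the full quantization index, so $C_k\ge \tilde R_k$ suffices per relay,'' and that a covering analysis over subsets yields $\sum_{s\in \mc S}\tilde R_s \ge I(U_{\mc S};Y_{\mc S}|U_{\mc S^c},Q)$. This conflates the encoder-side covering condition with the decoder-side decompression condition. Each relay quantizes its own output in isolation (and, being oblivious, under the product marginal of Lemma~\ref{lem:IIDinput}), so successful covering forces the \emph{individual} constraint $\tilde R_k \ge I(Y_k;U_k|Q)$; there is no mutual covering among relays because they do not communicate. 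Hence, without binning, the fronthaul must carry the full quantization rate and you only obtain the per-relay constraints $C_k \ge I(Y_k;U_k|Q)$, which is strictly smaller than the claimed region whenever the $Y_k$ (hence the $U_k$) are correlated: e.g., for $\mc S=\{k\}$, \eqref{eq:SD_const2} only requires $C_k \ge I(U_k;Y_k|U_{\mc K\setminus k},Q)$. The conditioning on $U_{\mc S^c}$ in \eqref{eq:SD_const2} is not a covering gain; it is the gain from Wyner--Ziv binning combined with \emph{joint} decoding of the bin indices at the CP (the ``side information'' $U_{\mc S^c}$ is itself only recovered in that joint decompression step, so it cannot be used as genuine decoder side information relay-by-relay without either binning or a successive order plus time-sharing, cf.\ Proposition~\ref{prop:CFSWZ_achi}).

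The intended argument --- and the one the paper relies on --- keeps the CF-JD construction of Appendix~\ref{app:NNC_all_MK_inner} unchanged (quantization at rate $\hat R_k \ge I(Y_k;U_k|Q)$, random binning into $2^{nC_k}$ bins) and merely splits the decoder into two stages, as in \cite[Theorem 1]{Sanderovich2008:IT:ComViaDesc}: the CP first jointly decodes the quantization codewords within the received bins, whose error analysis over erroneous subsets $\mc S$ gives exactly $\sum_{s\in\mc S}(\hat R_s - C_s) \le \sum_{s\in\mc S}H(U_s|Q) - H(U_{\mc S}|U_{\mc S^c},Q)$ and hence, using $\hat R_s \approx I(Y_s;U_s|Q)$ and the conditional independence of the $U_s$ given $(Y_{\mc S},Q)$, condition \eqref{eq:SD_const2}; it then decodes the messages from $(U^n_{\mc K},Q^n)$, which yields \eqref{eq:SD_constMAC} as in your stage two. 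If you repair stage one by reinstating the binning and the joint bin-decoding error analysis (your Markov-lemma and reduction-to-Sanderovich remarks then carry over unchanged), the proof goes through.
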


It is clear that the rate region $\mc R_{\text{CF-SD}}$ of Proposition~\ref{prop:CFSD_achi} is contained in that, $\mc R_{\text{CF-JD}}$, of Theorem~\ref{th:NNC_all_MK_inner}.
\iffalse
The following example shows that this containment is in general \textit{strict}, i.e., $ \mc R_{\text{CF-JD}} \subsetneq \mc R_{\text{CF-SD}}$.

\begin{example}
Consider the example CRAN model with $L=2$ and $K=2$ shown in Figure~\ref{fig-example-suboptimality-of-separate-decompression-decoding}.
\end{example}
\fi

As a special instance of the scheme CF-SD, we consider compress-and-forward with \textit{successive} separate decompression-decoding performs \textit{sequential} decoding of the quantization codewords first, followed by \textit{sequential} decoding of the users' messages. More specifically, let ${\pi}_r\: : \: \mc K \rightarrow \mc K$ and ${\pi}_u \: : \: \mc L \rightarrow \mc L $ be two permutations that are defined on the set of quantization codewords and the set of user message codewords, respectively. An outline of this scheme, which we denote as CF-SSD, is as follows. The relays compress their outputs  sequentially, starting by relay node $\pi_r(1)$. In doing so, they utilize Wyner-Ziv binning~\cite{Wyner1978}, i.e., relay node $\pi_r(k)$, $k\in \mc K$, quantizes its channel output $Y^n_{\pi_r(k)}$ into a description $U^n_{\pi_r(k)}$ taking into account $(U^n_{\pi_r(1)}, \hdots, U^n_{\pi_r(k-1)})$ as decoder side information. The CP first recovers the quantization codewords in the same order, and then decodes the users' messages sequentially, in the order indicated by $\pi_u$, starting by user $\pi_u(1)$. That is, the codeword of user $l$, $l\in \mc L$, is estimated using all compression codewords $(Y^n_{\pi_r(1)},\hdots, Y^n_{\pi_r(K)})$ as well as the previously decoded user codewords $(X^n_{\pi_u(1)},\hdots, X^n_{\pi_u(l-1)})$. The rate-region obtained with a given decoding order $(\pi_r,\pi_u)$ as well as that of the scheme CF-SSD, obtained by considering all possible permutations, are given in the following proposition.
\begin{proposition}\label{prop:CFSWZ_achi}
For the general DM CRAN model~\eqref{conditional-distribution-general-DM-CRAN} with oblivious relay processing and enabled time-sharing, the achievable rate region $\mc R_{\text{CF-SSD}}(\pi_r,\pi_u)$ of the scheme CF-SSD with decoding order  $(\pi_r,\pi_u)$ is the union of all rate tuples $(R_1,\hdots,R_L)$ that satisfy, for all $l \in \mc L$ and $k \in \mc K$,
\begin{subequations}
\begin{align}
R_{\pi_u(l)} &\leq I(X_{\pi_u(l)};U_{\mc K}|X_{\pi_u(1)},\hdots,X_{\pi_u(l-1)},Q)\\
C_{\pi_r(k)} &\geq I(U_{\pi_r(k)};Y_{\pi_(k)}|U_{\pi_r(1)},\hdots,U_{\pi_r(k-1)},Q),
\end{align}
\label{eq:SWZ_const2}
\end{subequations}
for some pmf $p(q)\prod_{l=1}^L p(x_l|q)p(y_{\mathcal{K}}|x_{\mathcal{L}})\prod_{k=1}^{K}p(u_k|y_k,q)$. The rate region $\mc R_{\text{CF-SSD}}$ achievable by the scheme CF-SSD is defined as the union of the regions $\mc R_{\text{CF-SSD}}(\pi_r,\pi_u)$ over all possible permutations $\pi_r$ and $\pi_u$, i.e.,
\begin{equation}
\mc R_{\text{CF-SSD}} = \bigcup_{\pi_r,\:\pi_u} \mc R_{\text{CF-SSD}}(\pi_r,\pi_u).
\end{equation}
\end{proposition}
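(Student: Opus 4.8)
The plan is to prove achievability directly, layering sequential Wyner--Ziv compression at the relays on top of successive decoding at the CP, and then to obtain $\mc R_{\text{CF-SSD}}$ by taking the union over all decoding orders $(\pi_r,\pi_u)$. First I would fix a pmf of the stated form together with a time-sharing sequence $q^n$, and generate for each user $l\in\mc L$ a codebook of $2^{nR_l}$ codewords $x_l^n(m_l)$ drawn i.i.d.\ $\sim\prod_{i=1}^n p_{X_l|Q}(x_{l,i}|q_i)$; this random codebook selection is exactly the randomized encoding that models relay obliviousness, so that, as in Lemma~\ref{lem:IIDinput}, the relay inputs seen without the codebook indices have the prescribed product form. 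Processing the relays in the order $\pi_r$, I would generate for relay $\pi_r(k)$ a quantization codebook of $2^{n\hat R_{\pi_r(k)}}$ sequences $u_{\pi_r(k)}^n$ i.i.d.\ $\sim\prod_i p_{U_{\pi_r(k)}|Q}(\cdot|q_i)$, partitioned uniformly into $2^{nC_{\pi_r(k)}}$ bins. User $l$ transmits $x_l^n(m_l)$.

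The relay and CP operations are then standard. Relay $\pi_r(k)$ quantizes by finding $u_{\pi_r(k)}^n$ jointly typical with $(y_{\pi_r(k)}^n,q^n)$, which succeeds with high probability when $\hat R_{\pi_r(k)}>I(U_{\pi_r(k)};Y_{\pi_r(k)}|Q)$, and forwards the corresponding bin index over its link (this needs $nC_{\pi_r(k)}$ bits); no relay uses another relay's description, so the correlation is exploited purely through binning. The CP first recovers the descriptions in the order $\pi_r$: knowing $q^n$ and the already-recovered $u_{\pi_r(1)}^n,\dots,u_{\pi_r(k-1)}^n$, it searches the received bin of relay $\pi_r(k)$ for the unique jointly-typical $u_{\pi_r(k)}^n$, which by the packing lemma is reliable if $\hat R_{\pi_r(k)}-C_{\pi_r(k)}<I(U_{\pi_r(k)};U_{\pi_r(1)},\dots,U_{\pi_r(k-1)}|Q)$. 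Holding all of $u_{\mc K}^n$, the CP then decodes the messages in the order $\pi_u$, declaring $\hat m_{\pi_u(l)}$ to be the unique index for which $x_{\pi_u(l)}^n(\hat m_{\pi_u(l)})$ is jointly typical with $(x_{\pi_u(1)}^n(\hat m_{\pi_u(1)}),\dots,x_{\pi_u(l-1)}^n(\hat m_{\pi_u(l-1)}),u_{\mc K}^n,q^n)$; using that the users' inputs are conditionally independent given $Q$, the packing lemma makes this reliable if $R_{\pi_u(l)}<I(X_{\pi_u(l)};U_{\mc K}|X_{\pi_u(1)},\dots,X_{\pi_u(l-1)},Q)$.

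Next I would assemble the rate constraints. The message inequalities are already in the claimed form. For the compression inequalities I would eliminate the auxiliary rate $\hat R_{\pi_r(k)}$: the quantization condition $\hat R_{\pi_r(k)}>I(U_{\pi_r(k)};Y_{\pi_r(k)}|Q)$ and the Wyner--Ziv decoding condition $\hat R_{\pi_r(k)}-C_{\pi_r(k)}<I(U_{\pi_r(k)};U_{\pi_r(1)},\dots,U_{\pi_r(k-1)}|Q)$ admit a common feasible $\hat R_{\pi_r(k)}$ iff $C_{\pi_r(k)}>I(U_{\pi_r(k)};Y_{\pi_r(k)}|Q)-I(U_{\pi_r(k)};U_{\pi_r(1)},\dots,U_{\pi_r(k-1)}|Q)$, and invoking the Markov chain $U_{\pi_r(k)}\mkv(Y_{\pi_r(k)},Q)\mkv(U_{\pi_r(1)},\dots,U_{\pi_r(k-1)})$ --- valid because each $U_k$ is generated from $(Y_k,Q)$ alone in the stated factorization --- rewrites the right-hand side as $I(U_{\pi_r(k)};Y_{\pi_r(k)}|U_{\pi_r(1)},\dots,U_{\pi_r(k-1)},Q)$, i.e.\ the constraint of~\eqref{eq:SWZ_const2}. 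A union bound over the finitely many quantization, decompression, and message-decoding error events shows the overall error probability vanishes as $n\to\infty$ whenever all inequalities are strict; since every relay map depends only on $(y_k^n,q^n)$ and not on the codebook indices $F_{\mc L}$, the construction respects the obliviousness constraint, and a standard averaging-and-expurgation argument yields a deterministic code. This is the multi-user, time-sharing-enabled counterpart of the scheme of~\cite[Theorem~1]{Sanderovich2008:IT:ComViaDesc}. Taking the closure and then the union over all permutations $\pi_r$ and $\pi_u$ gives $\mc R_{\text{CF-SSD}}$.

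I expect the only genuinely delicate point to be the elimination of $\hat R_{\pi_r(k)}$ together with the Markov-chain simplification: one must verify that the quantization requirement and the Wyner--Ziv-decoding requirement can be met by a single choice of $\hat R_{\pi_r(k)}$ and that the resulting condition collapses exactly to the conditional mutual information in~\eqref{eq:SWZ_const2}. The remainder --- sequential typicality decoding of the descriptions and then of the messages, and the bookkeeping of the error events --- is a routine concatenation of Wyner--Ziv source coding with successive channel decoding, essentially already contained in Proposition~\ref{prop:CFSD_achi} and in~\cite{Sanderovich2008:IT:ComViaDesc}.
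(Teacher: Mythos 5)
Your proposal is correct and follows exactly the route the paper intends: the paper gives no separate proof of Proposition~\ref{prop:CFSWZ_achi} beyond the scheme outline preceding it (sequential Wyner--Ziv binning at the relays, successive recovery of the descriptions in the order $\pi_r$, then successive message decoding in the order $\pi_u$, as in \cite[Theorem 1]{Sanderovich2008:IT:ComViaDesc} extended to multiple users and time-sharing), and your covering/packing conditions, the elimination of $\hat R_{\pi_r(k)}$, and the Markov-chain simplification $U_{\pi_r(k)} \mkv (Y_{\pi_r(k)},Q) \mkv (U_{\pi_r(1)},\hdots,U_{\pi_r(k-1)})$ reproduce precisely the constraints in~\eqref{eq:SWZ_const2}. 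The only step you pass over silently is the Markov-lemma argument guaranteeing that the chosen description is jointly typical with the previously recovered ones, which is standard in this Berger--Tung-type setting and does not affect correctness.
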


While successive separate decompression and decoding results in a rate region that is generally strictly smaller than that of joint decoding, i.e., with CF-JD, in what follows we show that the maximum sum-rate that is achievable by this specific separate decompression-decoding is the same as that achieved by joint decoding. That is, the schemes CF-SSD and CF-JD achieve the same sum-rate (and, so, so does also the scheme CF-SD). Specifically, let the maximum sum-rate achieved by the scheme CF-JD be defined as
\[
R_{\text{sum, CF-JD}}=
\left\{\begin{array}{l}
\max \sum_{i=1}^{L} R_i \\
\text{s.t.}\: (R_1,\hdots,R_L) \in \mc R_{\text{CF-JD}}.
\end{array}
\right.
\]

\noindent Similarly, let the maximum sum rate for the scheme CF-SD be defined as
$$
R_{\text{sum, CF-SD}}=
\left\{
\begin{array}{l}
\max \sum_{i=1}^{L} R_i \\
\text{s.t.}\: (R_1,\hdots,R_L) \in \mc R_{\text{CF-SD}},
\end{array}
\right.
$$
and that of the scheme CF-SSD defined as
$$
R_{\text{sum, CF-SSD}}=
\left\{
\begin{array}{l}
\max \sum_{i=1}^{L} R_i \\
\text{s.t.}\: (R_1,\hdots,R_L) \in \mc R_{\text{CF-SSD}}.
\end{array}
\right.
$$

\begin{theorem}\label{th:SWZSumRate}
For the general DM CRAN model~\eqref{conditional-distribution-general-DM-CRAN} with oblivious relay processing and enabled time-sharing in Figure~\ref{fig:Schm}, we have
\begin{equation}
R_{\text{sum, CF-JD}} = R_{\text{sum, CF-SD}} = R_{\text{sum, CF-SSD}}.
\end{equation}
\end{theorem}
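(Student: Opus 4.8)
The inclusions $\mathcal{R}_{\text{CF-SSD}}\subseteq\mathcal{R}_{\text{CF-SD}}\subseteq\mathcal{R}_{\text{CF-JD}}$, noted just before the statement, already give $R_{\text{sum, CF-SSD}}\le R_{\text{sum, CF-SD}}\le R_{\text{sum, CF-JD}}$ for free, so the whole content is the reverse inequality $R_{\text{sum, CF-JD}}\le R_{\text{sum, CF-SSD}}$. The plan is to upper bound $R_{\text{sum, CF-JD}}$ and lower bound $R_{\text{sum, CF-SSD}}$ by the same quantity.

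For the upper bound, take any achievable tuple, realized by some auxiliary measure $p=p(q)\prod_l p(x_l|q)\,p(y_{\mathcal{K}}|x_{\mathcal{L}})\prod_k p(u_k|y_k,q)$, and specialize \eqref{eq:NNC_all_MK_inner} to $\mathcal{T}=\mathcal{L}$. Using that each $U_k$ is a function of $(Y_k,Q)$ and of private randomness only, so that given $(Y_{\mathcal{S}},Q)$ the collection $U_{\mathcal{S}}$ is conditionally independent of $(X_{\mathcal{L}},U_{\mathcal{S}^c})$, one merges $I(X_{\mathcal{L}};U_{\mathcal{S}^c}|Q)$ with $I(Y_{\mathcal{S}};U_{\mathcal{S}}|X_{\mathcal{L}},U_{\mathcal{S}^c},Q)$ and rewrites the bound, for every $\mathcal{S}\subseteq\mathcal{K}$, as
\[
\sum_l R_l \;\le\; I(X_{\mathcal{L}};U_{\mathcal{K}}|Q)\;-\;\Big(I(Y_{\mathcal{S}};U_{\mathcal{S}}|U_{\mathcal{S}^c},Q)-\sum_{s\in\mathcal{S}}C_s\Big).
\]
Writing $\Delta(p):=\max_{\mathcal{S}\subseteq\mathcal{K}}\big[I(Y_{\mathcal{S}};U_{\mathcal{S}}|U_{\mathcal{S}^c},Q)-\sum_{s\in\mathcal{S}}C_s\big]\ge 0$ (the value $0$ being attained at $\mathcal{S}=\emptyset$), this reads $\sum_l R_l\le I(X_{\mathcal{L}};U_{\mathcal{K}}|Q)-\Delta(p)$, hence $R_{\text{sum, CF-JD}}\le \sup_p\big[I(X_{\mathcal{L}};U_{\mathcal{K}}|Q)-\Delta(p)\big]$. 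Note that this direction needs only the $\mathcal{T}=\mathcal{L}$ bound, so the exact shape of the sum-rate face of $\mathcal{R}_{\text{CF-JD}}$ is irrelevant.

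For the matching lower bound, observe that for any $p$ and any orders $(\pi_r,\pi_u)$ for which the compression constraints in \eqref{eq:SWZ_const2} hold, the corner point $R_{\pi_u(l)}=I(X_{\pi_u(l)};U_{\mathcal{K}}|X_{\pi_u(1)},\dots,X_{\pi_u(l-1)},Q)$ lies in $\mathcal{R}_{\text{CF-SSD}}(\pi_r,\pi_u)$ of Proposition~\ref{prop:CFSWZ_achi} and, by the chain rule, sums to $I(X_{\mathcal{L}};U_{\mathcal{K}}|Q)$; so $R_{\text{sum, CF-SSD}}\ge\sup\{I(X_{\mathcal{L}};U_{\mathcal{K}}|Q)\}$ over all $(p,\pi_r)$ feasible for \eqref{eq:SWZ_const2}. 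It then suffices to prove the \emph{key reduction}: for every auxiliary measure $p^{*}$ there exist an auxiliary measure $p'$ with the same $p(q)\prod_l p(x_l|q)\,p(y_{\mathcal{K}}|x_{\mathcal{L}})$ factor and an ordering $\pi_r$ such that (i) $C_{\pi_r(k)}\ge I(U'_{\pi_r(k)};Y_{\pi_r(k)}|U'_{\pi_r(1)},\dots,U'_{\pi_r(k-1)},Q)$ for all $k$, and (ii) $I(X_{\mathcal{L}};U'_{\mathcal{K}}|Q)\ge I(X_{\mathcal{L}};U^{*}_{\mathcal{K}}|Q)-\Delta(p^{*})$. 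Indeed then $R_{\text{sum, CF-SSD}}\ge I(X_{\mathcal{L}};U'_{\mathcal{K}}|Q)\ge I(X_{\mathcal{L}};U^{*}_{\mathcal{K}}|Q)-\Delta(p^{*})$ for every $p^{*}$, and taking the supremum over $p^{*}$ and invoking the upper bound above yields $R_{\text{sum, CF-SSD}}\ge R_{\text{sum, CF-JD}}$, closing the loop. To build $p'$ I would thin the descriptions by public randomness appended to $Q$: for a suitable probability profile $\{\beta_k\}$, let $U'_k$ equal $U^{*}_k$ when a publicly revealed coin of bias $\beta_k$ comes up heads and be a constant otherwise. Thinning relay $k$ rescales its contribution to every Wyner--Ziv burden $I(Y_{\mathcal{S}};U_{\mathcal{S}}|U_{\mathcal{S}^c},Q)$, while --- by a short data-processing/convexity computation using $I(X_{\mathcal{L}};U_k)\le I(Y_k;U_k)$ --- the induced decrease of $I(X_{\mathcal{L}};U_{\mathcal{K}}|Q)$ never exceeds the excess compression rate that is removed; choosing $\{\beta_k\}$ so that $\Delta$ is driven to $0$ and $\pi_r$ greedily (largest fronthaul first) then gives (i) and (ii).

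The main obstacle is exactly this key reduction. The difficulty is that thinning the description of one relay makes it a weaker source of Wyner--Ziv side information for the relays decoded after it, and can therefore \emph{increase} the compression burden of other subsets $\mathcal{S}$, while the subsets that define $\Delta(p^{*})$ couple all the relays together. A single global scaling will thus not suffice; I expect the argument to require an iterative thinning --- repeatedly locate the subset that is most over budget, thin only the relays in it by the minimal amount, and control the process by a potential/exchange argument on the ordered fronthaul deficits --- showing that it terminates with $\Delta=0$ while the cumulative loss in $I(X_{\mathcal{L}};U_{\mathcal{K}}|Q)$ stays bounded by the original $\Delta(p^{*})$.
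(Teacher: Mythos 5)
Your overall architecture matches the paper's: the chain $R_{\text{sum, CF-SSD}}\le R_{\text{sum, CF-SD}}\le R_{\text{sum, CF-JD}}$ is free, the CF-JD sum-rate for a fixed measure $p$ does equal $I(X_{\mathcal{L}};U_{\mathcal{K}}|Q)-\Delta(p)$ (your rewriting via the Markov chain $U_{\mathcal{S}}\mkv (Y_{\mathcal{S}},Q)\mkv (X_{\mathcal{L}},U_{\mathcal{S}^c})$ is exactly the identity \eqref{eq:MKChainXY_0}--\eqref{eq:MKChainXY} used in Appendix~\ref{app:SWZSumRate}), and the MAC corner point sums to $I(X_{\mathcal{L}};U_{\mathcal{K}}|Q)$ by the chain rule. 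But the entire difficulty sits in what you call the ``key reduction,'' and that step is asserted, not proved: you yourself flag that per-relay thinning of the descriptions damages the Wyner--Ziv side information available to the other relays, can increase the compression burden $I(Y_{\mathcal{S}};U_{\mathcal{S}}|U_{\mathcal{S}^c},Q)$ of other subsets, and would need an unspecified iterative/potential argument to terminate with $\Delta=0$ while losing at most $\Delta(p^{*})$ of $I(X_{\mathcal{L}};U_{\mathcal{K}}|Q)$. No such argument is supplied, and it is not clear the proposed independent-coin thinning can be made to work at all, precisely because of the coupling you identify. So the proposal is incomplete at the decisive step.

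The paper closes this gap by a different, structural argument. For a fixed $p$ it considers the set function $g(\mathcal{S})=R_{\mathrm{sum}}+I(U_{\mathcal{S}};Y_{\mathcal{S}}|X_{\mathcal{L}},U_{\mathcal{S}^c},Q)-I(U_{\mathcal{S}^c};X_{\mathcal{L}}|Q)$ and uses the fact that $g^{+}=\max\{g,0\}$ is supermodular, so the polytope $\mathcal{P}_R$ of feasible fronthaul tuples is a contra-polymatroid whose extreme points are enumerated by linear orderings and have the explicit form $\bigl(0,\ldots,0,(1-\alpha)I(Y_j;U_j|U_{j+1}^K,Q),I(Y_{j+1};U_{j+1}|U_{j+2}^K,Q),\ldots,I(Y_K;U_K|Q)\bigr)$. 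Each such extreme point is then dominated by a CF-SSD point built with a \emph{single} Bernoulli time-sharing bit $B$ appended to $Q$: in one phase relays $1,\ldots,j$ are silenced, in the other relays $1,\ldots,j-1$, and decoding follows the reverse order $\pi_r(1)=K,\ldots,\pi_r(K)=1$. Because the only partially silenced relay $j$ is decoded \emph{last} among the active ones, its removal never degrades the side information of the others --- which is exactly how the paper avoids the coupling that blocks your thinning scheme --- and the sum-rate loss is controlled by $I(X_{\mathcal{L}};U_j|U_{j+1}^K,Q)\le I(Y_j;U_j|U_{j+1}^K,Q)$ (the inequality you also invoke). To repair your proof you would need either to import this supermodularity/extreme-point machinery or to actually carry out the iterative thinning analysis you only sketch; as written, the reverse inequality $R_{\text{sum, CF-JD}}\le R_{\text{sum, CF-SSD}}$ is not established.
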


\begin{proof}
The proof of Theorem~\ref{th:SWZSumRate} appears in Appendix~\ref{app:SWZSumRate}.
\end{proof}

\begin{remark}
The proof of Theorem~\ref{th:SWZSumRate} uses properties of submodular optimization; and is similar to that of \cite[Theorem 2]{DBLP:journals/corr/ZhouX0C16} which shows that CF-JD and CF-SD achieve the same sum-rate for the class of CRANs that satisfy~\eqref{eq:MKChain_pmf}. Thus, in a sense, Theorem~\ref{th:SWZSumRate} can be thought of as a generalization of \cite[Theorem 2]{DBLP:journals/corr/ZhouX0C16} to the case of general channels~\eqref{conditional-distribution-general-DM-CRAN}.
 A generalized successive decompression-decoding scheme (CF-GSD) which allows arbitrary interleaved decoding orders between quantization codewords and users' messages is proposed in \cite{DBLP:journals/corr/ZhouX0C16}, which under the sum-rate constraint is also optimal. In general, CF-GSD achieves a larger rate-region that CF-SD and achieves the same rate-region as CF-JD under sum-fronthaul constraint \cite[Theorem 2]{DBLP:journals/corr/ZhouX0C16}.
\end{remark}

\begin{remark}\label{rm:Successive}
Theorem~\ref{th:SWZSumRate} shows that the three schemes CF-JD, CF-SD and CF-SSD achieve the same sum-rate and that, in general, the use of time-sharing is required for the three schemes to achieve the maximum sum-rate. Note that the uplink CRAN is a multiple-source, multiple-relay, single-destination network. If all fronthaul capacities were infinite, then the model would reduce to a standard multiple access channel (MAC) and it follows from standard results that time-sharing is not needed to achieve the optimal sum-rate in this case \cite{elGamal:book}. The reader may wonder whether it is also so in the case of finite-rate fronthaul links, i.e., whether one can optimally set $Q=\emptyset$ in the region $\mc C(C_{\mc K})$ for sum-rate maximization. The answer to this question is negative for finite fronthaul capacities $\{C_l\}$, as shown in Section~\ref{ssec:Gauss}. This is reminiscent of the fact that time-sharing generally increase rates in relay channels, e.g., \cite{ElGamal:IT:2006,Kim:Allerton:2007}.
In addition, when the three schemes CF-JD, CF-SD and CF-SSD are  restricted to operate without time-sharing, i.e., $Q=\emptyset$, CF-SSD might perform strictly worse than CF-JD and CF-SD. To see this, the reader may find it useful to observe that while time-sharing is not required for sum-rate maximization in a regular MAC, as successive decoding (in any order) is sum-rate optimal in this case, it is beneficial when the sum-rate maximization is subjected to constraints on the users' message rates such as when the users' rates need to be symmetric~\cite{Rimoldi:IT:1996}, i.e., the operation point is not in a corner point of the MAC region. Similarly, standard successive Wyner-Ziv (in any order, without time-sharing) is known to achieve any corner point of the Berger-Tung region~\cite{BergerChen:IT:2008, Wagner2008}, but time-sharing (or rate-splitting \`a-la~\cite{BergerChen:IT:2008}) is beneficial if the compression rates are subjected to constraints such as when the compression rates are symmetric. An example which illustrates these aspects for memoryless Gaussian CRAN is provided in Section~\ref{ssec:Gauss}.

\end{remark}

\section{Memoryless MIMO Gaussian CRAN}\label{ssec:Gauss}

In this section, we consider a memoryless Gaussian MIMO CRAN with oblivious relay processing and enabled time-sharing. Relay node $k$, $k \in \mc K$, is equipped with $M_k$ receive antennas and has channel output
\vspace{-2mm}
\begin{equation}
\mathbf{Y}_k = \mathbf{H}_{k,\mathcal{L}}\mathbf{X}+\mathbf{N}_k,
\label{mimo-gaussian-model}
\end{equation}
where $\mathbf{X}:=[\mathbf{X}_1^T,\ldots,\mathbf{X}_L^T]^T$, $\mathbf{X}_{l}\in \mathds{C}^{N_l}$ is the channel input vector of user $l \in \mc L$, $N_l$ is the number of antennas at user $l$, $\mathbf{H}_{k,\mathcal{L}} := [\mathbf{H}_{k,1},\ldots, \mathbf{H}_{k,L}]$ is the matrix obtained by concatenating the $\mathbf{H}_{k,l}$, $l\in \mc L$, horizontally, with $\mathbf{H}_{k,l}\in \mathds{C}^{M_k\times N_l}$ being the channel matrix connecting user $l$ to relay node $k$, and $\mathbf{N}_k\in\mathds{C}^{M_k}$ is the noise vector at relay $k$, assumed to be memoryless Gaussian with covariance matrix $\mathbf{N}_k\sim\mathcal{CN}(\mathbf{0},\mathbf{\Sigma}_k)$ and independent from other noises and from the channel inputs $\{\dv X_l\}$. The transmission from user $l \in \mc L$ is subjected to the covariance constraint,
\begin{equation}
\mathrm{E}[\mathbf{X}_l\mathbf{X}^H_{l}]\preceq \mathbf{K}_{l},
\label{input-covariance-matrix-Gaussian-model}
\end{equation}
where $\dv K_l$ is a given $N_l {\times} N_l$ positive semi-definite matrix, and the notation $\preceq$ indicates that the matrix $(\dv K_l - \mathrm{E}[\mathbf{X}_l\mathbf{X}^H_{l}])$ is positive semi-definite.

\subsection{Capacity Region under Time-Sharing of Gaussian Inputs}\label{ssec:TimeSharingGaussInputs}

The memoryless MIMO Gaussian model with oblivious relay processing described by~\eqref{mimo-gaussian-model} and~\eqref{input-covariance-matrix-Gaussian-model} clearly falls into the class of CRANs studied in Section~\ref{ssec:MainClass}, since $\dv Y_k \mkv (\dv X_1,\hdots,\dv X_L) \mkv (\dv Y_1,\hdots,\dv Y_{k-1},\dv Y_{k+1},\hdots,\dv Y_K)$ forms a Markov chain in this order for all $k \in \mc K$. Thus, Theorem~\ref{th:MK_C_Main}, which can be extended to continuous channels using standard techniques, characterizes the capacity region of this model. The computation of the region of Theorem~\ref{th:MK_C_Main}, i.e., $\mc C(C_{\mc K})$, for the model described by~\eqref{mimo-gaussian-model} and~\eqref{input-covariance-matrix-Gaussian-model}, however, is not easy as it requires  finding the optimal choices of channel inputs $(\dv X_1,\hdots,\dv X_L)$ and the involved auxiliary random variables $(U_1, \hdots, U_K)$. In this section, we find an explicit characterization of the capacity region of the model described by~\eqref{mimo-gaussian-model} and~\eqref{input-covariance-matrix-Gaussian-model} in the case in which the users are constrained to time-share only among \textit{Gaussian} codebooks. That is, for all $q \in Q$ and all $l \in \mc L$, the distribution of the input $\dv X_l$ conditionally on $Q=q$ is Gaussian (with covariance matrix that can be optimized over so as to satisfy~\eqref{input-covariance-matrix-Gaussian-model}). We denote that region by $\mc C_{\mathrm{G}}(C_{\mc K})$.  Although Gaussian input may generally be suboptimal for uplink CRAN~\cite{Sanderovich2008:IT:ComViaDesc}, i.e., in general $\mc C_{\mathrm{G}}(C_{\mc K})\subset \mc C(C_{\mc K})$,  restricting to Gaussian input for every $Q=q$ is appreciable because it leads to rate regions that are less difficult to evaluate. In doing so, we also show that time-sharing Gaussian compression at the relay nodes is optimal if the users' channel inputs are restricted to be Gaussian for all $q \in Q$.

Let, for all $l \in \mc L$, the input $\dv X_l$ be restricted to be distributed such that for all $Q=q$,
\vspace{-3mm}
\begin{equation}\label{eq:GaussInput}
\dv X_l|Q=q \sim \mathcal{CN}(\mathbf{0},\mathbf{K}_{l,q}),
\end{equation}
 where the matrices $\{\dv K_{l,q}\}_{q=1}^{|\mc Q|}$ are chosen to satisfy
\vspace{-3mm}
\begin{equation}
\sum_{q\in\mathcal{Q}}p_Q(q)\mathbf{K}_{l,q} \preceq \mathbf{K}_l.
\label{eq:powConst}
\end{equation}

The following theorem characterizes the capacity region of the model with oblivious relay processing described by~\eqref{mimo-gaussian-model} and~\eqref{input-covariance-matrix-Gaussian-model} under the constraint of fixed Gaussian input and given \mbox{fronthaul capacities $C_{\mc K}$.}

\begin{theorem}~\label{th:GaussSumCap}
 The capacity region $\mc C_{\mathrm{G}}(C_{\mc K})$ of the memoryless Gaussian MIMO model with oblivious relay processing described by~\eqref{mimo-gaussian-model} and~\eqref{input-covariance-matrix-Gaussian-model} under time-sharing of Gaussian inputs is given by the set of all rate tuples $(R_1,\ldots,R_L)$ that satisfy
\begin{align}
\sum_{t\in\mathcal{T}}R_{t} \leq& \sum_{k\in \mathcal{S}}\left[C_k-\mathrm{E}_{Q}\left[\log\frac{|\mathbf{\Sigma}_k^{-1}|}{|\mathbf{\Sigma}^{-1}_k-\mathbf{B}_{k,Q}|}\right]\right] \nonumber\\
 & + \mathrm{E}_{Q}\left[ \log \frac{|\sum_{k\in\mathcal{S}^{c}}\mathbf{H}_{k,\mathcal{T}}^{H}
\mathbf{B}_{k,Q} \mathbf{H}_{k,\mathcal{T}}+\mathbf{K}^{-1}_{\mathcal{T},Q}|}{|\mathbf{K}_{\mathcal{T},Q}^{-1}|}\right],
\label{eq:GaussSumCap}
\end{align}
for all non-empty $ \mathcal{T} \subseteq \mathcal{L}$ and all $\mathcal{S} \subseteq \mathcal{K}$, for some pmf $p_Q(q)$ and matrices $\dv K_{q,l}$ and $\mathbf{B}_{k,q}$ such that $\mathrm{E}_Q [\dv K_{l,Q}]\preceq \dv K_{l}$ and $\mathbf{0}\preceq \mathbf{B}_{k,q}\preceq \mathbf{\Sigma}_{k}^{-1}$; and where, for $q \in Q$ and $\mc T \subseteq \mc L$, the matrix $\mathbf{K}_{\mathcal{T},q}$ is defined as $\mathbf{K}_{\mathcal{T},q} := \text{diag}[\{\mathbf{K}_{t,q}\}_{t\in\mathcal{T}}]$.
\end{theorem}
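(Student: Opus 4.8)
\emph{Strategy and the direct part.}
The plan is to derive Theorem~\ref{th:GaussSumCap} from the single-letter region of Theorem~\ref{th:MK_C_Main}, which carries over to this continuous-alphabet model by standard techniques: achievability follows by evaluating \eqref{eq:MK_C_Main} at jointly Gaussian inputs and Gaussian test channels, and the converse amounts to showing that, once the inputs are fixed to be conditionally Gaussian, Gaussian descriptions at the relays are optimal in \eqref{eq:MK_C_Main}. Throughout I condition on $Q=q$ and average over $Q$ only at the end, where $\mathrm E[\dv X_l\dv X_l^H]\preceq\dv K_l$ becomes $\mathrm E_Q[\dv K_{l,Q}]\preceq\dv K_l$. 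For the direct part, take $\dv X_l|Q=q\sim\mathcal{CN}(\dv 0,\dv K_{l,q})$ and $\dv U_k=\dv Y_k+\hat{\dv N}_{k,q}$ with $\hat{\dv N}_{k,q}\sim\mathcal{CN}(\dv 0,\hat{\dv\Sigma}_{k,q})$ independent of everything else, which respects the structure of \eqref{eq:PMF_Cap}. Setting $\dv B_{k,q}:=(\dv\Sigma_k+\hat{\dv\Sigma}_{k,q})^{-1}$, so that $\dv 0\preceq\dv B_{k,q}\preceq\dv\Sigma_k^{-1}$, routine Gaussian conditional-covariance computations together with the matrix-inversion lemma give $I(\dv Y_k;\dv U_k|\dv X_{\mc L},Q=q)=\log\frac{|\dv\Sigma_k^{-1}|}{|\dv\Sigma_k^{-1}-\dv B_{k,q}|}$ and $I(\dv X_{\mc T};\dv U_{\mc S^c}|\dv X_{\mc T^c},Q=q)=\log\frac{|\sum_{k\in\mc S^c}\dv H_{k,\mc T}^H\dv B_{k,q}\dv H_{k,\mc T}+\dv K_{\mc T,q}^{-1}|}{|\dv K_{\mc T,q}^{-1}|}$; substituting into \eqref{eq:MK_C_Main} and averaging over $Q$ yields \eqref{eq:GaussSumCap}. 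As $\hat{\dv\Sigma}_{k,q}$ ranges over the positive semidefinite matrices, $\dv B_{k,q}$ sweeps out $\{\dv B:\dv 0\prec\dv B\preceq\dv\Sigma_k^{-1}\}$, with the boundary $\dv B_{k,q}=\dv\Sigma_k^{-1}$ (i.e.\ $\hat{\dv\Sigma}_{k,q}\to\dv 0$, $\dv U_k\to\dv Y_k$) recovered in the closure, so every admissible choice in \eqref{eq:GaussSumCap} is attained.

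\emph{Converse, main bounds.}
By the converse of Theorem~\ref{th:MK_C_Main}, any rate tuple in $\mc C_{\mathrm G}(C_{\mc K})$ satisfies \eqref{eq:MK_C_Main} for some law of the form \eqref{eq:PMF_Cap} in which, by the Gaussian-signalling constraint, $\dv X_l|Q=q$ is Gaussian with covariance $\dv K_{l,q}$, $\mathrm E_Q[\dv K_{l,Q}]\preceq\dv K_l$, while the test channels $p(u_k|y_k,q)$ are arbitrary. Fix $q$. Let $\dv\Phi_{k,q}$ denote the error covariance of the MMSE estimate of $\dv Y_k$ given $(\dv U_k,\dv X_{\mc L},Q=q)$, and set $\dv B_{k,q}:=\dv\Sigma_k^{-1}-\dv\Sigma_k^{-1}\dv\Phi_{k,q}\dv\Sigma_k^{-1}$; since $\dv 0\preceq\dv\Phi_{k,q}\preceq\dv\Sigma_k$ this gives $\dv 0\preceq\dv B_{k,q}\preceq\dv\Sigma_k^{-1}$, and this \emph{one} family of matrices is meant to bound \eqref{eq:MK_C_Main} against \eqref{eq:GaussSumCap} for every pair $(\mc T,\mc S)$ at once. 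For $s\in\mc S$, combining the maximum-entropy bound $h(\dv Y_s|\dv U_s,\dv X_{\mc L},Q=q)\le\log\big((\pi e)^{M_s}|\dv\Phi_{s,q}|\big)$ (Gaussian maximizes entropy for a given covariance, then Jensen on $\log\det$) with $h(\dv Y_s|\dv X_{\mc L},Q=q)=\log\big((\pi e)^{M_s}|\dv\Sigma_s|\big)$ yields $I(\dv Y_s;\dv U_s|\dv X_{\mc L},Q=q)\ge\log\frac{|\dv\Sigma_s^{-1}|}{|\dv\Sigma_s^{-1}-\dv B_{s,q}|}$, so the $\mc S$-part of the right-hand side of \eqref{eq:MK_C_Main} is bounded above by that of \eqref{eq:GaussSumCap}.

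\emph{Converse, remaining term and assembly.}
For the term over $\mc S^c$, write $I(\dv X_{\mc T};\dv U_{\mc S^c}|\dv X_{\mc T^c},Q=q)=h(\dv X_{\mc T}|\dv X_{\mc T^c},Q=q)-h(\dv X_{\mc T}|\dv U_{\mc S^c},\dv X_{\mc T^c},Q=q)$; the first entropy is $\log\big((\pi e)^{N_{\mc T}}|\dv K_{\mc T,q}|\big)$ with $N_{\mc T}:=\sum_{t\in\mc T}N_t$, since the users' inputs are independent so that $\dv K_{\mc T,q}=\mathrm{diag}[\{\dv K_{t,q}\}_{t\in\mc T}]$. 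What remains is the lower bound $h(\dv X_{\mc T}|\dv U_{\mc S^c},\dv X_{\mc T^c},Q=q)\ge\log\big((\pi e)^{N_{\mc T}}\big|\big(\dv K_{\mc T,q}^{-1}+\sum_{k\in\mc S^c}\dv H_{k,\mc T}^H\dv B_{k,q}\dv H_{k,\mc T}\big)^{-1}\big|\big)$, which then gives $I(\dv X_{\mc T};\dv U_{\mc S^c}|\dv X_{\mc T^c},Q=q)\le\log\frac{|\dv K_{\mc T,q}^{-1}+\sum_{k\in\mc S^c}\dv H_{k,\mc T}^H\dv B_{k,q}\dv H_{k,\mc T}|}{|\dv K_{\mc T,q}^{-1}|}$. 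Adding this to the $\mc S$-bound of the previous paragraph shows that the right-hand side of \eqref{eq:MK_C_Main}, evaluated at the actual (possibly non-Gaussian) descriptions, is at most the right-hand side of \eqref{eq:GaussSumCap} evaluated at $\{\dv B_{k,q}\}$, for every $(\mc T,\mc S)$. Averaging over $Q$, using $\mathrm E_Q[\dv K_{l,Q}]\preceq\dv K_l$ (and, if one wishes a cardinality bound on $Q$, a Carath\'eodory argument), completes the converse.

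\emph{The hard part.}
The obstacle is exactly the displayed lower bound on $h(\dv X_{\mc T}|\dv U_{\mc S^c},\dv X_{\mc T^c},Q=q)$, i.e.\ the Gaussian-optimality of the relay descriptions for fixed Gaussian inputs; unlike the upper bounds used for the $\mc S$-terms, it cannot come from the maximum-entropy principle and must exploit that each $\dv U_k$ is a degraded observation of $\dv X_{\mc L}$ through the Gaussian channel $\dv Y_k=\dv H_{k,\mc L}\dv X_{\mc L}+\dv N_k$. I expect to prove it via Fisher information: establish the matrix inequality $\dv J\big(\dv X_{\mc T}|\dv U_{\mc S^c},\dv X_{\mc T^c},Q=q\big)\preceq\dv K_{\mc T,q}^{-1}+\sum_{k\in\mc S^c}\dv H_{k,\mc T}^H\dv B_{k,q}\dv H_{k,\mc T}$ --- each description $\dv U_k$ adding at most $\dv H_{k,\mc T}^H\dv B_{k,q}\dv H_{k,\mc T}$ to the Fisher information about $\dv X_{\mc T}$, with $\dv B_{k,q}$ precisely the information-gain matrix fixed above --- and then invoke the matrix Cram\'er--Rao/de Bruijn inequality. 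This is the vector analogue of the entropy-power step in the Gaussian CEO converse of~\cite{Courtade2014LogLoss}; an alternative is to fine-quantize the Gaussian observations, apply the discrete log-loss result \cite[Theorem 3]{Courtade2014LogLoss}, and pass to the limit. Care is also needed to make the definition of $\dv B_{k,q}$ simultaneously compatible with the $\mc S$-bound and with this Fisher-information bound, and to handle rank-deficient $\hat{\dv\Sigma}_{k,q}$ and boundary points $\dv B_{k,q}\in\partial\{\dv B:\dv 0\preceq\dv B\preceq\dv\Sigma_k^{-1}\}$, which go through by continuity and the closure in the definition of $\mc C_{\mathrm G}(C_{\mc K})$.
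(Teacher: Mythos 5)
Your proposal is correct and follows essentially the same route as the paper's proof in Appendix E: the MMSE-based reparametrization $\mathrm{mmse}(\mathbf{Y}_{k,q}|\mathbf{X}_{\mathcal{L},q},\mathbf{U}_{k,q})=\mathbf{\Sigma}_k-\mathbf{\Sigma}_k\mathbf{B}_{k,q}\mathbf{\Sigma}_k$, the max-entropy/MMSE bound for the $\mathcal{S}$-terms, the Fisher-information lower bound on conditional entropy for the $\mathcal{S}^c$-term, and Gaussian test channels with $\mathbf{B}_{k,q}=(\mathbf{\Sigma}_k+\mathbf{Q}_{k,q})^{-1}$ for achievability. The ``hard part'' you flag is exactly what the paper establishes, in fact as an identity $\mathbf{J}(\mathbf{X}_{\mathcal{T},q}|\mathbf{X}_{\mathcal{T}^c,q},\mathbf{U}_{\mathcal{S}^c,q})=\mathbf{K}_{\mathcal{T},q}^{-1}+\sum_{k\in\mathcal{S}^c}\mathbf{H}_{k,\mathcal{T}}^H\mathbf{B}_{k,q}\mathbf{H}_{k,\mathcal{T}}$ rather than merely the inequality you need, obtained by writing $\mathbf{X}_{\mathcal{T},q}$ as its linear MMSE estimate from $(\mathbf{X}_{\mathcal{T}^c,q},\mathbf{Y}_{\mathcal{S}^c,q})$ plus an independent Gaussian error, applying the de Bruijn/MMSE identity, and using the conditional independence of the relay outputs given $\mathbf{X}_{\mathcal{L}}$ to cancel the cross terms, so your sketched Fisher-information step (with the single MMSE-based choice of $\mathbf{B}_{k,q}$ serving both bounds simultaneously) goes through exactly as you anticipate.
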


\begin{proof}
The proof of Theorem~\ref{th:GaussSumCap} appears in Appendix~\ref{app:GaussSumCap}.
\end{proof}

\begin{remark}
Theorem~\ref{th:GaussSumCap} extends the result with oblivious relay processing of~\cite[Theorem 5]{Sanderovich2008:IT:ComViaDesc} to the MIMO setup  with $L$ users and enabled time-sharing, and shows that under the constraint of Gaussian signaling, the quantization codewords can be chosen optimally to be Gaussian. Recall that, as shown through an example in~\cite{Sanderovich2008:IT:ComViaDesc},  restricting to Gaussian input signaling can be a severe constraint and is generally suboptimal.
\end{remark}

\subsection{On the Role of Time-Sharing }~\label{sec-role-time-sharing}
In Remark~\ref{rm:Successive} in Section~\ref{sec:SumRateOptimality} we commented on the utility of time-sharing for sum-rate maximization in the uplink of DM CRAN with oblivious relay processing. In this section we investigate further the role of time-sharing. Specifically, we first provide an example in which time-sharing increases capacity; and then discuss some scenarios in which time-sharing does \textit{not} enlarge the capacity region of the memoryless MIMO Gaussian CRAN model with oblivious relay processing described by~\eqref{mimo-gaussian-model} and~\eqref{input-covariance-matrix-Gaussian-model}.

For convenience, let us denote by $\mc C_{\text{G}}^{\text{no-ts}}(C_{\mc K})$ the rate region obtained by setting $Q=\emptyset$, i.e, without enabled time-sharing, in the region of Theorem~\ref{th:GaussSumCap}. That is, $\mc C_{\text{G}}^{\text{no-ts}}(C_{\mc K})$ is given by the set of all rate tuples $(R_1,\ldots,R_L)$ that for all non-empty $ \mathcal{T} \subseteq \mathcal{L}$ and all $\mathcal{S} \subseteq \mathcal{K}$
\begin{align}
\sum_{t\in\mathcal{T}}R_{t} \leq&
 \sum_{k\in \mathcal{S}}\left[C_k-\log\frac{|\mathbf{\Sigma}_k^{-1}|}{|\mathbf{\Sigma}^{-1}_k-\mathbf{B}_k|}\right]
 + \log \frac{
|\sum_{k\in\mathcal{S}^{c}}\mathbf{H}_{k,\mathcal{T}}^{H}
\mathbf{B}_{k}
\mathbf{H}_{k,\mathcal{T}}+\mathbf{K}^{-1}_{\mathcal{T}}|
}{
|\mathbf{K}_{\mathcal{T}}^{-1}|
},\label{eq:GaussSumCap_EqualK}
\end{align}
for some $\mathbf{0}\preceq \mathbf{B}_{k}\preceq \mathbf{\Sigma}_{k}^{-1}$, $k\in \mc K$.

\noindent The following example shows that $\mc C_{\text{G}}^{\text{no-ts}}(C_{\mc K})$ may be contained \textit{strictly} in $\mc C_{G}(C_{\mc K})$.

\begin{example}\label{ex:Example1}
Consider an instance of the memoryless MIMO Gaussian CRAN described by~\eqref{mimo-gaussian-model} and~\eqref{input-covariance-matrix-Gaussian-model} in which $L=1$, $K=2$, $M_1=M_2=N_1=1$ (all devices are equipped with single-antennas), the relay nodes have equal fronthaul capacities, i.e.,  $C_1 = C_2 = C$, and
\begin{align}
Y_k = a X + N_k,\quad \text{for}\:\: k=1,2,
\end{align}
where $\mathrm{E}[|X|^2]\leq P$ and $N_k \sim \mc{CN}(0,1)$, for $k=1,2$.

\vspace{0.2cm}

\noindent The capacity $C_{G}(C)$ of this one-user Gaussian CRAN example can be obtained from Theorem~\ref{th:GaussSumCap} as the following optimization problem
\begin{align}
C_{G}(C) = \max_{\alpha_q, b_q, P_q } &\min_{\mc S\subseteq \{1,2\}}
 \Bigg\{|\mathcal{S}| [C+\sum_{q = 1}^{|\mc Q|}\alpha_q\log(1-b_q)] + \sum_{q = 1}^{|\mc Q|}\alpha_q\log \left(|\mc S^c| P_q  a^2 b_q  + 1 \right)\Bigg\}
\label{capacity-Gaussian-example-with-time-sharing}
\end{align}
where the maximization is over $0\leq b_q\leq 1$, $0\leq \alpha_q\leq 1$ and $P_q \geq 0$, such that $\sum_{q=1}^{|\mc Q|}\alpha_q = 1$  and $\sum_{q=1}^{|\mc Q|}\alpha_q P_q \leq  P$.
Due to Theorem \ref{th:SWZSumRate}, $C_{G}(C)$ is achievable with CF-JD, CF-SD and CD-SSD by using time-sharing.
\noindent Without time-sharing, i.e., $Q=\emptyset$, the capacity $C_{\text{G}}^{\text{no-ts}}(C)$ of this one-user Gaussian CRAN example is achievable with the CF-JD scheme and can be obtained easily from~\eqref{eq:GaussSumCap_EqualK}, as
\begin{align}
C_{\text{G}}^{\text{no-ts}}(C)&=\max_{0\leq b\leq 1}
\min_{\mc S\subseteq \{1,2\}}
\Bigg\{
 |\mathcal{S}| \left[C+\log(1-b)\right]
+ \log \left(|\mc S^c| P a^2 b  + 1 \right)\Bigg\}\\
&=\log\left(1+ 2  a^2 P 2^{-2C} \left(2^{2C} + a^2P - \sqrt{a^4P^2+(1+2Pa^2) 2^{2C}}\right) \right).
\label{capacity-Gaussian-example-no-time-sharing}
\end{align}

With time-sharing with, say $\mc Q = \{1,2\}$, the user can communicate at larger rates with CF-JD, as follows.  The transmission time is divided into two periods or phases, of duration ${\alpha}n$  and $(1-\alpha)n$ respectively, where $0 < \alpha < 1$. The user transmits symbols only during the first phase, with power $ P/\alpha$; and it remains silent during the second phase. The two relay nodes operate as follows. During the first phase, relay node $k$, $k=1,2$, compresses its output to the fronthaul constraint $C/\alpha$; and it remains silent during the second phase. Observe that with such transmission scheme the input constraint~\eqref{eq:powConst} and fronthaul constraints
%$\lim_{n \to \infty} (1/n)\sum_{i=1}^n C_k(i) \leq C_k$
are satisfied. Evaluating the rate-region of Theorem~\ref{th:GaussSumCap} with the choice $p_Q(1)=\alpha$, $p_Q(2)=(1-\alpha)$, $\dv K_{k,1}=P/{\alpha}$ and $\dv K_{k,2}=0$, $k=\{1,2\}$ yields in this case
\begin{align}
R_{\text{G,CF-JD}}^{\text{two-ph}}(C):= \max_{0\leq \alpha \leq 1} \max_{0\leq b\leq 1} \min_{\mc S\subseteq \{1,2\}}
\alpha \Bigg\{ |\mathcal{S}| \left[\frac{C}{\alpha}+\log(1-b)\right] + \log \left(|\mc S^c| \frac{P}{\alpha}  a^2 b  + 1 \right)\Bigg\}
\label{lower-bound-capacity-Gaussian-example-with-time-sharing}
\end{align}

\noindent Figure~\ref{fig:ExampleI} depicts the evolution of the capacity enabled with time-sharing $C_{G}(C)$, the capacity without time-sharing $C_{\text{G}}^\text{no-ts}(C)$, as well as the cut-set upper bound, for $a=1$ and $C=0.5$, as function of the user transmit power~$P$.  Also shown for comparison is the achievable rate $R_{\text{G,CF-JD}}^{\text{two-ph}}(C)$ as given by \eqref{lower-bound-capacity-Gaussian-example-with-time-sharing}, which is a lower bound on  $C_{\text{G}}(C)$. Observe that while restricting to CF-JD with two-phases might be suboptimal,  $R_{\text{G,CF-JD}}^{\text{two-ph}}(C)$ is very close to $C_{G}(C)$. As it can be seen from the figure,  the utility of time-sharing (to increase rate) is visible mainly at small average transmit power. The intuition for this gain is that, for small $P$,  the observations at the relay nodes become too noisy and the relay mostly forwards noise. It is therefore more advantageous to increase the power at $P/\alpha$ for a fraction $\alpha$ of the transmission. Accordingly, the effective compression rate is increased to $C/\alpha$, therefore reducing the compression noise. This observation is reminiscent of similar ones in~\cite{ElGamal:IT:2006} in the context of relay channels with orthogonal components and in~\cite{Kim:Allerton:2007} in the context of primitive relay channels.

%----------------------------------------
\begin{figure}[t!]
\centering
\includegraphics[width=0.65\textwidth]{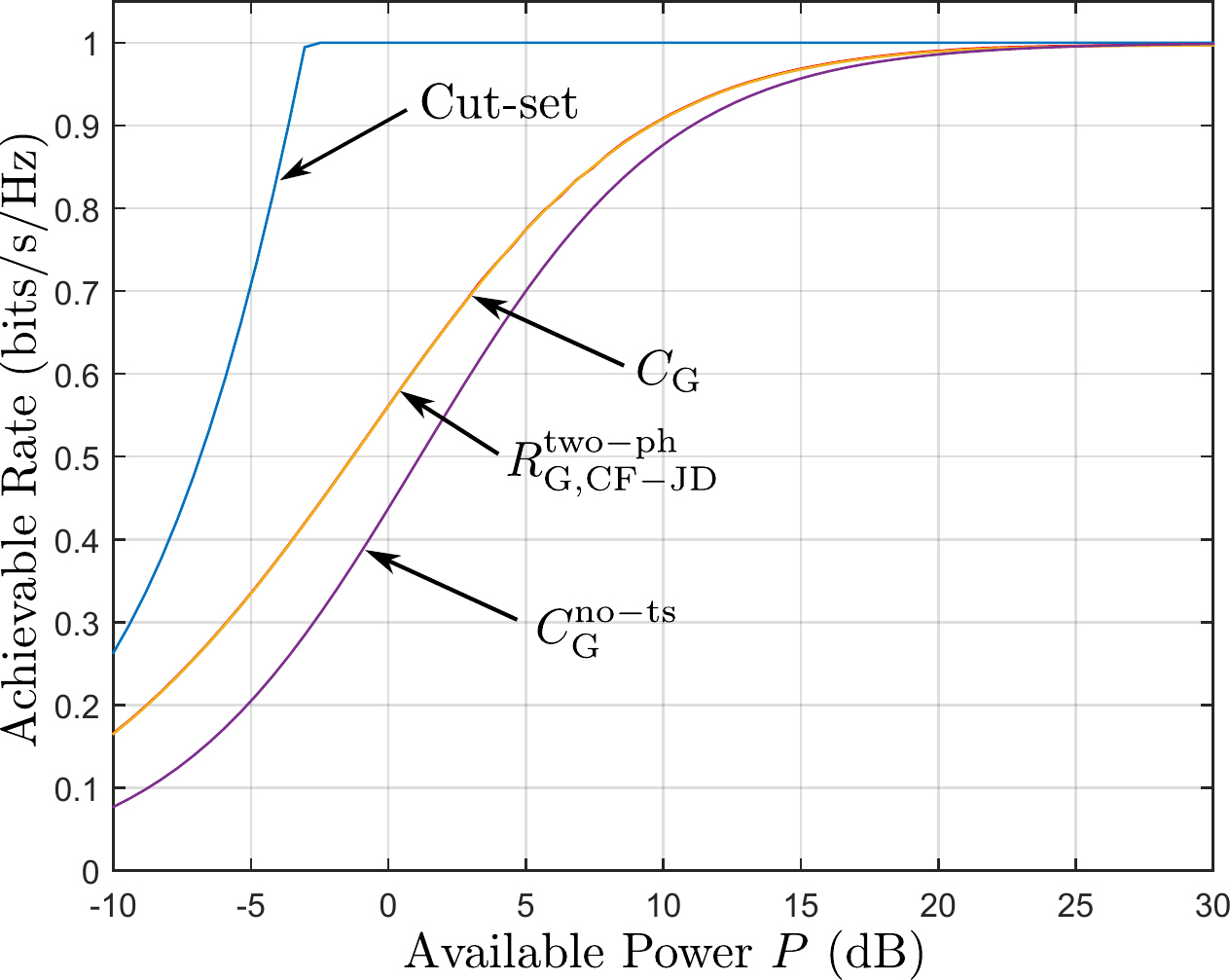}
\vspace{-2mm}
\caption{Capacity with enabled time-sharing and without time-sharing as well as the achievable rate $R_{\text{G,CF-JD}}(C)$ for the model of Example~\ref{ex:Example1}. Numerical values are $L=1$, $K=2$, $M_1=M_2=N_1=1$, $a=1$ and $C=0.5$.}
\label{fig:ExampleI}
\end{figure}

When the three schemes CF-JD, CF-SD and CF-SSD are restricted to operate without time-sharing, i.e., $Q=\emptyset$, and Gaussian signaling, CF-SD and CF-SSD might perform strictly worse than CF-JD. The rate achievable by the CF-SD scheme without time-sharing  follows by Proposition~\ref{prop:CFSD_achi}, and it is easy to show that it coincides with $C_{\text{G}}^{\text{no-ts}}(C)$ in \eqref{capacity-Gaussian-example-no-time-sharing}, i.e.,  in this example, CF-JD and CF-SD achieve the  capacity $C_{\text{G}}^{\text{no-ts}}(C)$ without time-sharing.
The rate achievable by CF-SSD without time-sharing and Gaussian test channels $U_k \sim \mc{CN}(Y_k,\sigma_k^2)$, $k\in \mc K$, can be obtained from Proposition~\ref{prop:CFSWZ_achi}, as
\begin{align}\label{lower-bound-CFSSD}
R_{\text{G,CF-SSD}}^{\text{no-ts}}(C):= \log\left( 1 + P a^2\left((1+\sigma_1^{-2})^{-1}+(1+\sigma_2^{-2})^{-1}\right)\right),
\end{align}
where  $\sigma^2_{1}= (a^2P+1)/(2^{C}-1)$ and $\sigma^2_{2}= (a^2P+1 -  a^4P^2(a^2 P + 1 +\sigma^2_{1})^{-1})/(2^{C}-1)$.

\noindent Figure~\ref{fig:ExampleII} shows  the capacities $C_{\text{G}}(C)$, $C_{\text{G}}^{\text{no-ts}}(C)$  and the achievable rates $R_{\text{G,CF-JD}}^{\text{two-ph}}(C)$ and $R_{\text{G,CF-SSD}}^{\text{no-ts}}(C)$ for $a = 1$ and $C=6$, as function of the transmit power $P$. Note that CF-SSD, when restricted not to use time-sharing performs strictly worse than CF-JD and CF-SD without time-sharing, i.e., $C_{\text{G}}^{\text{no-ts}}(C)$. Observe that in this scenario, the gains due to time-sharing are limited. This observation is in line with the fact that for large fronthaul values, the CRAN model reduces to a MAC, for which time-sharing is not required to achieve the optimal sum-rate.
\qed
\end{example}

%----------------------------------------
\begin{figure}[t!]
\centering
\includegraphics[width=0.65\textwidth]{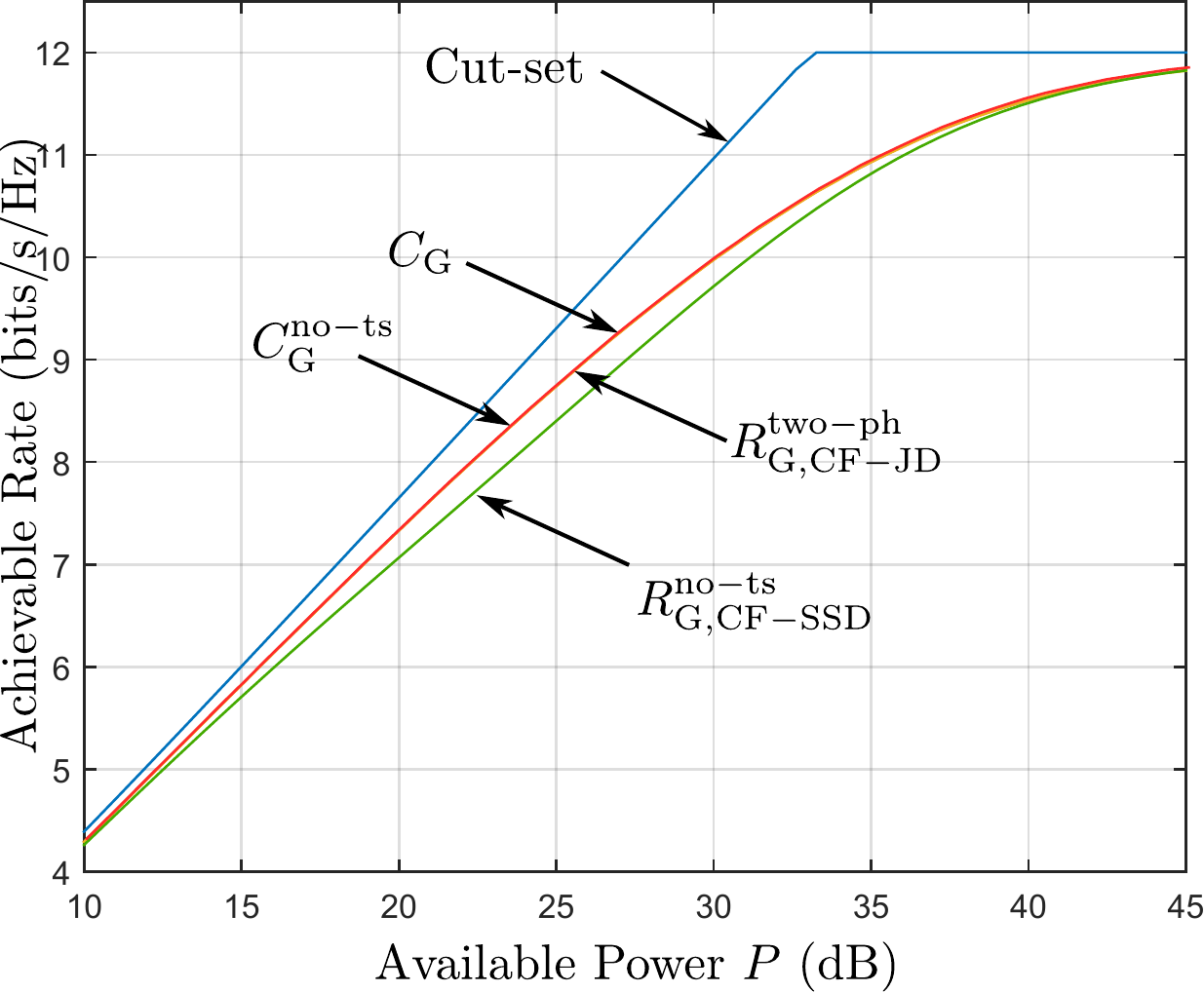}
\vspace{-2mm}
\caption{Capacity with $C_{\text{G}}(C)$ and $C_{\text{G}}^{\text{no-ts}}(C)$ and rates achievable by CF-JD, CF-SD and CF-SSD without time-sharing  for the model of Example~\ref{ex:Example1}. Numerical values are: $a=1$, $C=6$.}
\label{fig:ExampleII}
\end{figure}

The above shows that in general time-sharing increases rates for the memoryless MIMO Gaussian CRAN model described by~\eqref{mimo-gaussian-model} and~\eqref{input-covariance-matrix-Gaussian-model}, i.e., $\mc C_{\text{G}}^{ \text{no-ts}}(C_{\mc K}) \subsetneq \mc C_{G}(C_{\mc K})$. In what follows, we discuss two scenarios in which time-sharing does \textit{not} enlarge the capacity region of the model given by~\eqref{mimo-gaussian-model} and~\eqref{input-covariance-matrix-Gaussian-model}, i.e., $\mc C_{\text{G}}^{\text{no-ts}}(C_{\mc K})=\mc C_{G}(C_{\mc K})$.

\subsubsection{Case of Fixed Gaussian Codebook at User Side}

Consider the scenario in which the users are not allowed to time-share among several Gaussian codebooks, but they are constrained to use each a single, possibly different, Gaussian codebook. This may be relevant, e.g., for contexts in which signaling overhead reduction among the users and relays is of prime interest. Conceptually, this corresponds to equalizing all the covariance matrices $\{\dv K_{l,q}\}$ for given $l$ and all $q=1,\hdots,|\mc Q|$. Let
\begin{align}
\tilde{\dv K}_l := \dv K_{l,1}=\cdots = \dv K_{l,|\mc Q|} \preceq \dv K_l.
\label{eq:EqualCov}
\end{align}

The reader may wonder whether allowing the relay nodes to time-share among compression codebooks can be beneficial in this case. Note that the answer to this question is not clear a-priori, because time-sharing in general increases the Berger-Tung rate region if constraints on the rates are imposed. (See Remark~\ref{rm:Successive}). The following proposition shows that for the model described by~\eqref{mimo-gaussian-model} and~\eqref{input-covariance-matrix-Gaussian-model} this does not hold under the constraint~\eqref{eq:EqualCov}.

\begin{proposition}\label{th:GaussSumCap_EqualK}
For the model with oblivious relay processing described by~\eqref{mimo-gaussian-model} and~\eqref{input-covariance-matrix-Gaussian-model}, if~\eqref{eq:EqualCov} holds for all $l\in \mc L$ then $\mc C_{\text{G}}^{\text{no-ts}}(C_{\mc K})=\mc C_{G}(C_{\mc K})$.
\end{proposition}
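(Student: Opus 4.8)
The plan is to show that, under the constraint~\eqref{eq:EqualCov}, any rate tuple achievable in $\mc C_G(C_{\mc K})$ with a nontrivial time-sharing variable $Q$ is already achievable in $\mc C_G^{\text{no-ts}}(C_{\mc K})$ by a single averaged choice of the compression parameters $\{\dv B_k\}$. Fix a pmf $p_Q(q)$ and, for each $q$, feasible matrices $\dv B_{k,q}$ with $\dv 0 \preceq \dv B_{k,q} \preceq \mathbf{\Sigma}_k^{-1}$; since~\eqref{eq:EqualCov} forces $\dv K_{l,q} = \tilde{\dv K}_l$ for all $q$, the input covariances no longer depend on $q$, so the only $q$-dependence left in the bound~\eqref{eq:GaussSumCap} is through the $\dv B_{k,q}$. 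I would then define the averaged matrices $\bar{\dv B}_k := \mathrm{E}_Q[\dv B_{k,Q}] = \sum_q p_Q(q)\,\dv B_{k,q}$, which clearly still satisfy $\dv 0 \preceq \bar{\dv B}_k \preceq \mathbf{\Sigma}_k^{-1}$, and show that plugging $\{\bar{\dv B}_k\}$ into~\eqref{eq:GaussSumCap_EqualK} gives, term by term and for every pair $(\mc T,\mc S)$, a right-hand side no smaller than the time-shared expression in~\eqref{eq:GaussSumCap}. Since this holds for all $(\mc T,\mc S)$, the entire time-shared region is contained in the no-time-sharing region evaluated at $\{\bar{\dv B}_k\}$; the reverse inclusion $\mc C_G^{\text{no-ts}}(C_{\mc K}) \subseteq \mc C_G(C_{\mc K})$ is immediate by taking $|\mc Q| = 1$.

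The argument then reduces to two convexity/concavity facts about the two families of terms appearing in~\eqref{eq:GaussSumCap}. First, for the fronthaul-cost terms, one must show that $\dv B \mapsto \log\frac{|\mathbf{\Sigma}_k^{-1}|}{|\mathbf{\Sigma}_k^{-1} - \dv B|} = -\log|\mathbf I - \mathbf{\Sigma}_k \dv B|$ is convex in $\dv B$ on the set $\dv 0 \preceq \dv B \preceq \mathbf{\Sigma}_k^{-1}$, so that $\mathrm{E}_Q\big[\log\frac{|\mathbf{\Sigma}_k^{-1}|}{|\mathbf{\Sigma}_k^{-1}-\dv B_{k,Q}|}\big] \geq \log\frac{|\mathbf{\Sigma}_k^{-1}|}{|\mathbf{\Sigma}_k^{-1}-\bar{\dv B}_k|}$ by Jensen; this gives $\sum_{k\in\mc S}[C_k - \cdots] \leq \sum_{k\in\mc S}[C_k - \cdots]$ with the averaged matrices, i.e. the averaged cost is smaller. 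Second, for the rate-gain terms, one must show that $(\dv B_k)_{k\in\mc S^c} \mapsto \log\frac{|\sum_{k\in\mc S^c}\dv H_{k,\mc T}^H \dv B_k \dv H_{k,\mc T} + \dv K_{\mc T}^{-1}|}{|\dv K_{\mc T}^{-1}|}$ is concave in the $\dv B_k$'s — this is just concavity of $\log|\dv M + (\text{linear in }\dv B)|$ in the positive-semidefinite argument — so Jensen again gives $\mathrm{E}_Q[\log(\cdots)] \leq \log(\cdots)$ with $\bar{\dv B}_k$ in place of $\dv B_{k,Q}$. Linearity of the maps $\dv B_k \mapsto \mathbf{\Sigma}_k\dv B_k$ and $\dv B_k \mapsto \dv H_{k,\mc T}^H \dv B_k \dv H_{k,\mc T}$ is what lets the averaging commute through, and the fact that $\tilde{\dv K}_l$ (hence $\dv K_{\mc T}$) is the same across all $q$ is exactly what makes the denominators and the $\dv K_{\mc T}^{-1}$ terms match up on both sides.

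The main obstacle, and the step deserving the most care, is the convexity of the fronthaul-cost term $-\log|\mathbf I - \mathbf{\Sigma}_k\dv B|$: one has to argue it correctly on the constrained domain, e.g. by writing $\mathbf{\Sigma}_k^{-1} - \dv B = \mathbf{\Sigma}_k^{-1/2}(\mathbf I - \mathbf{\Sigma}_k^{1/2}\dv B\mathbf{\Sigma}_k^{1/2})\mathbf{\Sigma}_k^{-1/2}$ and using that $-\log|\dv Z|$ is convex in $\dv Z \succ \dv 0$ together with linearity of $\dv B \mapsto \mathbf{\Sigma}_k^{1/2}\dv B\mathbf{\Sigma}_k^{1/2}$; one should also note that the averaged point $\bar{\dv B}_k$ stays strictly below $\mathbf{\Sigma}_k^{-1}$ whenever each $\dv B_{k,q}$ does, so no degeneracy of the determinant arises (and boundary cases $\bar{\dv B}_k$ on the boundary are handled by a limiting argument since both sides are continuous). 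Everything else is Jensen's inequality applied term by term, monotonicity of $\log\det$, and bookkeeping over the subsets $(\mc T,\mc S)$; I would relegate the determinant-convexity lemmas to a line or two, citing standard matrix-analysis facts, and present the set-containment conclusion as the substance.
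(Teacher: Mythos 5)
Your proposal is correct and follows essentially the same route as the paper's proof: under \eqref{eq:EqualCov} one averages the quantization matrices over $Q$ to form $\bar{\dv B}_k=\mathrm{E}_Q[\dv B_{k,Q}]$ and applies Jensen's inequality (concavity of $\log\det$) to both the fronthaul-cost terms and the rate-gain terms, the reverse inclusion being trivial. The only step the paper makes explicit that you gloss over is passing from the common input covariance $\tilde{\dv K}_{\mc T}\preceq \dv K_{\mc T}$ to the budget matrix $\dv K_{\mc T}$ appearing in \eqref{eq:GaussSumCap_EqualK}, which is precisely the ``monotonicity of $\log\det$'' fact you allude to (there proved as $|\dv I+\dv B\dv C|\geq |\dv I+\dv A\dv C|$ for $\dv B\succeq \dv A$ via Weyl's inequality).
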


\begin{proof}
The proof of Proposition~\ref{th:GaussSumCap_EqualK} appears in Appendix~\ref{app:GaussSumCap_EqualK}.
\end{proof}

\subsubsection{High SNR Regime}

Consider again the model described by~\eqref{mimo-gaussian-model} and~\eqref{input-covariance-matrix-Gaussian-model}. Assume that for all $k \in \mc K$ the vector Gaussian noise at relay node $k$ has covariance matrix
\begin{align}
\dv\Sigma_{k}=\epsilon \tilde{\dv\Sigma}_k
\label{noise-covariance-matrix-high-SNR-regime}
\end{align}
for some $\epsilon \geq 0$ and $\tilde{\dv\Sigma}_k\succeq \dv 0$ that is independent from $\epsilon$.

The following proposition shows that, in this case, the benefit of time-sharing in terms of increasing rates vanishes for arbitrarily small $\epsilon$.

\begin{proposition}\label{th:GaussSumCap_highSNR}
For the model with oblivious relay processing described by~\eqref{mimo-gaussian-model} and~\eqref{input-covariance-matrix-Gaussian-model}, if for all $k \in \mc K$ the vector Gaussian noise at relay node $k$ has covariance matrix that can be put in the form given by~\eqref{noise-covariance-matrix-high-SNR-regime} for some $\epsilon \geq 0$ and $\tilde{\dv\Sigma}_k\succeq \dv 0$ that is independent from $\epsilon$, then the following holds: If $(R_1,\ldots, R_L)\in \mc C_{G}(C_{\mc K})$, then $(R_1-\Delta_{\epsilon},\ldots, R_L-\Delta_{\epsilon})\in \mc C_{\text{G}}^{ \text{no-ts}}(C_{\mc K})$ for some $\Delta_{\epsilon} \geq 0$. In addition
\begin{align}
\lim_{\epsilon \rightarrow 0}\Delta_{\epsilon} = 0.
\end{align}
\end{proposition}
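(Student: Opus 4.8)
The plan is to show that, as $\epsilon\to 0$, the region $\mc C_{\text{G}}(C_{\mc K})$ of Theorem~\ref{th:GaussSumCap}, which optimizes over a time-sharing distribution $p_Q$ and covariance matrices $\{\dv K_{l,q}\}$, collapses (up to a vanishing penalty $\Delta_\epsilon$) onto the no-time-sharing region $\mc C_{\text{G}}^{\text{no-ts}}(C_{\mc K})$ of~\eqref{eq:GaussSumCap_EqualK}, which uses a single covariance $\dv K_l$. The mechanism is that, in the high-SNR limit $\dv\Sigma_k=\epsilon\tilde{\dv\Sigma}_k$, the fronthaul-cost term $\mathrm{E}_Q[\log(|\dv\Sigma_k^{-1}|/|\dv\Sigma_k^{-1}-\dv B_{k,Q}|)]$ can be made to behave like a quantity that is \emph{separable and affine} in the time-sharing average, so that Jensen-type concavity arguments let one replace the time-shared strategy by its average without loss (up to $O(\epsilon\log(1/\epsilon))$ or so). Concretely, take any $(R_1,\ldots,R_L)\in\mc C_{\text{G}}(C_{\mc K})$ achieved by $(p_Q,\{\dv K_{l,q}\},\{\dv B_{k,q}\})$; I want to exhibit a single $(\{\dv K_l'\},\{\dv B_k'\})$, with $\dv K_l'=\mathrm{E}_Q[\dv K_{l,Q}]\preceq\dv K_l$, achieving rates at least $R_t-|\mc T|\Delta_\epsilon$ for every subset inequality, with $\Delta_\epsilon\to 0$.

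\textbf{Key steps.} First I would parametrize the quantization-noise level: write $\dv B_{k,q}$ in terms of a "quantization covariance" $\dv\Omega_{k,q}\succeq\dv 0$ via $\dv B_{k,q}=\dv\Sigma_k^{-1}-(\dv\Sigma_k+\dv\Omega_{k,q})^{-1}$, so that the fronthaul term becomes $\mathrm{E}_Q[\log|\dv\Sigma_k+\dv\Omega_{k,Q}|-\log|\dv\Sigma_k|]=\mathrm{E}_Q[\log|\epsilon\tilde{\dv\Sigma}_k+\dv\Omega_{k,Q}|]-\log|\epsilon\tilde{\dv\Sigma}_k|$, and the "reward" term becomes, via the matrix determinant/Woodbury identities, a log-det of the effective received signal covariance at the CP seen through relays in $\mc S^c$ with quantization noises $\dv\Omega_{k,Q}$. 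Second, I would argue that at the rate-region boundary the quantization noises are driven small — of order $\epsilon$ times a bounded matrix — because leaving $\dv\Omega_{k,q}$ bounded away from $0$ while $\dv\Sigma_k\to 0$ wastes fronthaul (the cost term blows up like $\log(1/\epsilon)$ unless $\dv\Omega_{k,q}=O(\epsilon)$); so effectively the constraint $\dv\Omega_{k,q}\preceq c_q\,\epsilon\,\tilde{\dv\Sigma}_k$ for suitable $c_q$ bounded in terms of the $C_k$'s. Third, with $\dv\Omega_{k,q}=\epsilon\dv\Psi_{k,q}$ and $\dv\Psi_{k,q}$ bounded, the fronthaul cost is $\mathrm{E}_Q[\log|\tilde{\dv\Sigma}_k+\dv\Psi_{k,Q}|-\log|\tilde{\dv\Sigma}_k|]+o(1)$ — independent of $\epsilon$ and jointly a nice function — while the reward term, being a log-det of $\sum_{k\in\mc S^c}\dv H_{k,\mc T}^H(\dv\Sigma_k+\dv\Omega_{k,Q})^{-1}\dv H_{k,\mc T}+\dv K_{\mc T,Q}^{-1}$ over $|\dv K_{\mc T,Q}^{-1}|$, has the $\epsilon\to 0$ limit in which $(\dv\Sigma_k+\dv\Omega_{k,Q})^{-1}\to\epsilon^{-1}(\tilde{\dv\Sigma}_k+\dv\Psi_{k,Q})^{-1}\to\infty$, i.e. the relays in $\mc S^c$ deliver the users' signals essentially noiselessly; one checks that in this limit the reward reduces to $\log|\dv K_{\mc T,Q}|^{-1}\cdot(\text{something rank-dependent})$ — more carefully, I would keep $\epsilon$ finite, bound the reward below by its value at $\dv\Omega_{k,q}$ replaced by $\mathrm{E}_Q[\dv\Omega_{k,Q}]$ using concavity of $\log\det$ and operator-convexity of $\dv M\mapsto(\dv A+\dv M)^{-1}$ (so Jensen goes the favorable way on the inner matrices), and similarly handle the cost term by concavity of $\log\det$ which makes $\mathrm{E}_Q[\log|\tilde{\dv\Sigma}_k+\dv\Psi_{k,Q}|]\leq\log|\tilde{\dv\Sigma}_k+\mathrm{E}_Q[\dv\Psi_{k,Q}]|$, i.e. averaging only \emph{decreases} the cost. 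Finally, I would collect the $\epsilon$-dependent remainder terms, show each is $O(\epsilon\log(1/\epsilon))$ uniformly over the (compact) feasible set of normalized parameters, set $\Delta_\epsilon$ to their sum over the worst subset $\mc T$, and conclude $\lim_{\epsilon\to 0}\Delta_\epsilon=0$ with $\dv K_l'=\mathrm{E}_Q[\dv K_{l,Q}]$ feasible by~\eqref{eq:powConst}.

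\textbf{Main obstacle.} The delicate point is the simultaneous handling of the two terms: averaging the quantization covariances helps the cost term (by concavity of $\log\det$) but one must check it does not hurt the reward term by more than $\Delta_\epsilon$, and here the two uses of Jensen pull in compatible directions only because $\dv M\mapsto(\dv A+\dv M)^{-1}$ is operator-convex (so $\mathrm{E}_Q[(\dv\Sigma_k+\dv\Omega_{k,Q})^{-1}]\succeq(\dv\Sigma_k+\mathrm{E}_Q[\dv\Omega_{k,Q}])^{-1}$, which makes the reward with averaged noise \emph{smaller}, not larger) — so a naive averaging loses reward. The resolution, which is the crux of the argument, is that this reward loss is itself $O(\epsilon)$: as $\epsilon\to 0$ the reward becomes insensitive to the precise small quantization noise because $\dv H_{k,\mc T}^H(\epsilon\tilde{\dv\Sigma}_k+\epsilon\dv\Psi_{k,Q})^{-1}\dv H_{k,\mc T}$ dominates $\dv K_{\mc T,Q}^{-1}$ by a factor $1/\epsilon$, so to leading order the reward equals $\log\det$ of a term not involving $\dv\Psi$ at all (plus a correction scaling like $\epsilon\|\dv\Psi\|$); the whole proof reduces to making this "$\epsilon$-insensitivity of the reward" quantitative and uniform, and then the cost-side Jensen inequality does the rest. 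I also need the mild bookkeeping that the user covariances, not just the quantization covariances, can be averaged — this is immediate since the users' channel contributions enter the reward as $\dv K_{\mc T,Q}^{-1}$ inside a $\log\det$, again concave, so $\mathrm{E}_Q[\log|\cdots+\dv K_{\mc T,Q}^{-1}|]\geq\log|\cdots+(\mathrm{E}_Q[\dv K_{\mc T,Q}])^{-1}|$ would need operator-convexity of the inverse once more, and I would instead carry the $\dv K_{l,q}$'s along into the limit and only average at the very end after the reward has been reduced to its leading order, where the dependence on $\dv K_{\mc T,Q}$ is through a clean concave expression.
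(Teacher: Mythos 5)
Your overall architecture is the right one and is essentially the paper's: expand the rate expressions in the regime $\dv\Sigma_k=\epsilon\tilde{\dv\Sigma}_k$, peel off the identity inside the log-det at an $O(\epsilon)$ price, exploit the resulting additive separability to de-time-share via Jensen (including $\mathrm{E}_Q[\log|\dv K_{\mc T,Q}|]\leq\log|\dv K_{\mc T}|$ through the power constraint), and let $\Delta_\epsilon$ be the maximum over $(\mc T,\mc S)$ of the collected remainders. However, two of your central claims fail, and they are exactly where the work lies. First, the assertion that ``to leading order the reward equals a log-det of a term not involving $\dv\Psi$ at all'' is false: with $\dv\Omega_{k,q}=\epsilon\dv\Psi_{k,q}$ the reward is $M\log(1/\epsilon)+\log|\dv K_{\mc T,q}|+\log\bigl|\sum_{k\in\mc S^c}\dv H_{k,\mc T}^H\bigl[\tilde{\dv\Sigma}_k^{-1}-(\tilde{\dv\Sigma}_k+\dv\Psi_{k,q})^{-1}\bigr]\dv H_{k,\mc T}\bigr|+O(\epsilon)$, so the quantization parameters survive at order one; dropping the identity only removes the interaction between $\dv K_{\mc T,q}^{-1}$ and the $1/\epsilon$ term, not the $\dv\Psi$-dependence. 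Second, your Jensen bookkeeping for the fronthaul term is backwards in your own parametrization: the per-state fronthaul term is $\log|\tilde{\dv\Sigma}_k+\dv\Psi_{k,q}|-\log|\tilde{\dv\Sigma}_k|$, and the inequality you quote, $\mathrm{E}_Q[\log|\tilde{\dv\Sigma}_k+\dv\Psi_{k,Q}|]\leq\log|\tilde{\dv\Sigma}_k+\mathrm{E}_Q[\dv\Psi_{k,Q}]|$, says that the single averaged scheme consumes \emph{more} fronthaul than the time-shared one; this discrepancy is $O(1)$, not $O(\epsilon)$ (already visible for scalars). So the accounting ``cost handled by concavity, reward loss is $O(\epsilon)$'' does not close; in your $\dv\Omega$-coordinates it is in fact the reward that averages favorably (operator concavity of $-(\tilde{\dv\Sigma}_k+\cdot)^{-1}$ combined with monotonicity and concavity of $\log\det$) and the cost that does not.

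The repair is the paper's route: perform the averaging in the $\dv B$-coordinates, i.e., on $\tilde{\dv B}_{k,q}:=\dv\Sigma_k^{1/2}\dv B_{k,q}\dv\Sigma_k^{1/2}$ with $\dv 0\preceq\tilde{\dv B}_{k,q}\preceq\dv I$, setting $\bar{\dv B}_k=\mathrm{E}_Q[\tilde{\dv B}_{k,Q}]$. In these coordinates the fronthaul term $-\log|\dv I-\tilde{\dv B}_{k,q}|$ is convex, so averaging can only reduce it, while the separated leading-order reward term $\log|\sum_{k\in\mc S^c}\dv H_{k,\mc T}^H\tilde{\dv\Sigma}_k^{-1/2}\tilde{\dv B}_{k,q}\tilde{\dv\Sigma}_k^{-1/2}\dv H_{k,\mc T}|$ is concave, so averaging can only increase it; both Jensen steps are then favorable, and the only $\epsilon$-dependent losses are the trace-of-inverse remainders from removing the identity in the outer bound, $\epsilon\sum_q p(q)\mathrm{Tr}\{\dv A_{\mc T,\mc S,q}^{-1}\}$ via $\log(1+x)\leq x$, and from reinserting it in the no-time-sharing inner bound, $\epsilon\,\mathrm{Tr}\{(\bar{\dv A}_{\mc T,\mc S}+\epsilon\dv I)^{-1}\}$ via $\log(1+x)\geq x/(1+x)$. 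Their difference, maximized over $(\mc T,\mc S)$ and normalized by $|\mc T|$, is the paper's $\Delta_\epsilon$, and it vanishes as $\epsilon\to 0$ because the matrices $\dv A_{\mc T,\mc S,q}$ do not depend on $\epsilon$. This formulation also dispenses with your uniformity step $\dv\Omega_{k,q}\preceq c_q\,\epsilon\,\tilde{\dv\Sigma}_k$, whose constants $c_q$ need not be bounded when $p(q)$ is small; no compactness of the normalized parameters is needed.
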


\begin{proof}
The proof of Proposition~\ref{th:GaussSumCap_highSNR} appears in Appendix~\ref{app:GaussSumCap_highSNR}.
\end{proof}

\subsection{Price of Non-Awareness: Bounded Rate Loss}

In this section, we show that for the memoryless MIMO Gaussian model that is given by~\eqref{mimo-gaussian-model} and~\eqref{input-covariance-matrix-Gaussian-model} allowing the relay nodes to be fully aware of the users' codebooks (i.e., the non-constrained or non-oblivious setting) increases rates by at most a bounded constant (only !).
In other terms, restricting the relay nodes not to know/utilize the users' codebooks causes only a bounded rate loss in comparison with maximum rate that would be achievable in the non-oblivious setting. The constant depends on the network size, but is independent of the channel gain matrix, powers and noise levels. The result is an easy combination of a recent improved constant-gap result of Ganguly and Lim in~\cite{GangulyKim:ISIT:2017} (which tightens further that of Zhou \textit{et al.}~\cite{DBLP:journals/corr/ZhouX0C16}, see Remark~\ref{remark-on-constant-gap-result} below) with our Theorem~\ref{th:GaussSumCap}.

 For simplicity, we focus on the case in which $N_l=N$ for all $l \in \mc L$ and $M_k=M$ for all $k \in \mc K$. For the unconstrained case (i.e., with none of the constraints of obliviousness and Gaussian signaling assumed), the capacity region of the model described by~\eqref{mimo-gaussian-model} and~\eqref{input-covariance-matrix-Gaussian-model}, which we denote hereafter as $\mc C^{\text{uncons}}(C_{\mc K})$, is still to be found in general; and an easy outer bound on it is given by the maximum-flow min-cut bound, i.e., the set $\mc R^{\text{up}}(C_{\mc K})$ of all rate tuples $(R_1,\ldots, R_L)$ for which for all $\mathcal{T}\subseteq \mathcal{L}$ and $\mathcal{S}\subseteq\mathcal{K}$
\begin{align}
\sum_{t\in\mathcal{T}}R_{t} \leq&
 \sum_{k\in \mathcal{S}}C_k+ \log \frac{
|\sum_{k\in\mathcal{S}^{c}}\mathbf{H}_{k,\mathcal{T}}^{H}
\mathbf{\Sigma}_{k}^{-1}
\mathbf{H}_{k,\mathcal{T}}+\mathbf{K}^{-1}_{\mathcal{T}}|
}{
|\mathbf{K}_{\mathcal{T}}^{-1}|
}\label{eq:Cut-set}.
\end{align}

\noindent The following theorem shows that the rate-region of Theorem~\ref{th:GaussSumCap}  is within a constant gap from  $\mc R^{\text{up}}(C_{\mc K})$, and so from the capacity region of the unconstrained setting $C^{\text{uncons}}(C_{\mc K})$.

\begin{theorem}~\label{th:ConstantGapResult}
If $(R_1,\ldots, R_L) \in \mc C^{\text{uncons}}(C_{\mc K})$, then there exists a constant $\Delta \geq 0$ such that $(R_1-\Delta,\ldots, R_L-\Delta)\in \mc C_{G}(C_{\mc K})$, with
\begin{align}
\Delta\leq \begin{cases}
\frac{N}{2}(2.45+\log(\frac{KM}{N})), &\text{for } KM > 2N,\\
\frac{KM+N}{2}&\text{for } KM\leq 2N.
\end{cases}
\end{align}
\end{theorem}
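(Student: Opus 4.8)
The plan is to sandwich the Gaussian capacity region $\mc C_G(C_{\mc K})$ between two regions: the cut-set outer bound $\mc R^{\text{up}}(C_{\mc K})$ (which contains $\mc C^{\text{uncons}}(C_{\mc K})$) from above, and a shifted copy of itself from below. Since $\mc C_G(C_{\mc K})$ is, by Theorem~\ref{th:GaussSumCap}, an explicit intersection of ``$\mc T,\mc S$''-type half-spaces, and $\mc R^{\text{up}}(C_{\mc K})$ has the same combinatorial structure, it suffices to compare the two bounds inequality-by-inequality: for each pair $(\mc T,\mc S)$ with $\mc T\neq\emptyset$, I want to exhibit a feasible choice of the parameters $p_Q$, $\{\dv K_{l,q}\}$, $\{\dv B_{k,q}\}$ in~\eqref{eq:GaussSumCap} whose right-hand side is at least (cut-set RHS)$\,-|\mc T|\Delta$. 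First I would simplify by taking $Q=\emptyset$ and $\dv K_{l}=\dv K_{l,q}$ deterministic (i.e. work directly inside $\mc C_G^{\text{no-ts}}(C_{\mc K})$ via~\eqref{eq:GaussSumCap_EqualK}), and choose the compression matrices $\dv B_k$ of the simple ``constant-fraction'' form $\dv B_k=(1-2^{-2r_k})\dv\Sigma_k^{-1}$ for suitable $r_k\ge 0$ — this is exactly the quantization-noise-level parametrization used in the constant-gap analyses of~\cite{DBLP:journals/corr/ZhouX0C16} and~\cite{GangulyKim:ISIT:2017}.

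The key step is then to invoke the improved constant-gap result of Ganguly and Lim~\cite{GangulyKim:ISIT:2017}. That result shows that the noisy-network-coding / CF-JD rate region for the Gaussian CRAN — which, under the Markov structure~\eqref{eq:MKChain_pmf} that holds here, coincides with the region of Theorem~\ref{th:GaussSumCap} specialized to $Q=\emptyset$ and Gaussian test channels — is within the stated additive gap (per rate component, scaling like $\tfrac{N}{2}(2.45+\log(KM/N))$ when $KM>2N$ and $\tfrac{KM+N}{2}$ otherwise) of the min-cut bound $\mc R^{\text{up}}(C_{\mc K})$. Concretely, I would: (i) recall from Section~\ref{ssec:Gauss} that the MIMO Gaussian model obeys~\eqref{eq:MKChain_pmf}, so Theorem~\ref{th:MK_C_Main}/Theorem~\ref{th:GaussSumCap} applies and the achievable region under Gaussian signaling is the one in~\eqref{eq:GaussSumCap_EqualK}; (ii) observe that this is precisely (a reparametrization of) the region to which the gap bound of~\cite{GangulyKim:ISIT:2017} applies; (iii) conclude that for any $(R_1,\dots,R_L)\in\mc R^{\text{up}}(C_{\mc K})\supseteq\mc C^{\text{uncons}}(C_{\mc K})$, the shifted tuple $(R_1-\Delta,\dots,R_L-\Delta)$ lies in $\mc C_G^{\text{no-ts}}(C_{\mc K})\subseteq\mc C_G(C_{\mc K})$ with $\Delta$ as stated.

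The main obstacle — really the only non-bookkeeping point — is verifying that the region in~\cite{GangulyKim:ISIT:2017} is genuinely the same object as the right-hand side of~\eqref{eq:GaussSumCap_EqualK}, i.e. that the obliviousness constraint (randomized encoding, no codebook knowledge at relays) does not cost anything \emph{beyond} what is already captured by the NNC achievable region to which the Ganguly–Lim bound is applied. This is exactly where Theorem~\ref{th:GaussSumCap} does the work: it certifies that, under Gaussian signaling, time-shared Gaussian quantization is optimal for the oblivious model, so the oblivious achievable region and the (non-oblivious) NNC region with Gaussian inputs/quantizers coincide. Once that identification is in place, the theorem follows by transitivity of the two inclusions $\mc C^{\text{uncons}}(C_{\mc K})\subseteq\mc R^{\text{up}}(C_{\mc K})$ and $\mc R^{\text{up}}(C_{\mc K})\subseteq\mc C_G(C_{\mc K})+\Delta\cdot\mathbf 1$, and the case split on $KM$ versus $2N$ is inherited verbatim from~\cite{GangulyKim:ISIT:2017}; no separate optimization is needed here.
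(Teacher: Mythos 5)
Your proposal is correct and follows essentially the same route as the paper, which gives no separate appendix proof but obtains Theorem~\ref{th:ConstantGapResult} precisely as the ``easy combination'' you describe: the Ganguly--Lim constant-gap bound for the Gaussian CF-JD/NNC region (with Gaussian inputs and quantizers, no time-sharing) against the cut-set region $\mc R^{\text{up}}(C_{\mc K})\supseteq \mc C^{\text{uncons}}(C_{\mc K})$, together with Theorem~\ref{th:GaussSumCap} certifying that this region is contained in $\mc C_{G}(C_{\mc K})$. Your identification step and the transitivity argument, with the case split inherited from \cite{GangulyKim:ISIT:2017}, are exactly what the paper intends.
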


\begin{remark}~\label{remark-on-constant-gap-result}
In the unconstrained case with no time-sharing, Zhou \textit{et al.} show in~\cite{DBLP:journals/corr/ZhouX0C16} (see Theorem 3 therein) that the rate region $\mc C_{\text{G}}^{ \text{no-ts}}(C_{\mc K})$ achievable with the scheme CF-JD with Gaussian input and Gaussian quantization is within a constant gap $\eta=(KM+N)$ of the capacity region $\mc C^{\text{uncons}}(C_{\mc K})$. Specifically, for any rate tuple $(R_1,\ldots, R_L) \in \mc R^{\text{up}}(C_{\mc K})$, the tuple $(R_1-\eta,\ldots, R_L-\eta) \in \mc C_{\text{G}}^{\text{no-ts}}(C_{\mc K})$. As we already mentioned, our Theorem~\ref{th:GaussSumCap} shows that under the constraint of Gaussian signaling and oblivious relay processing CF-JD is in fact optimal from a capacity viewpoint. Also, our Theorem~\ref{th:ConstantGapResult} improves the gap to the cut-set bound of~\cite[Theorem 3]{DBLP:journals/corr/ZhouX0C16}, which in our context can be interpreted as tightening the rate loss that is caused by restricting the relay nodes not to know/utilize the users' codebooks.
\end{remark}

\subsection{Numerical Results: Circular Symmetric Wyner Model for CRAN}

In this section, we evaluate and compare the performance of some oblivious and non-oblivious schemes for a simple Gaussian CRAN example, the circular symmetric Wyner model shown in Figure~\ref{fig:CW_C_Fix}. There are $K$ cells, with each cell containing a single-antenna user and a single antenna RU. Inter-cell interference takes place only between adjacent cells; and intra-cell and inter-cell channel gains are given by $1$ and $\gamma \in [0,1]$, respectively. All RUs have a fronthaul capacity of $C$. In this model, the channel output at RU or relay node $k\in \mc K$ is given by
\begin{align}
Y_k = \gamma X_{[k-1]_K} + X_k + \gamma X_{[k+1]_K} + N_k,
\end{align}
where $[\cdot]_K:=[\cdot]\!\!\!\mod K$,  $\mathrm{E}[|X_k|^2]\leq P$ and $N_k \sim \mc{CN}(0,1)$, for all $k\in \mc K$. For convenience, we write
$\dv Y = \dv H \dv X + \dv N$, where $\dv X=[X_1,\hdots,X_K]^T$, $\dv N=[N_1,\hdots,N_K]^T$ and $\dv H$ is the $K{\times}K$ matrix with the element $(k,l)$ given by
\begin{align}
h_{k,l} =
\begin{cases}
1& \text{if } l=k\\
\gamma & \text{if } k = [l+1]_K \text{ or } [l-1]_K\\
 0 & \text{otherwise}.
\end{cases}
\end{align}

\noindent Although seemingly simple, the capacity region of this model is still to be found in the case in which the relay nodes are not constrained, i.e., are allowed to perform non-oblivious processing. In what follows, we restrict to studying the maximum per-cell -sum-rates that are offered by various schemes, some of which use only oblivious relay processing and others not. A straightforward upper bound on those per-cell rates is given by the cut-set bound,
\begin{align}
R_{\text{cut-set}}(C) = \min\left\{  C ,\frac{1}{K} \log\det(\dv I + P  \dv H\dv H^H )  \right\}.
\end{align}

%----------------------------------------
\begin{figure}[t!]
\centering
\includegraphics[width=0.5\textwidth]{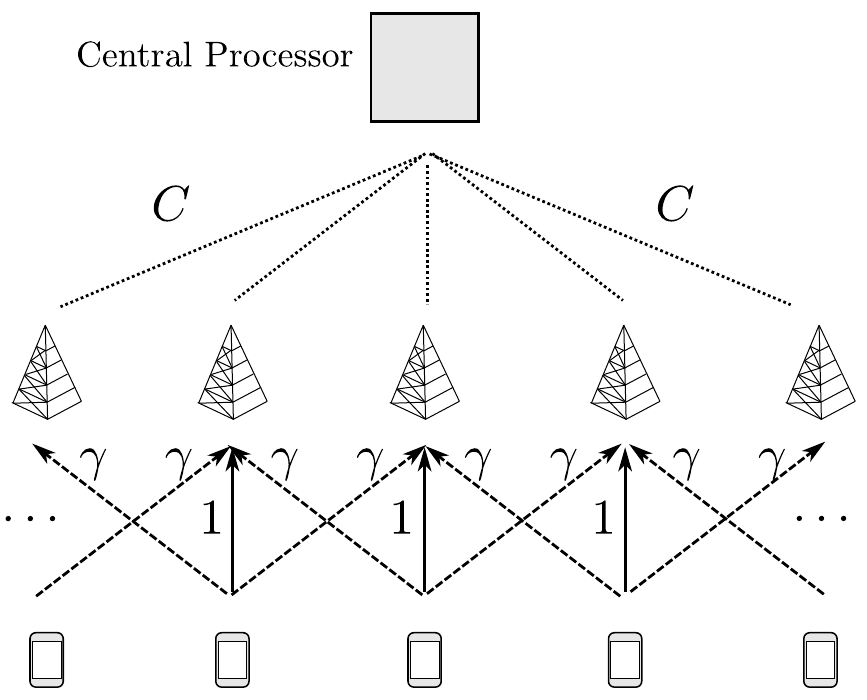}
\vspace{-2mm}
\caption{Circular Wyner model with $K$ users and $K$ relay nodes or remote units (RUs).}
\label{fig:CW_C_Fix}
\end{figure}

This  model is clearly an instance of the memoryless MIMO Gaussian CRAN described by~\eqref{mimo-gaussian-model} and~\eqref{input-covariance-matrix-Gaussian-model}. Thus, its performance, in terms of per-cell capacity  $C_{\text{G}}(C)$,  under oblivious relay processing with time-sharing of Gaussian inputs can be obtained easily using Theorem~\ref{th:GaussSumCap} as
\begin{equation}
C_{\text{G}}(C) = \max_{b_q,\alpha_q,P_q}\min_{\mc S\subseteq \mc K} \left\{ |\mc S|(C + \sum_{q=1}^{|\mc Q|}\log(1-b_q)) + \sum_{q=1}^{|\mc Q|}\log\det(\dv I + P_q b_q \dv H_{\mc S^c}\dv H_{\mc S^c}^H )  \right\}
\end{equation}
where $\dv H_{\mc S^c}$ is the submatrix of $\dv H$ composed by only those rows of $\dv H$ that are in the subset $\mc S^c$, and the maximization is over $0 \leq b_q \leq 1$, $0 \leq \alpha_q \leq 1$ and $P_q \geq 0$ such that $\sum_{q=1}^{|\mc Q|}\alpha_q = 1$ and $\sum_{q=1}^{|\mc Q|}\alpha_q P_q \leq  P$. If time-sharing is not enabled, i.e., $Q=\text{constant}$, $C_{\text{G}}(C)$ reduces to
\begin{align}
C_{\text{G}}^{\text{no-ts}}(C) = \max_{0\leq b\leq 1}\min_{\mc S\subseteq \mc K} \left\{ |\mc S|(C + \log(1-b)) + \log\det(\dv I + P b \dv H_{\mc S^c}\dv H_{\mc S^c}^H )  \right\}.
\end{align}

For non-oblivious schemes, we consider mainly the following two schemes:
\begin{enumerate}
\item \textit{Decode-and-Forward (DF)}: This scheme proposed in \cite{Sanderovich:2009:IT} is based on the fact that the output at each relay node can be seen as that of a three user Gaussian multiple-access channel. Relay $k$ decodes the message from user $k$ by either treating interference from users $[k-1]_K$ and $[k+1]_K$ as noise, or by jointly decoding all three messages. Then, it forwards message $k$ to the CP. This scheme yields the per-cell rate \cite{Sanderovich:2009:IT}
\begin{subequations}
\begin{align}
R_{\text{DF}}(C)& :=  \min\{\max\{R_{\text{tin}} , R_{\text{joint}}\}, C \}\\
R_{\text{tin}} &= \log\left(1+\frac{P}{1+2\gamma^2 P}\right)\\
R_{\text{joint}} &= \min \left\{\frac{1}{2} \log \left( 1 + 2\gamma^2 P \right), \frac{1}{3}\log ( 1 + (1+2\gamma^2 ) P) \right\}.
\end{align}
\end{subequations}

\item \textit{Compute-and-Forward (CoF)}: This scheme, proposed in \cite{Nazer:IT:2011}, is based on nested lattice codes. The users transmit using the same lattice code.  Then, each relay node  decodes
one equation (with integer-valued coefficients) that relates the users symbols and forwards that equation to the CP. If the collected $K$ equations are linearly independent, the CP can invert the system and obtain the transmitted symbols. For the studied example, this yields~\cite{Nazer:ISIT:2009}
\begin{equation}
R_{\text{CoF}}(C) =\min\left\{\max_{b_1,b_2\in \mc B} -\log\left( b_1^2 + 2b_2^2 - \frac{P(b_1+2\gamma b_2)^2}{1+P(1+2\gamma^2)} \right),C\right\},
\end{equation}
where the set $\mc B$ is given by
\vspace{-2mm}
%\begin{equation}
$\mc B = \{(b_1,b_2): b_1,b_2\in \mathds{Z}, b_1\neq 0, b_1^2+ 2b_2^2\leq 1+P(1+2\gamma^2)\}$.
%\end{equation}
\end{enumerate}

%----------------------------------------
\begin{figure}[t!]
\centering
\includegraphics[width=0.6\textwidth]{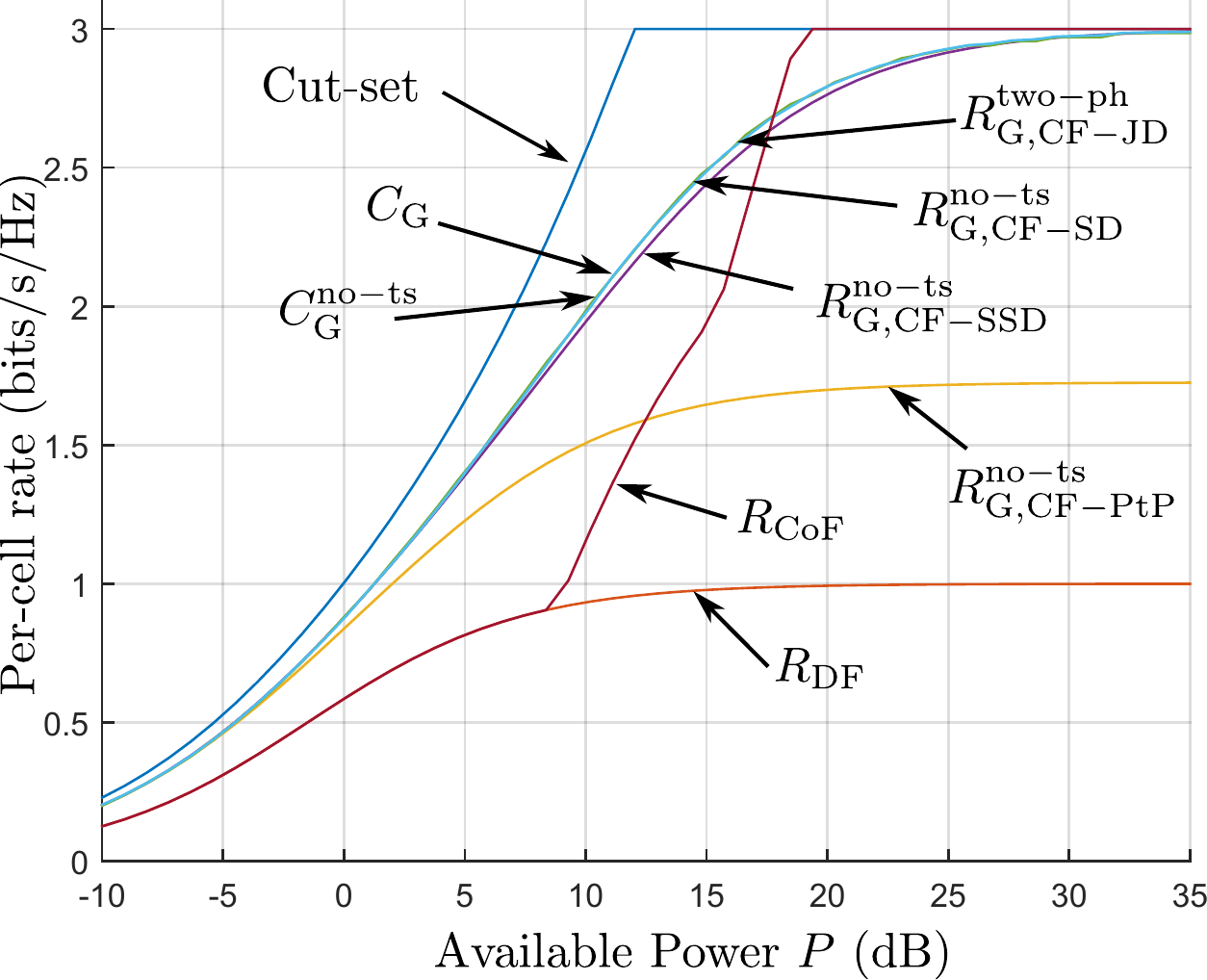}
\vspace{-2mm}
\caption{Bounds on the per-cell rate for the circular symmetric Wyner model of Figure~\ref{fig:CW_C_Fix}, as well as the per-cell capacity under time-sharing of Gaussian signaling. Numerical values are $K=3$, $\gamma = 1/\sqrt{2}$ and $C=3.5$.}
\label{fig:CW_C_Fix}
\end{figure}

\vspace{0.2cm}

\noindent For comparison reasons, we also consider the following oblivious schemes:
\begin{enumerate}
\item \textit{CF-JD with $|\mc Q|=2$:} It is easy to see that the per-cell sum-rate achievable using the CF-JD scheme with time-sharing between two phases in which users and relays are active during the first phase and remain silent in the second as in Example~\ref{ex:Example1} is given by
\begin{align}
R_{\text{G,CF-JD}}^{\text{two-ph}}(C) & \max_{0\leq \alpha\leq 1}\max_{0\leq b\leq 1}\min_{\mc S\subseteq \mc K} \alpha \cdot \left\{ |\mc S|\left(\frac{C}{\alpha}+\log(1-b)\right) + \log\det\left(\dv I + \frac{P}{\alpha} b\dv H_{\mc S^c}\dv H_{\mc S^c}^H \right)   \right\}.\nonumber
\end{align}
\item \textit{CF-SD without time-sharing:} The per-cell rate achievable by CF-SD without time-sharing and Gaussian test channels $U_k\sim \mc {CN}(Y_k, \sigma_*^2)$, $k\in \mc K$ follows from Proposition~\ref{prop:CFSD_achi} as
\begin{align}
R_{\text{CF-SD}}^{\text{no-ts}}(C) &=  \log\det(\dv I + P (1+\sigma^2_*)^{-1}\dv H \dv H^H)
\end{align}
where $\sigma_*^2$ is the unique solution of the equation $KC = \log\det(\dv I + (1/\sigma_*^2)(P \dv H \dv H^H + \dv I))$.

\item \textit{CF-SSD without time-sharing:}
The per-cell rate achievable by CF-SSD without \mbox{time-sharing} and Gaussian test channels $U_k\sim \mc {CN}(Y_k, \sigma_k^2)$, $k\in \mc K$ follows from Proposition~\ref{prop:CFSWZ_achi}, as
\begin{align}
R_{\text{CF-SSD}}^{\text{no-ts}}(C)=  \log\det(\dv I + P \dv D\dv H \dv H^H), \label{eq:CWSSD}
\end{align}
where $\dv D = \mathrm{diag}(1/(1+ \sigma_k^2), k\in \mc K)$ with $\sigma_k^2 = \sigma^2_{Y_k|Y_1^{k-1}}/(2^{C}-1)$; where $\sigma^2_{Y_k|U_1^{k-1}}$ corresponds to the MMSE error of estimating $Y_k$ from $U_1^k$,  given by
\begin{equation}
\sigma^2_{Y_k|Y_1^{k-1}} = (1+2\alpha^2)P +1 - \dv h_k \dv H_{[1:k-1]}( P \dv H_{[1:k-1]}\dv H_{[1:k-1]}^H + \dv I +\text{diag}(\sigma_{[1:k-1]}^{2}) )^{-1}\dv H_{[1:k-1]}^H \dv h_k^H.\nonumber
\end{equation}

\item \textit{CF-PtP without time-sharing:}
A simplified version of CF-SSD, to which we refer as ``Compress-and-Forward with Point-to-Point compression" (CF-PtP), is one in which each relay node compresses its channel output using standard compression, i.e., without binning. The per-cell rate $R_{\text{CF-PtP}}^{\text{no-ts}}(C)$ allowed by this scheme is given as in~\eqref{eq:CWSSD} with
\begin{equation}
\dv D = (2^{C}-1)/ (2^{C}+P(1+2\gamma^2))\dv I.
\end{equation}
\end{enumerate}

\noindent Figure~\ref{fig:CW_C_Fix} depicts the evolution of the per-cell rates obtained using the above discussed oblivious and non-oblivious schemes, as well as the cut-set bound, for numerical values $K=3$, $\gamma = 1/\sqrt{2}$ and $C=3.5$, as function of the user transmit power $P$. As it can be seen from the figure, for this example the loss in performance, in terms of per-cell rate,  that is caused by constraining the relay nodes to implement only oblivious operations is less than $1.7743$ bits.  Also, time-sharing is generally beneficial, in the sense that the discussed oblivious schemes generally suffer some (small) rate-loss when constrained not to employ time-sharing.

%----------------------------------------
\begin{figure}[t!]
\centering
\includegraphics[width=0.6\textwidth]{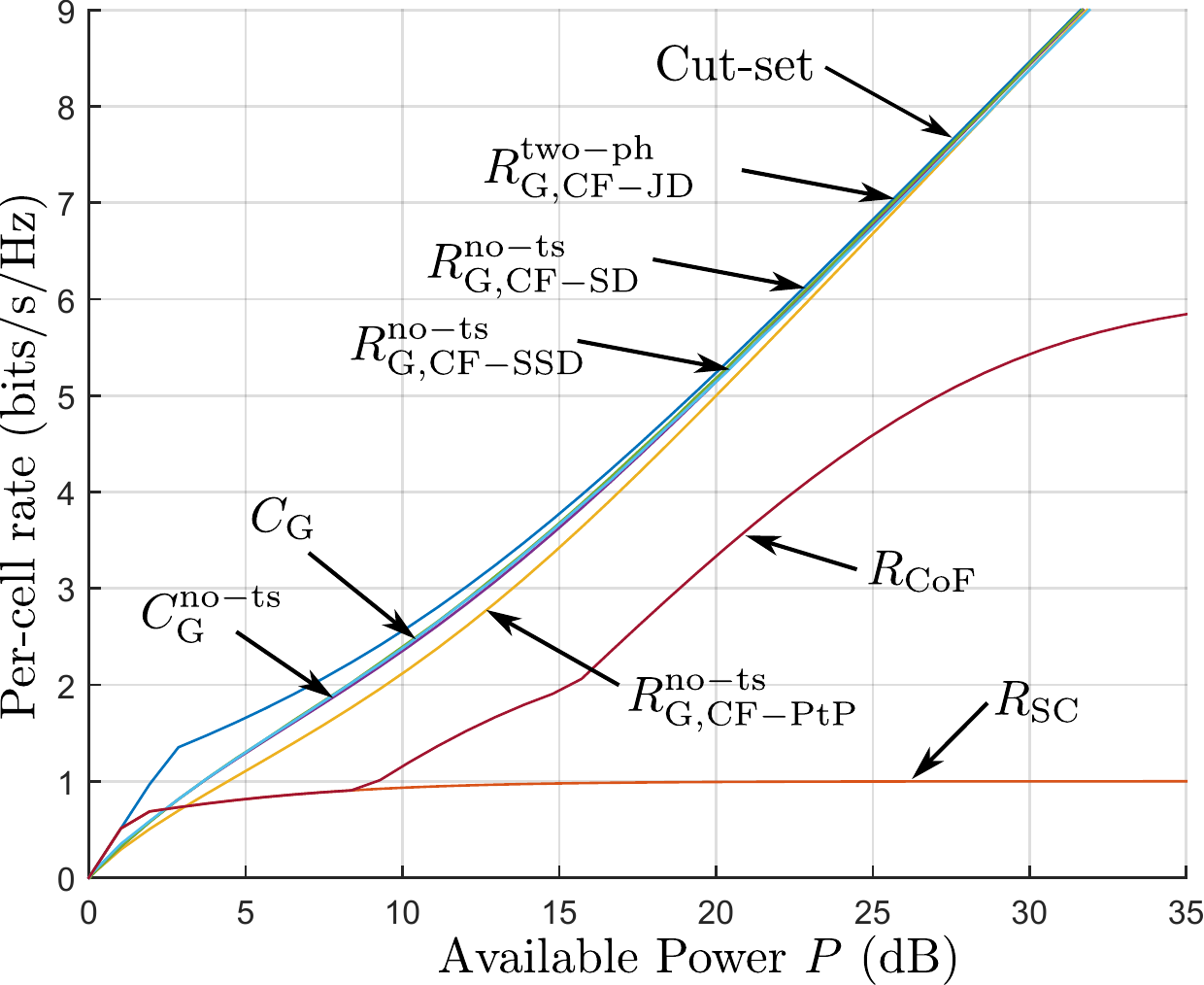}
\vspace{-2mm}
\caption{Degrees of freedom offered by some oblivious and non-oblivious for an example circular symmetric Wyner model of Figure~\ref{fig:CW_C_Fix} with $K=3$ and $\gamma = 1/\sqrt{2}$. The fronthaul capacity scales as $C=5\log_{10}(P)$.}
\label{fig:CW_C_SNR}
\end{figure}

\noindent Figure~\ref{fig:CW_C_SNR} shows how the rates offered by the aforementioned oblivious and non-oblivious schemes scale with the signal-to-noise ratio, when the available per-link fronthaul capacity scales logarithmically with the available user transmit power as $C=5\log_{10}(P)$. As the figure illustrates, in opposition with non-oblivious schemes such as decode-and-forward and compute-and-forward, oblivious processing also has the advantage to cause no loss in terms of degrees of freedom.

\vspace{0.2cm}

\section{Concluding Remarks}

We close this paper with some concluding remarks.  Our results shed light (and sometimes determine exactly) what operations the relay nodes should perform optimally in the case in which transmission over a cloud radio access network is under the framework of oblivious processing at the relays, i.e., the relays are not allowed to know, or cannot acquire, the users' codebooks. In particular, perhaps non-surprisingly, it is shown that compress-and-forward, or variants of it, generally perform well in this case, and are optimal when the outputs at the relay nodes are conditionally independent on the users inputs. Furthermore, in addition to its relevance from a practical viewpoint, restricting the relays not to know/utilize the users' codebooks causes only a bounded rate loss in comparison with the non-oblivious setting (e.g., compress-and-forward and noisy network coding perform to within a constant gap from the cut-set bound in the Gaussian case).

Finally, leveraging on the now known connection of the information bottleneck method (IB)~\cite{Tishby99theinformation} (see \cite{Witsenhausen1975CondEntropy} and \cite{EC98} for an earlier equivalent formulation of the IB problem in the context of source coding and investment theory, respectively) with the CEO source coding problem with logarithmic loss and one that can be established with the CRAN channel coding problem with oblivious relay processing, we note that the results of this paper, and the proof techniques, translate easily into analogous ones for the problem of distributed information bottleneck. In this problem, multiple sensors compress separately their observations in a manner that, collectively, the compressed signals provide as much information as possible about a remote (or \textit{hidden}) source. On this aspect, the reader may refer to~\cite{EZ-IZS18-DIB} and~\cite{aguerri2018distributed} where a full characterization of the optimal tradeoffs among the minimum description lengths at which the features are described (i.e., \textit{complexity}) and the information that the latent variables collectively preserve about the target variable (i.e., \textit{accuracy} or \textit{relevant information}) are established for both DM and Gaussian models, together with Blahut-Arimoto type algorithms and neural network based representation learning algorithms, that allow to compute optimal tradeoffs. The results of~\cite{EZ-IZS18-DIB, aguerri2018distributed} generalize those for the single user DM IB problem~\cite{Tishby99theinformation} and the single-user scalar \cite{EC98} and vector Gaussian IB problem~\cite{journals/jmlr/ChechikGTW05} to the distributed scenario. Since the single-encoder IB method has found application in various contexts of learning and prediction~\cite{C-BL06}, such as word clustering for text classification~\cite{ST01}, community detection~\cite{LYFL15}, neural code analysis~\cite{BM10}, speech recognition~\cite{HT05} and others, distributed IB methods clearly finds usefulness in the extensions of those applications to the distributed case.

Among interesting problems that are left unaddressed in this paper that of characterizing optimal input distributions under rate-constrained compression at the relays where, e.g., discrete signaling is already known to sometimes outperform Gaussian signaling for single-user Gaussian CRAN~\cite{Sanderovich2008:IT:ComViaDesc}. Alternatively, one may consider finding the worst-case noise under given input distributions, e.g., Gaussian, and rate-constrained compression at the relays. Also, although it is still not clear whether the known multiaccess/broadcast (MAC/BC) duality extends to one between uplink and downlink CRAN models in general~\cite{liu2016uplink}, it is expected that the approach of this paper be instrumental towards characterizing the effect of the relay nodes being oblivious to the actual codebooks used by the users in the downlink setting, especially in the case in which the connection between the CP and the relay nodes are not wired.

%\vspace{-0.1cm}

%\onecolumn
\begin{appendices}

\section{Proof of  Theorem~\ref{th:MK_C_Main}}\label{app:ConverseClass}

\subsection{Proof of Direct Part of Theorem~\ref{th:MK_C_Main}}
We derive the rate region achievable by the CF-JD scheme for the class of DM CRAN models satisfying \eqref{eq:MKChain_pmf} using the inner bound derived in Theorem~\ref{th:NNC_all_MK_inner} for the general DM CRAN model.
It follows from Theorem~\ref{th:NNC_all_MK_inner} that the rate region in Theorem~\ref{th:MK_C_Main} is achievable by noting that for the class of DM CRAN models satisfying \eqref{eq:MKChain_pmf}, we have
\begin{align}
I(Y_{\mc S};U_{\mathcal{S}}|X_{\mathcal{L}},U_{\mathcal{S}^c},Q)  = \sum_{s\in \mathcal{S}}I(Y_{s};U_{s}|X_{\mathcal{L}},Q),\label{eq:innerMK}
\end{align}
where \eqref{eq:innerMK} follows due to the Markov chains (given $Q$), $U_k\mkv Y_k \mkv X_{\mathcal{L}} \mkv Y_{\mathcal{K}/k}-U_{\mathcal{K}/k}$, for $k\in \mc K$.  This concludes the proof.

\subsection{Proof of Converse Part of Theorem~\ref{th:MK_C_Main}}
Assume the rate tuple $(R_{1},\ldots,R_L)$ is achievable. Let $\mathcal{T}\subseteq \mc L$, $\mathcal{S}\subseteq \mc K$, with $\mc T,\mc S\neq \emptyset$, and $J_k:= \phi^r_k(Y_k^n,Q^n)$ be the message sent by relay $k$,  $k\in \mathcal{K}$,  $F_{\mathcal L}$ be the codebook indices, and let $\tilde{Q}:=Q^n$ be the time-sharing variable. For simplicity, let
$X_{\mathcal{L}}^n := (X_{1}^n,\ldots,X_{L}^n )$, $R_{\mathcal{T}}:= \sum_{t\in \mathcal{T}}R_t$ and $C_{\mathcal{S}}:= \sum_{k\in \mathcal{S}}C_k$.
Define
\begin{align}\label{eq:Auxiliary}
U_{i,k}:= (J_k,Y_k^{i-1}) \quad \text{and }\quad \bar{Q}_i:= (X_{\mathcal{L}}^{i-1},X_{\mathcal{L},i+1}^n,\tilde{Q}).
\end{align}

From Fano's inequality, we have with $\epsilon_n\rightarrow 0$ for $n\rightarrow \infty$, for all $\mathcal{T}\subseteq \mathcal{L}$,
\begin{align}
H(m_{\mathcal{T}}|J_{\mathcal{K}},F_{\mathcal{L}},\tilde{Q}) \leq H(m_{\mathcal{L}}|J_{\mathcal{K}},F_{\mathcal{L}},\tilde{Q})\leq n \epsilon_n.\label{eq:MK_C_Fano}
\end{align}

\noindent We start by showing the following inequality which will be instrumental in the rest of this proof.
\begin{align}
H(X^n_{\mathcal{T}}|X^n_{\mathcal{T}^c},J_{\mathcal{K}},\tilde{Q})&\leq \sum_{i=1}^nH(X_{\mathcal{T},i}|X_{\mathcal{T}^c,i},\bar{Q}_i) -nR_{\mathcal{ T}}
  := n\Gamma_{\mathcal{T}}  .\label{eq:MK_C_BoundGamma}
\end{align}
Inequality \eqref{eq:MK_C_BoundGamma} can be shown as follows.
\begin{align}
%nR_{\mathcal{T}}n&
nR_{\mathcal{T}} =& H(m_{\mathcal{T}})\label{eq:MK_C_Ineq_00}\\
=& I(m_{\mathcal{T}};J_{\mathcal{K}},F_{\mathcal{L}},\tilde{Q})+ H(m_{\mathcal{T}}|J_{\mathcal{K}},F_{\mathcal{L}},\tilde{Q})\\
=& I(m_{\mathcal{T}};J_{\mathcal{K}},F_{\mathcal{T}}|F_{\mathcal{T}^c},\tilde{Q})+ H(m_{\mathcal{T}}|J_{\mathcal{K}},F_{\mathcal{L}},\tilde{Q})\label{eq:MK_C_Ineq_0}\\
\leq& I(m_{\mathcal{T}};J_{\mathcal{K}},F_{\mathcal{T}}|F_{\mathcal{T}^c},\tilde{Q})+ n\epsilon_n\label{eq:MK_C_Fano_sub}\\
%=& H(J_{\mathcal{K}},F_{\mathcal{T}}|F_{\mathcal{T}^c},\tilde{Q}) - H(J_{\mathcal{K}},F_{\mathcal{T}}|F_{\mathcal{T}^c},m_{\mathcal{T}},\tilde{Q})+ n\epsilon_n\nonumber\\
=& H(J_{\mathcal{K}}|F_{\mathcal{T}^c},\tilde{Q})+H(F_{\mathcal{T}}|F_{\mathcal{T}^c},J_{\mathcal{K}},\tilde{Q}) + n\epsilon_n\\
&- H(F_{\mathcal{T}}|F_{\mathcal{T}^c},m_{\mathcal{T}},\tilde{Q}) - H(J_{\mathcal{K}}|F_{\mathcal{T}^c},m_{\mathcal{T}},F_{\mathcal{T}},\tilde{Q})\\
 =& I(m_{\mathcal{T}},F_{\mathcal{T}};J_{\mathcal{K}}|F_{\mathcal{T}^c},\tilde{Q})-I(F_{\mathcal{T}};J_{\mathcal{K}}|F_{\mathcal{T}^c},\tilde{Q})+ n\epsilon_n \label{eq:MK_C_indep}\\\
\leq &  I(m_{\mathcal{T}},F_{\mathcal{T}};J_{\mathcal{K}}|F_{\mathcal{T}^c},\tilde{Q})+ n\epsilon_n\\
\leq &  I(X_{\mathcal{T}}^n;J_{\mathcal{K}}|F_{\mathcal{T}^c},\tilde{Q})+ n\epsilon_n\label{eq:MK_C_dataproc}\\
 = & H(X_{\mathcal{T}}^n|F_{\mathcal{T}^c},\tilde{Q})-H(X_{\mathcal{T}}^n|F_{\mathcal{T}^c},J_{\mathcal{K}},\tilde{Q})+ n\epsilon_n\label{eq:MK_C_Ineq_10}\\
 % = & H(X_{\mathcal{T}}^n|X_{\mathcal{T}^c}^n,F_{\mathcal{T}^c},\tilde{Q})-H(X_{\mathcal{T}}^n|F_{\mathcal{T}^c},J_{\mathcal{K}},\tilde{Q})+ n\epsilon_n
  \leq & H(X_{\mathcal{T}}^n|X_{\mathcal{T}^c}^n,\tilde{Q})-H(X_{\mathcal{T}}^n|X_{\mathcal{T}^c}^n,F_{\mathcal{T}^c},J_{\mathcal{K}},\tilde{Q})+ n\epsilon_n\label{eq:MK_C_Ineq_11}\\
  = & H(X_{\mathcal{T}}^n|X_{\mathcal{T}^c}^n,\tilde{Q})-H(X_{\mathcal{T}}^n|X_{\mathcal{T}^c}^n,J_{\mathcal{K}},\tilde{Q})+ n\epsilon_n,\label{eq:MK_C_Ineq_12}
\end{align}
where \eqref{eq:MK_C_Ineq_00} follows since $m_{\mathcal{T}}$ are independent; \eqref{eq:MK_C_Ineq_0} follows since $m_{\mathcal{T}}$ is independent of $\tilde{Q}$ and $F_{\mathcal{T}^c}$; \eqref{eq:MK_C_Fano_sub} follows from \eqref{eq:MK_C_Fano}; \eqref{eq:MK_C_indep} follows since $m_{\mathcal{T}}$ is independent of $F_{\mathcal{L}}$; \eqref{eq:MK_C_dataproc} follows from the data processing inequality;\eqref{eq:MK_C_Ineq_11} follows since $X_{\mathcal{T}^c}^n,F_{\mathcal{T}^c}$ are independent from $X_{\mathcal{T}}^n$ and since conditioning reduces entropy and; \eqref{eq:MK_C_Ineq_12} follows  due to the Markov chain
\begin{align}
X_{\mathcal{T}}^n \mkv (X_{\mathcal{T}^c}^n,J_{\mathcal{K}},\tilde{Q}) \mkv F_{\mathcal{T}^c}.
\end{align}
Then, from \eqref{eq:MK_C_Ineq_12} we have \eqref{eq:MK_C_BoundGamma} as follows:
\begin{align}
H(X_{\mathcal{T}}^n|X_{\mathcal{T}^c}^n,J_{\mathcal{K}},\tilde{Q})
%&\leq  H(X_{\mathcal{T}}^n|X_{\mathcal{T}^c}^n,\tilde{Q}) -nR_{\mathcal{T}}- n\epsilon_n\\
&\leq \sum_{i=1}^{n}H(X_{\mathcal{T},i}|X_{\mathcal{T}^c}^n, X_{\mathcal{T}}^{i-1},\tilde{Q}) - nR_{\mathcal{T}}\label{eq:MK_C_Ineq_0_1}\\
&= \sum_{i=1}^{n}H(X_{\mathcal{T},i}|X_{\mathcal{T}^c,i},X_{\mathcal{L}}^{i-1},X_{\mathcal{L},i+1}^n,\tilde{Q}) - nR_{\mathcal{T}}\label{eq:MK_C_Ineq_0_2}\\
&= \sum_{i=1}^{n} H(X_{\mathcal{T},i}|X_{\mathcal{T}^c,i},\bar{Q}_i) -nR_{\mathcal{T}} = n\Gamma_{\mathcal{T}},
\end{align}
where \eqref{eq:MK_C_Ineq_0_2} is due to Lemma \ref{lem:IIDinput}.

 We pause to mention that, for a subset $\mc T \subseteq \mc L$, inequality~\ref{eq:MK_C_BoundGamma} provides a lower bound on the term in the RHS of it in terms of a conditional entropy term; and, as such, it is reminiscent of the result of~\cite[Lemma 1]{Courtade2014LogLoss} which states that for the CEO problem with logarithmic loss fidelity measure the expected distortion admits a lower bound in the form of a conditional entropy term, namely the entropy of the remote source conditioned on the CEO's inputs.

\noindent Continuing from \eqref{eq:MK_C_Ineq_12}, we have
\begin{align}
 nR_{\mathcal{T}} \leq &
 %  H(X_{\mathcal{T}}^n|X_{\mathcal{T}^c}^n,\tilde{Q})-H(X_{\mathcal{T}}^n|X_{\mathcal{T}^c}^n, J_{\mathcal{K}},\tilde{Q})+ n\epsilon_n\\
%
%=
\sum_{i=1}^n H(X_{\mathcal{T},i}|X_{\mathcal{T}^c}^n,\tilde{Q}, X_{\mathcal{T}}^{i-1})
%\nonumber\\
-H(X_{\mathcal{T},i}^n|X_{\mathcal{T}^c}^n, J_{\mathcal{K}},X_{\mathcal{T}}^{i-1},\tilde{Q})+ n\epsilon_n\\
=& \sum_{i=1}^n H(X_{\mathcal{T},i}|X_{\mathcal{T}^c}^n, \tilde{Q}, X_{\mathcal{T}}^{i-1}, X_{\mathcal{T},i+1}^{n})
%\nonumber\\&
-H(X_{\mathcal{T},i}|X_{\mathcal{T}^c}^n, J_{\mathcal{K}},X_{\mathcal{T}}^{i-1},\tilde{Q})+ n\epsilon_n\label{eq:MK_C_ineq_2}\\
\leq&
%\sum_{i=1}^n H(X_{\mathcal{T},i}|X_{\mathcal{T}^c,i},\bar{Q}_i)\label{eq:MK_C_ineq_3}\nonumber\\
%&-H(X_{\mathcal{T},i}|X_{\mathcal{T}^c,i},J_{\mathcal{K}}, Y_{\mathcal{K}}^{i-1},X_{\mathcal{L}}^{i-1},X_{\mathcal{L},i+1}^{n},\tilde{Q})+ n\epsilon_n\\
%
%=&
 \sum_{i=1}^n H(X_{\mathcal{T},i}|X_{\mathcal{T}^c,i},\bar{Q}_i)
% \nonumber\\&
 -H(X_{\mathcal{T},i}|X_{\mathcal{T}^c,i}, U_{\mathcal{K},i},\bar{Q}_{i})+ n\epsilon_n\label{eq:MK_C_ineq_3}\\
=& \sum_{i=1}^n I(X_{\mathcal{T},i};U_{\mathcal{K},i}|X_{\mathcal{T}^c,i},\bar{Q}_{i})+ n\epsilon_n\label{eq:MK_C_ineq_4},
\end{align}
where \eqref{eq:MK_C_ineq_2} follows due to Lemma \ref{lem:IIDinput};  and \eqref{eq:MK_C_ineq_3} follows since conditioning reduces entropy.

On the other hand, we have the following equality
\begin{align}
I(Y_{\mathcal{S}}^n;J_{\mathcal{S}}|X_{\mathcal{L}}^n, J_{\mathcal{S}^c},\tilde{Q})
&= \sum_{k\in\mathcal{S}}I(Y_k^n;J_{k}|X_{\mathcal{L}}^n,\tilde{Q})\label{eq:MK_C_Ineq_2_2}\\
&=\sum_{k\in\mathcal{S}}\sum_{i=1}^nI(Y_{k,i};J_k|X_{\mathcal{L}}^n,Y_{k}^{i-1},\tilde{Q})\\
&=\sum_{k\in\mathcal{S}}\sum_{i=1}^nI(Y_{k,i};J_k,Y_{k}^{i-1}|X_{\mathcal{L}}^n,\tilde{Q})\label{eq:MK_C_Ineq_2_4}\\
&=  \sum_{k\in\mathcal{S}}\sum_{i=1}^nI(Y_{k,i};U_{k,i}|X_{\mathcal{L},i},\bar{Q}_i), \label{eq:MK_C_Ineq_2_41}
\end{align}
where \eqref{eq:MK_C_Ineq_2_2} follows due to the Markov chain, for $k\in \mc K$,
\begin{align}
J_k \mkv Y_k^n \mkv X_{\mathcal{L}}^n \mkv Y_{\mathcal{S}\setminus k}^n \mkv J_{\mathcal{S}\setminus k},
\end{align}
 and since $J_k$ is a function of $Y_k^n$; and \eqref{eq:MK_C_Ineq_2_4} follows due to the Markov chain $Y_{k,i}-X_{\mathcal{L}}^n-Y_{k}^{i-1}$ which follows since the channel is memoryless.

Then, from the relay side we have, for $\mc S\neq \emptyset$
\begin{align}
%n\sum_{k\in \mathcal{S}}C_k
nC_{\mathcal{S}}\geq& \sum_{k\in \mathcal{S}}H(J_k)
\geq  H(J_{\mathcal{S}})\\
%\geq & H(J_{\mathcal{S}}|X_{\mathcal{T}^c}^n,J_{\mathcal{S}^c},\tilde{Q})\\
\geq& I(Y_{\mathcal{S}}^n;J_{\mathcal{S}}|X_{\mathcal{T}^c}^n,J_{\mathcal{S}^c},\tilde{Q})\\
=& I(X_{\mathcal{T}}^n,Y_{\mathcal{S}}^n;J_{\mathcal{S}}|X_{\mathcal{T}^c}^n,J_{\mathcal{S}^c},\tilde{Q})\label{eq:MK_C_Ineq_2_1}\\
%=& I(X_{\mathcal{T}}^n;J_{\mathcal{S}}|X_{\mathcal{T}^c}^n,J_{\mathcal{S}^c},\tilde{Q})+I(Y_{\mathcal{S}}^n;J_{\mathcal{S}}|X_{\mathcal{L}}^n, J_{\mathcal{S}^c},\tilde{Q})\\
%
=& H(X_{\mathcal{T}}^n|X_{\mathcal{T}^c}^n,J_{\mathcal{S}^c},\tilde{Q})-H(X_{\mathcal{T}}^n|X_{\mathcal{T}^c}^n, J_{\mathcal{K}},\tilde{Q})
%\nonumber\\&
+ I(Y_{\mathcal{S}}^n;J_{\mathcal{S}}|X_{\mathcal{L}}^n, J_{\mathcal{S}^c},\tilde{Q})\\
\geq& H(X_{\mathcal{T}}^n|X_{\mathcal{T}^c}^n,J_{\mathcal{S}^c},\tilde{Q})-n\Gamma_{\mathcal{T}}
%\nonumber\\&
+ I(Y_{\mathcal{S}}^n;J_{\mathcal{S}}|X_{\mathcal{L}}^n, J_{\mathcal{S}^c},\tilde{Q})\label{eq:MK_C_Ineq_2_3}\\
%
%=& \sum_{i=1}^nH(X_{\mathcal{T},i}|X_{\mathcal{T}^c}^n,J_{\mathcal{S}^c},X_{\mathcal{T}}^{i-1},\tilde{Q})-n\Gamma_{\mathcal{T}}
%%\nonumber\\&
%+I(Y_{\mathcal{S}}^n;J_{\mathcal{S}}|X_{\mathcal{L}}^n, J_{\mathcal{S}^c},\tilde{Q})\\
%%
\geq & \sum_{i=1}^nH(X_{\mathcal{T},i}|X_{\mathcal{T}^c,i},U_{\mathcal{S}^c,i},\bar{Q}_i)-n\Gamma_{\mathcal{T}}
%\nonumber\\&
+ I(Y_{\mathcal{S}}^n;J_{\mathcal{S}}|X_{\mathcal{L}}^n, J_{\mathcal{S}^c},\tilde{Q})\label{eq:MK_C_Ineq_2_5}\\
%
%=&\sum_{i=1}^nH(X_{\mathcal{T},i}| X_{\mathcal{T}^c,i},U_{\mathcal{S}^c,i},\bar{Q}_i)- H(X_{\mathcal{T},i}|X_{\mathcal{T}^c,i},\bar{Q}_i)\nonumber\\
%& +nR_{\mathcal{T}}  + \sum_{k\in\mathcal{S}}\sum_{i=1}^nI(Y_{k,i};U_{k,i}|X_{\mathcal{L},i},\bar{Q}_i)\label{eq:MK_C_Ineq_3_0}\\
%
=&nR_{\mathcal{T}} - \sum_{i=1}^nI(X_{\mathcal{T},i};U_{\mathcal{S}^c,i}|X_{\mathcal{T}^c,i},\bar{Q}_i)
%\nonumber\\ &
+\sum_{k\in\mathcal{S}}\sum_{i=1}^nI(Y_{k,i};U_{k,i}|X_{\mathcal{L},i},\bar{Q}_i),\label{eq:MK_C_Ineq_3_1}
\end{align}
where~\eqref{eq:MK_C_Ineq_2_1} follows since $J_{\mathcal{S}}$ is a function of $Y_{\mathcal{S}}^n$;  \eqref{eq:MK_C_Ineq_2_3} follows from \eqref{eq:MK_C_BoundGamma};  \eqref{eq:MK_C_Ineq_2_5} follows since conditioning reduces entropy; and \eqref{eq:MK_C_Ineq_3_1} follows from \eqref{eq:MK_C_BoundGamma} and \eqref{eq:MK_C_Ineq_2_41}.

Note that, in general,  $\bar{Q}_i$ is not independent of $X_{\mathcal{L},i},Y_{\mathcal{S},i}$, and that due to Lemma \ref{lem:IIDinput}, conditioned on $\bar{Q}_i$, we have the Markov chain
\begin{align}
U_{k,i}-Y_{k,i}-X_{\mathcal{L},i}-Y_{{\mathcal{K}\setminus k},i}-U_{{\mathcal{K}\setminus k},i}.
\end{align}

%\iffalse
Finally, we define the standard time-sharing variable $Q'$  uniformly distributed over $\{1,\ldots, n\}$, $X_{\mathcal{L}} := X_{\mathcal{L},Q'}$, $Y_k:= Y_{k,Q'}$, $U_k := U_{k,Q'}$ and $Q := [\bar{Q}_{Q'},Q']$ and
we have from \eqref{eq:MK_C_ineq_4}, for $\mc S = \emptyset$
\begin{align}
nR_{\mathcal{T}}&\leq \sum_{i=1}^n I(X_{\mathcal{T},i};U_{\mathcal{K},i}|X_{\mathcal{T},i},\bar{Q}_{i})+ n\epsilon_n\\
&= nI(X_{\mathcal{T},Q'};U_{\mathcal{K},Q'}|X_{\mathcal{T}^c,Q'},\bar{Q}_{Q'},Q')+ n\epsilon_n\\
&= nI(X_{\mathcal{T}};U_{\mathcal{K}}|X_{\mathcal{T}^c},Q)+ n\epsilon_n,
\end{align}
and similarly, from \eqref{eq:MK_C_Ineq_3_1}, we have for $\mc S\neq \emptyset$
\begin{align}
R_{\mathcal{T}}\leq
 C_{\mathcal{S}}  - \sum_{k\in\mathcal{S}}I(Y_{k};U_{k}|X_{\mathcal{L}},Q) + I(X_{\mathcal{T}};U_{\mathcal{S}^c}|X_{\mathcal{T}^c},Q).\nonumber
\end{align}
This completes the proof of Theorem~\ref{th:MK_C_Main}.\qed
%\fi

%\noindent Finally, a standard time-sharing argument completes the proof of Theorem~\ref{th:MK_C_Main}.
%\qed

%\vspace{-0.3cm}

%%%%%%%%%%%%%%%%%%%%%%%%%%%%%%%%%%%%%%%%%%%%%%%%%%%%%%%%%%%%%%%%%%%%%%%%%%%%%%%%%%%%%%%%%%%%%%%%%%%%%%%%%%%%%%%%%%%%%%%%%%%%%%%%%%%%%%%%%%%%%%%%%%%%%%%%%%%%%%%%%%%%%%%%%%
%%%%%%%%%%%%%%%%%%%%%%%%%%%%%%%%%%%%%%%%%%%%%%%%%%%%%%%%%%%%%%%%%%%%%%%%%%%%%%%%%%%%%%%%%%%%%%%%%%%%%%%%%%%%%%%%%%%%%%%%%%%%%%%%%%%%%%%%%%%%%%%%%%%%%%%%%%%%%%%%%%%%%%%%%%

\section{Proof of the Inner Bound in Theorem \ref{th:NNC_all_MK_inner}}\label{app:NNC_all_MK_inner}
The scheme CF-JD employed in Theorem~\ref{th:NNC_all_MK_inner} for the general DM CRAN model generalizes~\cite[Theorem 3]{Sanderovich2008:IT:ComViaDesc} to the case of multiple users and enabled time-sharing. An outline of this scheme is as follows.
User $l$, $l\in \mc L$, sends $X^n_l(m_l,f_l,q^n)$, where $m_l\in [1\!:\!2^{nR_l}]$ is the users' message, $f_l\in [1\!:\!|\mc X_l|^{2^{nR_l}}]$ is the codebook index and $q^n\in \mc Q$ is the time-sharing sequence. Relay node $k$, $k \in \mc K$, compresses its channel output $Y_k^n$ into a description $U_k^n$ of compression rate $\hat{R}_k$ indexed by $i_k\in [1\!:\!2^{n\hat{R}_k}]$. The descriptions are randomly binned into $2^{nC_k}$ bins, indexed by a Wyner-Ziv bin index $j_k\in [1\!:\!2^{nC_k}]$. Relay node $k$ forwards the bin index $j_k$ of the bin containing the description $U_k^n$ to the CP over the error-free link. The CP receives $(j_1,\ldots, j_K)$ and decodes jointly the compression indices and the transmitted messages, i.e., it jointly recovers the indices $(m_1,\ldots, m_L,i_1,\ldots,i_K)$. The detailed proof is as follows.

Fix $\delta>0$, non-negative rates $R_1,\ldots, R_K$ and a joint pmf that factorizes as
\begin{align}
p(q,x_{\mathcal{L}},y_{\mathcal{K}},u_{\mathcal{K}})= p(q)\prod_{l=1}^{L}p(x_l|q) ~ p(y_{\mathcal{K}}|x_{\mathcal{L}})\prod_{k=1}^{K}p(u_k|y_k,q).
\end{align}

\noindent\textit{Codebook Generation:} Randomly generate a time-sharing sequence $q^n$ according to $\prod_{i=1}^np_{Q}(q_i)$. For user $l$, $l\in \mc L$ and every codebook index $F_l$, randomly generate a codebook $\mathcal{C}_l(F_l)$ consisting of a collection of $2^{nR_l}$ independent codewords $\{x_{l}^n(m_l,f_l,q^n)\}$ indexed with $m_l\in [1\!:\!2^{nR_l}]$, where $x_{l}^n(m_l,f_l,q^n)$ has its elements generated i.i.d. according to $\prod_{i=1}^np(x_i|q_i)$.

Let non-negative rates $\hat{R}_1,\ldots, \hat{R}_K$. For relay $k$, $k \in \mc K$, generate a codebook $\mathcal{C}^r_{k}$ consisting of a collection of $2^{n\hat{R}_l}$ independent codewords $\{u_k^n(i_k)\}$ indexed with $i_k\in [1\!:\!2^{n\hat{R}_k}]$, where codeword $u_k^n(i_k)$ has its elements generated i.i.d. according to $\prod_{i=1}^np(u_i|q_i)$. Randomly and independently assign these codewords into $2^{nC_k}$ bins $\{\mathcal{B}_{j_k}\}$, indexed with $j_l\in[1\!:\!2^{nC_k}]$, and containing $2^{n(\hat{R}_k-C_k)}$ codewords each.

\noindent\textit{Encoding at User $l$:} Let $(m_1,\ldots, m_L)$ be the messages to be sent and $(f_1,\ldots,f_L)$ be the selected codebook indexes. User  $l\in \mc L$, transmits the codeword $x_{l}^n(m_l,f_l,q^n)$ in codebook $\mc C_l(f_l)$.

\noindent\textit{Oblivious processing at Relay $k$:} Relay $k$ finds an index $i_k$ such that $u_k^n(i_k)\in \mathcal{C}_k^r$ is strongly $\epsilon$-jointly typical with $y_k^n$.  Using standard arguments,  this can be accomplished with vanishing probability of error as long as $n$ is large and
\begin{align}
\hat{R}_k\geq I(Y_k;U_k|Q).\label{eq:innerCondition_1}
\end{align}
Let $j_k\in [1\!:\!2^{nC_k}]$ be the index such that $u_{k}(i_k)\in \mathcal{B}_{j_k}$. Relay $k$ then forwards the bin index $j_k$ to the CP through the error-free link.

\noindent\textit{Decoding at CP:} The CP collects all the bin indices $j_{\mathcal{K}} = (j_1,\ldots,j_K)$ from the error-free link and finds the set of indices $\hat{i}_{\mathcal{K}} = (\hat{i}_1,\ldots,\hat{i}_K)$ of the compressed vectors $u^n_{\mathcal{K}}$ and the transmitted messages $\hat{m}_{\mathcal{L}} = (\hat{m}_1,\ldots, \hat{m}_L)$, such that
\begin{align}
&(q^n,x_1^n(\hat{m}_1,f_1,q^n),\ldots,x_{L}^n(\hat{m}_L,f_L,q^n),u_{1}^n(\hat{i}_1),\ldots, u_{K}^n(\hat{i}_K))\; \text{strongly } \epsilon-\text{jointly typical},\\
&u^n_{k}(\hat{i}_k)\in \mathcal{B}_{j_k} \quad \text{for } k\in \mc K,\\
&x_l^n(\hat{m}_l,f_l,q^n)\in \mc C_{l}(f_l)\quad \text{for } l\in \mc L.
\end{align}

An error event in the decoding is declared if $\hat{m}_{\mathcal{L}}\neq m_{\mathcal{L}}$ or if there is more than one such $\hat{m}_{\mc L}$. The decoding event can be accomplished with vanishing probability of error for sufficiently long $n$ as shown next.
Assume that for some $\mathcal{T}\subseteq \mathcal{L}$ and $\mathcal{S}\subseteq \mathcal{K}$, we have $\hat{m}_{\mathcal{T}}\neq m_{\mathcal{T}}$ and $\hat{i}_{\mathcal{S}}\neq i_{\mathcal{S}}$, and  $\hat{m}_{\mathcal{T}^c}= m_{\mathcal{T}^c}$ and $\hat{i}_{\mathcal{S}^c}= i_{\mathcal{S}^c}$.
Thus, the tuple $(q^n,x^n_{\mathcal{T}}(\hat{m}_{\mathcal{T}},f_{\mathcal{T}},q^n),x^n_{\mathcal{T}^c}(\hat{m}_{\mathcal{T}^c},f_{\mathcal{T}^c},q^n),u^n_{\mathcal{S}}(i_{\mathcal{S}}),u^n_{\mathcal{S}}(i_{\mathcal{S}}^c) )$ belongs, with high probability, to a typical set with distribution
\begin{align}
\prod_{i=1}^n \left(P_Q(q_i)P_{U_{\mathcal{S}^c}, X_{\mathcal{T}^c} }(u_{\mathcal{S}^c,i}, x_{\mathcal{T}^c,i}|q_i)\prod_{s\in \mathcal{S}} P_{U_{s} }(u_{s,i}|q_i)\prod_{t\in \mathcal{T}} P_{ X_{t} }( x_{t,i}|q_i)\right).
\end{align}

The probability that the tuple $(q^n,x^n_{\mathcal{T}}(\hat{m}_{\mathcal{T}},f_{\mc T},q^n),x^n_{\mathcal{T}^c}(\hat{m}_{\mathcal{T}^c},f_{\mc T^c},q^n),u^n_{\mathcal{S}}(i_{\mathcal{S}}),u^n_{\mathcal{S}}(i_{\mathcal{S}}^c) )$ is strongly $\epsilon$-jointly typical is, according to \cite[Lemma 3]{Sanderovich2008:IT:ComViaDesc},  upper bounded by
\begin{align}
2^{-n[H(U_{\mathcal{S}^c}, X_{\mathcal{T}^c}|Q) -H( U_{\mathcal{K}},X_{\mathcal{L}}|Q) +\sum_{s\in {\mathcal{S}}}H(U_{s}|Q) +\sum_{t\in \mathcal{T}} H(X_t|Q)]}.
\end{align}

Overall, there are
%\begin{align}
$2^{n(\sum_{j\in\mathcal{T}} R_{j} + \sum_{s\in {\mathcal{S}}}[\hat{R}_s-C_s]  )}-1$, of
%\end{align}
such sequences
%$(x^n_{\mathcal{T}}(\hat{m}_{\mathcal{T}},f_{\mc T},q^n),x^n_{\mathcal{T}^c}(\hat{m}_{\mathcal{T}^c},f_{\mc T^c},q^n),u^n_{\mathcal{S}}(i_{\mathcal{S}}),u^n_{\mathcal{S}}(i_{\mathcal{S}}^c) )$
in the set $\mathcal{B}_{j_1}\times \cdots \times \mathcal{B}_{j_K}$.
This means that the CP is able to reliably decode $m_{\mathcal{L}}$ and $i_{\mathcal{K}}$, i.e., that the decoding event has vanishing probability of error for sufficiently long $n$, as long as $(R_1,\ldots, R_L)$ satisfy, for all $\mathcal{T}\subseteq \mathcal{L}$ and for all $\mathcal{S}\subseteq \mathcal{K}$,
\begin{align}
\sum_{t\in \mathcal{T}}R_{t}
&\leq \sum_{s\in \mathcal{S}}[C_s -\hat{R}_s]\!+\!H(U_{\mathcal{S}^c}, X_{\mathcal{T}^c}|Q) -H( U_{\mathcal{K}},X_{\mathcal{L}} |Q) \!+\!\sum_{s\in {\mathcal{S}}}H(U_{s}|Q) +\sum_{t\in \mathcal{T}} H(X_t|Q)\\
&\leq \sum_{s\in \mathcal{S}}[C_s +H(U_s|Y_s,Q)]+H(U_{\mathcal{S}^c}, X_{\mathcal{T}^c}|Q) -H( U_{\mathcal{K}},X_{\mathcal{L}}|Q )  +\sum_{t\in \mathcal{T}} H(X_t|Q)\label{eq:inner_1}\\
%&= \sum_{s\in \mathcal{S}}[C_s +H(U_s|Y_s,Q)]+H(U_{\mathcal{S}^c}, X_{\mathcal{T}^c}|Q) -H( U_{\mathcal{K}},X_{\mathcal{L}}|Q )  %+H(X_{\mathcal{T}}|Q)\label{eq:inner_2}\\
%&= \sum_{s\in \mathcal{S}}[C_s +H(U_s|Y_s,Q)]+H(U_{\mathcal{S}^c}, X_{\mathcal{T}^c}|Q) -H( U_{\mathcal{K}},X_{\mathcal{T}^c}|X_{\mathcal{T}},Q ) \\
%&= \sum_{s\in \mathcal{S}}[C_s +H(U_s|Y_s,Q)]+H( X_{\mathcal{T}^c}|Q)+H(U_{\mathcal{S}^c}| X_{\mathcal{T}^c},Q) - H(X_{\mathcal{T}^c}|X_{\mathcal{T}},Q ) - H( U_{\mathcal{K}}|X_{\mathcal{L}},Q ) \\
&= \sum_{s\in \mathcal{S}}[C_s +H(U_s|Y_s,Q)]+H(U_{\mathcal{S}^c}| X_{\mathcal{T}^c},Q) - H( U_{\mathcal{K}}|X_{\mathcal{L}},Q ) \label{eq:inner_3}\\
&= \sum_{s\in \mathcal{S}}[C_s +H(U_s|Y_s,Q)]+I(U_{\mathcal{S}^c};X_{\mathcal{T}}| X_{\mathcal{T}^c},Q) - H( U_{\mathcal{S}}|X_{\mathcal{L}},U_{\mathcal{S}^c},Q ) \\
&= \sum_{s\in \mathcal{S}}C_s +H(U_{\mathcal{S}}|Y_{\mathcal{S}},X_{\mathcal{L}},U_{\mathcal{S}^c},Q)+I(U_{\mathcal{S}^c};X_{\mathcal{T}}| X_{\mathcal{T}^c},Q) - H( U_{\mathcal{S}}|X_{\mathcal{L}},U_{\mathcal{S}^c},Q ) \label{eq:inner_4}\\
&= \sum_{s\in \mathcal{S}}C_s -I(U_{\mathcal{S}};Y_{\mathcal{S}}|X_{\mathcal{L}},U_{\mathcal{S}^c},Q)+I(U_{\mathcal{S}^c};X_{\mathcal{T}}| X_{\mathcal{T}^c},Q),%\label{eq:inner_4}
\end{align}
where \eqref{eq:inner_1} follows from \eqref{eq:innerCondition_1} and due to the independence of $X_t$ with $X_l$, $l\neq t$; \eqref{eq:inner_3} is due to the independence of $X_{\mathcal{T}^c}$ and $X_{\mathcal{T}}$;  and \eqref{eq:inner_4} follows due to the Markov chains (given $Q$)
$U_{k}\mkv Y_k -(X_{\mathcal{L}},U_{\mathcal{K}/k})$, $k\in \mc K$.
This completes the proof of Theorem~\ref{th:NNC_all_MK_inner}.  \qed

\section{Proof of the Outer Bound in Theorem \ref{th:NNC_all_MK_outer}}\label{app:NNC_all_MK_outer}
The proof of this theorem is along the lines of that of Theorem~\ref{th:MK_C_Main}. In the following, we outline the similar steps and highlight the differences.
Suppose the tuple $(R_{1},\ldots,R_L)$ is achievable. Let $\mathcal{T}$ be a set of $\mathcal{L}$, $\mathcal{S}$ be a non-empty set of $\mathcal{K}$,  and $J_k:= \phi^r_k(Y_k^n,q^n)$ be the message sent by relay $k\in \mathcal{K}$, and let $\tilde{Q}:=Q^n$ be the time-sharing variable. Define for $k\in \mc K$ and $i\in [1\!:\!n]$,
\begin{align}
U_{i,k}:= (J_k,Y_{\mathcal{K}}^{i-1}) \quad \text{and }\quad \bar{Q}_i:= (X_{\mathcal{L}}^{i-1},X_{\mathcal{L},i+1}^n,\tilde{Q}).
\end{align}
%Note that the definition of $U_{i,k}$ differs from that in \eqref{eq:Auxiliary}.

From Fano's inequality, we have with $\epsilon_n\rightarrow 0$ for $n\rightarrow \infty$, for all $\mc T\subseteq \mc L$,
\begin{align}
H(m_{\mathcal{T}}|J_{\mathcal{K}},F_{\mathcal{L}},\tilde{Q}) \leq H(m_{\mathcal{L}}|J_{\mathcal{K}},F_{\mathcal{L}},\tilde{Q})\leq n \epsilon_n.\label{eq:NOMK_C_Fano}
\end{align}

Similarly to \eqref{eq:MK_C_BoundGamma}, we have the following inequality
\begin{align}\label{eq:Inequ2}
H(X^n_{\mathcal{T}}|X^n_{\mathcal{T}^c},J_{\mathcal{K}},\tilde{Q})&\leq \sum_{i=1}^nH(X_{\mathcal{T},i}|X_{\mathcal{T}^c,i},\bar{Q}_i) -nR_{\mathcal{ T}}  := n\Gamma_{\mathcal{T}}.
\end{align}

Then, we have
\begin{align}
R_{\mathcal{T}} =&H(m_{\mathcal{T}})\\
\leq &  H(X_{\mathcal{T}}^n|X_{\mathcal{T}^c}^n,\tilde{Q})-H(X_{\mathcal{T}}^n|X_{\mathcal{T}^c}^n, J_{\mathcal{K}},\tilde{Q})+ n\epsilon_n\label{eq:1}\\
=& \sum_{i=1}^n H(X_{\mathcal{T},i}|X_{\mathcal{T}^c}^n,\tilde{Q}, X_{\mathcal{T}}^{i-1})
%\nonumber\\
%&
-H(X_{\mathcal{T},i}|X_{\mathcal{T}^c}^n, J_{\mathcal{K}},X_{\mathcal{T}}^{i-1},\tilde{Q})+ n\epsilon_n\\
=& \sum_{i=1}^n H(X_{\mathcal{T},i}|X_{\mathcal{T}^c}^n, \tilde{Q}, X_{\mathcal{T}}^{i-1}, X_{\mathcal{T},i+1}^{n})
-H(X_{\mathcal{T},i}|X_{\mathcal{T}^c}^n, J_{\mathcal{K}},X_{\mathcal{T}}^{i-1},\tilde{Q})+ n\epsilon_n\label{eq:NOMK_C_ineq_2}\\
\leq& \sum_{i=1}^n H(X_{\mathcal{T},i}|X_{\mathcal{T}^c,i},\bar{Q}_i)\label{eq:NOMK_C_ineq_3}
-H(X_{\mathcal{T},i}|X_{\mathcal{T}^c,i},J_{\mathcal{K}}, Y_{\mathcal{K}}^{i-1},X_{\mathcal{L}}^{i-1},X_{\mathcal{L},i+1}^{n},\tilde{Q})+ n\epsilon_n\\
=& \sum_{i=1}^n H(X_{\mathcal{T},i}|X_{\mathcal{T}^c,i},\bar{Q}_i)-H(X_{\mathcal{T},i}|X_{\mathcal{T}^c,i}, U_{\mathcal{K},i},\bar{Q}_{i})+ n\epsilon_n\\
=& \sum_{i=1}^n I(X_{\mathcal{T},i};U_{\mathcal{K},i}|X_{\mathcal{T}^c,i},\bar{Q}_{i})+ n\epsilon_n\label{eq:NOMK_C_ineq_4},
\end{align}
where \eqref{eq:1} follows as in \eqref{eq:MK_C_Ineq_00}-\eqref{eq:MK_C_Ineq_12}; \eqref{eq:NOMK_C_ineq_2} follows due to Lemma \ref{lem:IIDinput}  and \eqref{eq:NOMK_C_ineq_3} follows since conditioning reduces entropy.

On the other hand, we have the following inequality
\begin{align}
I(Y_{\mathcal{K}}^n;J_{\mathcal{S}}|X_{\mathcal{L}}^n,J_{\mathcal{S}^c},\tilde{Q})
&=\sum_{i=1}^n I(Y_{\mathcal{K},i};J_{\mathcal{S}}|X_{\mathcal{L}}^n,J_{\mathcal{S}^c},\tilde{Q},Y_{\mathcal{K}}^{i-1})\\
&=\sum_{i=1}^n I(Y_{\mathcal{K},i};J_{\mathcal{S}},Y_{\mathcal{K}}^{i-1}|X_{\mathcal{L}}^n,J_{\mathcal{S}^c},\tilde{Q},Y_{\mathcal{K}}^{i-1})\\
& = \sum_{i=1}^n I(Y_{\mathcal{K},i};U_{\mathcal{S},i}|X_{\mathcal{L},i},U_{\mathcal{S}^c,i},\bar{Q}_i)\\
& \geq \sum_{i=1}^n I(Y_{\mathcal{S},i};U_{\mathcal{S},i}|X_{\mathcal{L},i},U_{\mathcal{S}^c,i},\bar{Q}_i)\label{eq:NOMK_C_Ineq_2_41}.
\end{align}

Then, from the relay nodes side we have,
\begin{align}
C_{\mc S}\geq& \sum_{k\in \mathcal{S}}H(J_k)\label{eq:NOMK_C_Ineq_2_0001}\geq  H(J_{\mathcal{S}})\\
\geq & H(J_{\mathcal{S}}|X_{\mathcal{T}^c}^n,J_{\mathcal{S}^c},\tilde{Q})\\
\geq& I(Y_{\mathcal{K}}^n;J_{\mathcal{S}}|X_{\mathcal{T}^c}^n,J_{\mathcal{S}^c},\tilde{Q})\\
=& I(X_{\mathcal{T}}^n,Y_{\mathcal{K}}^n;J_{\mathcal{S}}|X_{\mathcal{T}^c}^n,J_{\mathcal{S}^c},\tilde{Q})\label{eq:NOMK_C_Ineq_2_1}\\
%=& I(X_{\mathcal{T}}^n;J_{\mathcal{S}}|X_{\mathcal{T}^c}^n,J_{\mathcal{S}^c},\tilde{Q})+I(Y_{\mathcal{K}}^n;J_{\mathcal{S}}|X_{\mathcal{L}}^n, J_{\mathcal{S}^c},\tilde{Q})\\
%
=& H(X_{\mathcal{T}}^n|X_{\mathcal{T}^c}^n,J_{\mathcal{S}^c},\tilde{Q})-H(X_{\mathcal{T}}^n|X_{\mathcal{T}^c}^n, J_{\mathcal{K}},\tilde{Q})
+ I(Y_{\mathcal{K}}^n;J_{\mathcal{S}}|X_{\mathcal{L}}^n, J_{\mathcal{S}^c},\tilde{Q})\\
\geq& H(X_{\mathcal{T}}^n|X_{\mathcal{T}^c}^n,J_{\mathcal{S}^c},\tilde{Q})-n\Gamma_{\mathcal{T}}
+ I(Y_{\mathcal{K}}^n;J_{\mathcal{S}}|X_{\mathcal{L}}^n, J_{\mathcal{S}^c},\tilde{Q})\label{eq:NOMK_C_Ineq_2_3}\\
%
%=& \sum_{i=1}^nH(X_{\mathcal{T},i}|X_{\mathcal{T}^c}^n,J_{\mathcal{S}^c},X_{\mathcal{T}}^{i-1},\tilde{Q})-n\Gamma_{\mathcal{T}}
%+I(Y_{\mathcal{K}}^n;J_{\mathcal{S}}|X_{\mathcal{L}}^n, J_{\mathcal{S}^c},\tilde{Q})\\
%%
\geq & \sum_{i=1}^nH(X_{\mathcal{T},i}|X_{\mathcal{T}^c,i},U_{\mathcal{S}^c,i},\bar{Q}_i)-n\Gamma_{\mathcal{T}}
+ I(Y_{\mathcal{K}}^n;J_{\mathcal{S}}|X_{\mathcal{L}}^n, J_{\mathcal{S}^c},\tilde{Q})\label{eq:NOMK_C_Ineq_2_5}\\
\geq&\sum_{i=1}^nH(X_{\mathcal{T},i}| X_{\mathcal{T}^c,i},U_{\mathcal{S}^c,i},\bar{Q}_i)- H(X_{\mathcal{T},i}|X_{\mathcal{T}^c,i},\bar{Q}_i)\\
& +nR_{\mc T}  + \sum_{i=1}^n I(Y_{\mathcal{S},i};U_{\mathcal{S},i}|X_{\mathcal{L},i},U_{\mathcal{S}^c,i},\bar{Q}_i)\label{eq:NOMK_C_Ineq_3_0}\\
=& nR_{\mc T} +\sum_{i=1}^n I(Y_{\mathcal{S},i};U_{\mathcal{S},i}|X_{\mathcal{L},i},U_{\mathcal{S}^c,i},\bar{Q}_i)
 - \sum_{i=1}^nI(X_{\mathcal{T},i};U_{\mathcal{S}^c,i}|X_{\mathcal{T}^c,i},\bar{Q}_i)\label{eq:NOMK_C_Ineq_3_1}
\end{align}
where:~\eqref{eq:NOMK_C_Ineq_2_1} follows since $J_{\mathcal{S}}$ is a function of $Y_{\mathcal{S}}^n$;  \eqref{eq:NOMK_C_Ineq_2_3} follows from \eqref{eq:Inequ2};  \eqref{eq:NOMK_C_Ineq_2_5} follows since conditioning reduces entropy; and \eqref{eq:NOMK_C_Ineq_3_0} follows from \eqref{eq:Inequ2} and \eqref{eq:NOMK_C_Ineq_2_41}.

We define the standard time-sharing variable $Q'$  uniformly distributed over $\{1,\ldots, n\}$, $X_{\mathcal{L}} := X_{\mathcal{L},Q'}$, $Y_k:= Y_{k,Q'}$, $U_k := U_{k,Q'}$ and $Q := [\bar{Q}_{Q'},Q']$ and
we have from \eqref{eq:NOMK_C_ineq_4} and \eqref{eq:NOMK_C_Ineq_3_1},
\begin{align}
n\sum_{t\in \mathcal{T}} R_t&\leq  nI(X_{\mathcal{T}};U_{\mathcal{K}}|X_{\mathcal{T}^c},Q)+ n\epsilon_n\\
n\sum_{t\in \mathcal{T}} R_t\leq
& \sum_{k\in \mathcal{S}}C_k  - I(Y_{\mathcal{S}};U_{\mathcal{S}}|X_{\mathcal{L}},U_{\mathcal{S}^c},Q) + I(X_{\mathcal{L}};U_{\mathcal{S}^c}|X_{\mathcal{T}^c},Q).
\end{align}

%\begin{align}
%n\sum_{t\in \mathcal{T}} R_t&\leq \sum_{i=1}^n I(X_{\mathcal{T},i};U_{\mathcal{K},i}|X_{\mathcal{T},i},\bar{Q}_{i})+ n\epsilon_n\\
%&= nI(X_{\mathcal{T},Q'};U_{\mathcal{K},Q'}|X_{\mathcal{T}^c,Q'},\bar{Q}_{Q'},Q')+ n\epsilon_n\\
%&= nI(X_{\mathcal{T}};U_{\mathcal{K}}|X_{\mathcal{T}^c},Q)+ n\epsilon_n
%\end{align}
%and similarly, from \eqref{eq:NOMK_C_Ineq_3_1}, we have
%\begin{align}
%n\sum_{t\in \mathcal{T}} R_t\leq
%& \sum_{k\in \mathcal{S}}C_k  - I(Y_{\mathcal{S}};U_{\mathcal{S}}|X_{\mathcal{L}},U_{\mathcal{S}^c},Q) + I(X_{\mathcal{L}};U_{\mathcal{S}^c}|X_{\mathcal{T}^c},Q).
%\end{align}

Define  $W_{Q'}:= (Y_{\mathcal{K}}^{Q'-1},Y_{\mathcal{K}, Q'+1}^n)$, and note that, due to Lemma~\ref{lem:IIDinput}, $X_{\mathcal{L},Q'}$ and $Y_{\mathcal{K},Q'}$ are independent of $W:= W_{Q'}$ when not conditioned on $F_{\mathcal{L}}$. Note that in general,  $\bar{Q}_{Q'}$ is not independent of $X_{\mathcal{L},Q'},Y_{\mathcal{K},Q'}$. Then, conditioned on $Q$, the auxiliary variables $U_{k,Q'}$ satisfies
\begin{align}
U_{k,Q'} &= (J_k,Y_{\mathcal{K}}^{Q'-1})
%& = f'_k(W_{Q'},Y_{k,Q'})\\
 = f_k(W,Y_k,Q).
\end{align}
%where $f'_k(W_{Q'},Y_{k,Q'}):=(J_k,Y_{\mathcal{K}}^{Q'-1})$.
Therefore, conditioned on $\bar{Q}_i$,  for $k\in \mc K$ the following Markov chains hold
\begin{align}
U_{k}\mkv Y_{k}\mkv (X_{\mathcal{L}},Y_{{\mathcal{K}\setminus k}}),\\
U_{k}\mkv (Y_{k},W)\mkv (X_{\mathcal{L}},Y_{{\mathcal{K}\setminus k}},U_{{\mathcal{K}\setminus k}}).
\end{align}

This completes the proof of Theorem~\ref{th:NNC_all_MK_outer}. \qed

\section{ Proof of Theorem~\ref{th:SWZSumRate}}\label{app:SWZSumRate}

Since $R_{\text{sum, CF-SSD}}\leq R_{\text{sum, CF-SD}}\leq R_{\text{sum, CF-JD}}$, to prove that CF-SD and CF-SSD achieve the same sum-rate as CF-JD, it suffices to show $\mc R_{\text{sum, CF-SSD}}\geq R_{\text{sum, CF-JD}}$.
To that end, let us define the following regions, representing the sum-rate achievable by CF-JD and CF-SSD.
\begin{definition}
Let $\mc R_{\text{sum, CF-JD}}$ be the union of tuples $(R,C_1,\ldots,C_K)$ that satisfy, for all $ \mathcal{S} \subseteq \mathcal{K}$,
\begin{align}
R\leq& \sum_{s\in \mathcal{S}} C_s-I(Y_{S};U_{\mathcal{S}}|X_{\mathcal{L}},U_{\mathcal{S}^c},Q)
+ I(X_{\mathcal{L}};U_{\mathcal{S}^c}|Q),
\end{align}
for some joint measure of the form
$p(q)\prod_{l=1}^{L}p(x_l|q) p(y_{\mathcal{K}}|x_{\mathcal{L}})\prod_{k=1}^{K}p(u_k|y_k,q)$.
\end{definition}

\begin{definition}\label{def:SSD_sum}
The region $\mc R_{\text{sum, CF-SSD}}$  is defined as the union of the regions $\mc R_{\text{sum, CF-SSD}}(\pi_r)$ over all possible permutations $\pi_r$, i.e., $\mc R_{\text{CF-SSD}} = \bigcup_{\pi_r} \mc R_{\text{CF-SSD}}(\pi_r)$, where we let $\mc R_{\text{sum, CF-SSD}}(\pi_r)$ with decoding order $(\pi_r)$ be the union of tuples $(R,C_1,\ldots,C_K)$ that satisfy, for all $ \mathcal{S} \subseteq \mathcal{K}$,
\begin{subequations}
\begin{align}
R &\leq I(X_{\mc{L}};U_{\mc K}|Q)\\
C_{\pi_r(k)} &\geq I(U_{\pi_r(k)};Y_{\pi_(k)}|U_{\pi_r(1)},\hdots,U_{\pi_r(k-1)},Q),
\end{align}
\label{eq:SSD_SUM}
\end{subequations}
for some pmf $p(q)\prod_{l=1}^L p(x_l|q)p(y_{\mathcal{K}}|x_{\mathcal{L}})\prod_{k=1}^{K}p(u_k|y_k,q)$.

\end{definition}

We prove  $\mc R_{\text{sum, CF-SSD}}\supseteq \mc R_{\text{sum, CF-JD}}$ using the properties of submodular optimization. To this end,  assume $(R_{\mathrm{sum}},C_1,\ldots, C_K)\in \mc R_{\text{sum, CF-JD}}$ for a joint pmf $p(q)\prod_{l=1}^L p(x_l|q)\prod_{k=1}^{K}p(u_k|y_k,q)$. For such pmf, let $\mc{P}_{R}\in \mathds{R}^{K}_+$  be the polytope formed by the set of pairs $(C_1,\ldots, C_K)$ that satisfy, for all $\mathcal{S} \subseteq \mathcal{K}$, \begin{align}
\sum_{s\in \mathcal{S}}C_s\geq \left[R_{\mathrm{sum}} +I(U_{\mathcal{S}};Y_{\mathcal{S}}|X_{\mathcal{L}},U_{\mathcal{S}^c},Q)-I(U_{\mathcal{S}^c};X_{\mathcal{L}}|Q)\right]^+.
\end{align}

\begin{definition}
For a  pmf $p(q)\prod_{l=1}^L p(x_l|q)\prod_{k=1}^{K}p(u_k|y_k,q)$ we say a point $(R_{\mathrm{sum}},C_1,\ldots, C_K)\in \mc R_{\text{sum, CF-JD}}$ is dominated by a point in $\mc R_{\text{sum, CF-SSD}}$ if there exists  $(R_{\mathrm{sum}}',C_1',\ldots, C_K')\in \mc R_{\text{sum, CF-SSD}}$ for which  $C_k'\leq C_k$, for $k\in \mc K$, and $R'_{\mathrm{sum}}\geq R_{\mathrm{sum}}$.
\end{definition}

To show $(R_{\mathrm{sum}},C_1,\ldots, C_K)\in \mc R_{\text{sum, CF-SSD}}$, it suffices to show that each extreme point of $\mathcal{P}_{R}$ is dominated by a point in $\mc R_{\text{sum, CF-SSD}}$ that achieves a sum-rate $\bar{R}_{\mathrm{sum}}$ satisfying $\bar{R}_{\mathrm{sum}}\geq R_{\mathrm{sum}}$.

Next, we characterize the extreme points of $\mathcal{P}_{R}$.  Let us define the set function $g:2^{\mathcal{K}}\rightarrow \mathds{R}$:
\begin{align}
g(\mathcal{S})&:= R_{\mathrm{sum}} +I(U_{\mathcal{S}};Y_{\mathcal{S}}|U_{\mathcal{S}^c},Q)-I(U_{\mathcal{K}};X_{\mathcal{L}}|Q), \quad \text{for each }\quad \mathcal{S} \subseteq \mathcal{K}.\label{eq:gFunction}
\end{align}

It can be verified that the function $g^+(\mathcal{S}):= \max\{g(\mathcal{S}),0\}$ is a supermodular function (see \cite[Appendix C, Proof of Lemma 6]{Courtade2014LogLoss}\footnote{The proof in \cite[Appendix C, Proof of Lemma 6]{Courtade2014LogLoss} showing that $g'(\mathcal{S}):= I(U_{\mathcal{S}};Y_{\mathcal{S}}|U_{\mathcal{S}^c},Q)$ is supermodular for a model satisfying $Y_{k}\mkv X_{\mathcal{L}}\mkv Y_{\mathcal{K}/k}$, also applies in our setup in which $Y_{k}\mkv X_{\mathcal{L}}\mkv Y_{\mathcal{K}/k}$ does not hold in general.}).

We can rewrite \eqref{eq:gFunction} as follows. For each $\mathcal{S} \subseteq \mathcal{K}$, we have
\begin{align}
g(\mathcal{S}) &= R_{\mathrm{sum}} +I(U_{\mathcal{S}};Y_{\mathcal{S}}|U_{\mathcal{S}^c},Q)-I(U_{\mathcal{K}};X_{\mathcal{L}}|Q)\label{eq:MKChainXY_0}\\
&= R_{\mathrm{sum}} +I(U_{\mathcal{S}};X_{\mathcal{L}},Y_{\mathcal{S}}|U_{\mathcal{S}^c},Q)-I(U_{\mathcal{S}^c};X_{\mathcal{L}}|Q)-I(U_{\mathcal{S}};X_{\mathcal{L}}|U_{S^c},Q)\label{eq:MKChainXY}\\
&= R_{\mathrm{sum}} +I(U_{\mathcal{S}};Y_{\mathcal{S}}|X_{\mathcal{L}}, U_{\mathcal{S}^c},Q)-I(U_{\mathcal{S}^c};X_{\mathcal{L}}|Q),%\\
%&=R_{\mathrm{sum}} + \sum_{s\in \mathcal{S}}I(U_s;Y_s|X_{\mathcal{L}},U_{\mathcal{S}^c},Q)-I(U_{\mathcal{S}^c};X_{\mathcal{L}}|Q),
\end{align}
where \eqref{eq:MKChainXY} follows due to the Markov chain
$U_{\mathcal{S}}\mkv Y_{\mathcal{S}} \mkv (X_{\mathcal{L}},U_{\mathcal{S}^c})$.

Then, by construction, $\mathcal{P}_R$ is equal to the set of $(C_1,\ldots,C_K)$ satisfying for all $\mathcal{S} \subseteq \mathcal{K}$,
\begin{align}
\sum_{s\in \mathcal{S}}C_s\geq g^{+}(\mathcal{S}).
\end{align}

Following the results in submodular optimization \cite[Appendix B, Proposition 6]{DBLP:journals/corr/ZhouX0C16}, we have that for a linear ordering $i_1\prec i_2\prec \cdots \prec i_K$ on the set $\mathcal{K}$, an extreme point of $\mathcal{P}_R$ can be computed as follows for $k = 1,\dots, K$:
\begin{align}
\tilde{C}_{i_k} = g^+(\{i_1,\ldots, i_k\})-g^+(\{i_1,\ldots, i_{k-1}\}).
\end{align}
All the $K!$ extreme points of $\mathcal{P}_R$ can be enumerated by looking over all linear orderings $i_1\prec i_2\prec \cdots \prec i_K$ of $\mathcal{K}$. Each ordering of $\mathcal{K}$ is analyzed in the same manner and, therefore, for notational simplicity, the only ordering we consider is the natural ordering $i_k=k$. By construction,
\begin{align}\label{eq:ExtremePoints}
\tilde{C}_k =& \left[ R_{\mathrm{sum}} + I(U_1^k;Y_1^k|X_{\mathcal{L}},U_{k+1}^{K},Q)-I(U_{k+1}^{K};X_{\mathcal{L}}|Q) \right]^+\\
&-\left[ R_{\mathrm{sum}} + I(U_1^{k-1};Y_1^{k-1}|X_{\mathcal{L}},U_{k}^{K},Q)-I(U_{k}^K;X_{\mathcal{L}}|Q) \right]^+.\nonumber
\end{align}

Let $j$ be the first index for which $\tilde{C}_j >0$, i.e., the first $k$ for which $g(\{1,\ldots, j\})>0$. Then, it follows from \eqref{eq:ExtremePoints} that
\begin{align}
\tilde{C}_k  =& I(U_1^k;Y_1^k|X_{\mathcal{L}},U_{k+1}^{K},Q)-I(U_{k+1}^{K};X_{\mathcal{L}}|Q)\\
&- I(U_1^{k-1};Y_1^{k-1}|X_{\mathcal{L}},U_{k}^{K},Q)+I(U_{k}^K;X_{\mathcal{L}}|Q)\\
=&I(U_1^k;Y_1^k|U_{k+1}^K,Q)-I(U_1^{k-1};Y_1^{k-1}|U_{k}^K,Q)\label{eq:eq1_MK_JD_SD_0}\\
=& I(Y_k;U_k|U_{k+1}^K,Q), \qquad  \text{for all } k>j, \label{eq:eq1_MK_JD_SD}
\end{align}
where \eqref{eq:eq1_MK_JD_SD_0} follows from  \ref{eq:MKChainXY_0}; and \eqref{eq:eq1_MK_JD_SD} follows due to the Markov Chain
\begin{align}
U_k \mkv Y_k \mkv (X_{\mathcal{L}},Y_{\mathcal{K}/k}, U_{\mathcal{K}/k}). \label{eq:MK_alpha}
\end{align}

Moreover, since  we must have $g(\{1,\ldots, j'\})\leq 0$ for $j'< j$, $\tilde{C}_j$ can be expressed as
\begin{align}
\tilde{C}_j&= R_{\mathrm{sum}} + I(U_1^j;Y_1^j|X_{\mathcal{L}},U_{j+1}^K,Q)-I(U_{j+1}^K;X_{\mathcal{L}}|Q) \\
&=  I(Y_j;U_j|U_{j+1}^K,Q) + g(\{1,\ldots, j-1\}),\\
& = (1-\alpha) I(Y_{j};U_{j}|U_{j+1}^{K},Q),
\end{align}
where $\alpha\in (0,1]$ is defined as
\begin{align}
\alpha&:= \frac{-g(\{1,\ldots, j-1\})}{I(Y_{j};U_{j}|U_{j+1}^{K},Q)}
=\frac{I(U_{j}^{K};X_{\mathcal{L}}|Q)-R_{\mathrm{sum}} - I(U_1^{j-1};Y_1^{j-1}|X_{\mathcal{L}},U_{j}^K,Q)}{I(Y_{j};U_{j}|U_{j+1}^{L},Q)}.\label{eq:alpha_def}
\end{align}

Therefore, for the natural ordering, the extreme point $(\tilde{C}_1,\ldots, \tilde{C}_K)$ is given as
\begin{align}
(\tilde{C}_1,\ldots, \tilde{C}_K) = \left(0,\ldots, 0, (1-\alpha)I(Y_j;U_j|U_{j+1}^K,Q),I(Y_{j+1};U_{j+1}|U_{j+2}^K,Q),\right.\\
\left. \ldots,I(Y_{K-1};U_{K-1}|U_{K},Q),  I(Y_K;U_K|Q)\right).
\end{align}

Next, we show that $(\tilde{C}_1,\ldots, \tilde{C}_K)\in \mathcal{P}_{R}$, is dominated by a point $(\bar{R}_{\mathrm{sum}},C_1,\ldots, C_K)\in \mc R_{\text{sum, CF-SDD}}$ that achieves a sum-rate $\bar{R}_{\mathrm{sum}}\geq R_{\mathrm{sum}}$.

We consider an instance of the CF-SSD in which for a fraction $\alpha$ of the time, the CP decodes $U_{j+1}^n,\ldots, U^n_K$ while relays $k=1,\ldots, j$ are inactive.  For the remaining fraction of time $(1-\alpha)$, the CP decodes $U_{j}^n,\ldots, U^n_K$ and relays $k=1,\ldots, j-1$ are inactive. Then, the CP decodes $X_{\mathcal{L}}$.

Formally, we consider the pfm $p(q')\prod_{l=1}^L p(x_l'|q')\prod_{k=1}^{K}p(u_k'|y_k,q')$ for CF-SSD as follows. Let $B$ denote a Bernoulli random variable with parameter $\alpha\in(0,1]$, i.e., $B=1$ with probability $\alpha$ and $B=0$ with probability $(1-\alpha)$. We let $\alpha$ as in \eqref{eq:alpha_def}. We consider the reverse ordering $\pi_r$ such that $\pi_r(1)=K, \pi_r(2) = K-1, \ldots, \pi_r(K) = 1$, i.e., compression is done from relay $K$ to relay $1$. Then, we let $Q'=(B,Q)$ and the tuple of random variables be distributed as
\begin{align}
(Q',X'_{\mathcal{L}}, U_{\mathcal{K}}') =
\begin{cases}
((1,Q),X_{\mathcal{L}},\emptyset,\ldots,\emptyset,U_{j+1},\ldots, U_K)&\text{if }B = 1,\\
((0,Q),X_{\mathcal{L}},\emptyset,\ldots,\emptyset,U_{j},\ldots, U_K)&\text{if }B = 0.
\end{cases}
\end{align}

From Definition~\ref{def:SSD_sum}, we have $(\bar{R}_{\mathrm{sum}},C_1,\ldots, C_K)\in \mc{R}_{\text{sum,CF-SSD}}$,  where
\begin{align}
C_k&= I(Y_k;U_k'|U_{k+1}',\ldots, U_{K}',Q'), \quad\text{for }\quad k=1,\ldots, K,\\
\bar{R}_{\mathrm{sum}}&=I(X'_{\mathcal{L}};U'_{\mathcal{K}}|Q').
\end{align}

Then, for $k=1,\ldots, j-1$, we have
\begin{align}
C_k&= I(Y_k;U_k'|U_{k+1}',\ldots, U_{K}',Q') = 0 = \tilde{C}_{k},\label{eq:Domitated_1}
\end{align}
where \eqref{eq:Domitated_1} follows since $U_k'=\emptyset$ for $k<j$ independently of $B$.
For $k=j+1,\ldots, K$, we have
\begin{align}
C_k&= I(Y_k;U_k'|U_{k+1}',\ldots, U_{K}',Q') \\
&=\alpha I(Y_k;U_k|U_{k+1},\ldots, U_{K},Q,B=1) + (1-\alpha) I(Y_k;U_k|U_{k+1},\ldots, U_{K},Q,B=0) \\
&= I(Y_k;U_k|U_{k+1},\ldots, U_{K},Q) = \tilde{C}_{k},\label{eq:Domitated_2}
\end{align}
where \eqref{eq:Domitated_2} follows since $U_k'=U_k$ for $k>j$ independently of $B$.
For $k=j$, we have
\begin{align}
C_j&= I(Y_j;U_j'|U_{j+1}',\ldots, U_{K}',Q') \\
&=\alpha I(Y_j;U_j|U_{j+1},\ldots, U_{K},Q,B=1) + (1-\alpha) I(Y_j;U_j|U_{j+1},\ldots, U_{K},Q,B=0) \\
&= (1-\alpha)I(Y_j;U_j|U_{j+1},\ldots, U_{K},Q) = \tilde{C}_{j}; \label{eq:Domitated_3}
\end{align}
where \eqref{eq:Domitated_3} follows since $U_j'=\emptyset$ for $B=1$ and $U_j'=U_j$ for $B=0$.

On the other hand, the sum-rate satisfies
\begin{align}
\bar{R}_{\mathrm{sum}}=& I(X_{\mathcal{L}}';U_{\mathcal{K}}'|Q')\\
%=&\alpha I(X_{\mathcal{L}};U_{j+1}^{K}|Q,B=1) +(1-\alpha)I(X_{\mathcal{L}};U_{j}^{K}|Q,B=0)\\
=&I(X_{\mathcal{L}};U_{j}^{K}|Q) -\alpha  I(X_{\mathcal{L}};U_{j}|U_{j+1}^{K},Q)\\
=&I(X_{\mathcal{L}};U_{j}^{K}|Q) -\frac{ I(X_{\mathcal{L}};U_{j}|U_{j+1}^{K},Q)}{I(Y_{j};U_{j}|U_{j+1}^{K},Q)}
\cdot \left[I(U_{j}^K;X_{\mathcal{L}}|Q)-R_{\mathrm{sum}} - I(U_1^{j-1};Y_1^{j-1}|X_{\mathcal{L}},U_j^K,Q)\right]\label{eq:alpha_sub}\\
%\geq &I(X_{\mathcal{L}};U_{j}^{K}|Q) - I(U_{j}^{K};X_{\mathcal{L}}|Q) +R_{\mathrm{sum}} \\
%&+ \sum_{s = 1}^{j}I(U_s;Y_s|X_{\mathcal{L}},U_{j+1},Q)\label{eq:alpha_ineq}\\
\geq &R_{\mathrm{sum}} + I(U_1^{j-1};Y_1^{j-1}|X_{\mathcal{L}},U_{j}^K,Q)\label{eq:alpha_ineq}\\
\geq &R_{\mathrm{sum}},\label{eq:Domitated_4}
\end{align}
where \eqref{eq:alpha_sub} follows from \eqref{eq:alpha_def}; and \eqref{eq:alpha_ineq} follows since $I(Y_{j};U_{j}|U_{j+1}^{L},Q)\geq I(X_{\mathcal{L}};U_{j}|U_{j+1}^{K},Q)$ due to the Markov Chain \eqref{eq:MK_alpha}.

Therefore, from \eqref{eq:Domitated_1}, \eqref{eq:Domitated_2}, \eqref{eq:Domitated_3} and \eqref{eq:Domitated_4}, it follows that the extreme point $(\tilde{C}_1,\ldots,\tilde{C}_K)\in \mathcal{P}_R$ is dominated by the point $(\bar{R}_{\mathrm{sum}},C_1,\dots, C_K)\in \mc R_{\text{sum, CF-SSD}}$ satisfying $\bar{R}_{\mathrm{sum}}\geq R_{\mathrm{sum}}$.
Similarly, considering all possible orderings, each extreme point of $\mathcal{P}_R$ can be shown to be dominated by a point $(R_{\mathrm{sum}},C_1,\ldots, C_K)$ which lies in $\mc R_{\text{sum, CF-SSD}}$ (associated to a permutation $\pi_r$).
This completes the proof of Theorem~\ref{th:SWZSumRate}. \qed

%%%%%%%%%%%%%%%%%%%%%%%%%%%%%%%%%%%%%%%%%%%%%%%%%%%%%%%%%%%%%%%%%%%%%%%%%%%%%%%%%%%%%%%%%%%%%%%%%%%%%%%%%%%%%%%%%%%%%%%%%%%%%%%%%%%%%%%%%%%%%%%%%%%%%%%%%%%%%%%%%%%%%%%%%%
%%%%%%%%%%%%%%%%%%%%%%%%%%%%%%%%%%%%%%%%%%%%%%%%%%%%%%%%%%%%%%%%%%%%%%%%%%%%%%%%%%%%%%%%%%%%%%%%%%%%%%%%%%%%%%%%%%%%%%%%%%%%%%%%%%%%%%%%%%%%%%%%%%%%%%%%%%%%%%%%%%%%%%%%%%
%%%%%%%%%%%%%%%%%%%%%%%%%%%%%%%%%%%%%%%%%%%%%%%%%%%%%%%%%%%%%%%%%%%%%%%%%%%%%%%%%%%%%%%%%%%%%%%%%%%%%%%%%%%%%%%%%%%%%%%%%%%%%%%%%%%%%%%%%%%%%%%%%%%%%%%%%%%%%%%%%%%%%%%%%%

%\onecolumn

\section{Proof of Theorem \ref{th:GaussSumCap}}\label{app:GaussSumCap}

The proof is along the lines of the proofs of \cite[Theorem 8]{Ekrem:TIT:14} and \cite[Theorem 4]{DBLP:journals/corr/ZhouX0C16}, and uses the relations between the MMSE and the Fischer information matrix developed in \cite{Ekrem:TIT:14} and a reparametrization of the MMSE matrix from \cite{DBLP:journals/corr/ZhouX0C16}, but differs from them to account for the time-sharing variable $Q$. We  will use the following lemmas.
\begin{lemma}{\cite{Dembo:IT:91,Ekrem:TIT:14}}\label{lem:FI_Ineq}
Let $(\mathbf{X,Y})$  be a pair of random vectors with pmf $p(\mathbf{x},\mathbf{y})$. We have
\begin{align}
\log|(\pi e) \mathbf{J}^{-1}(\mathbf{X}|\mathbf{U})|\leq h(\mathbf{X}|\mathbf{U})\leq \log|(\pi e) \mathrm{mmse}(\mathbf{X}|\mathbf{U})|.
\end{align}
where the Fischer information matrix of $\mathbf{X}$ conditional on $\mathbf{Y}$ is defined as
\begin{align}
\mathbf{J}(\mathbf{X}|\mathbf{Y}) := \mathrm{E}[\nabla \log p(\mathbf{X}|\mathbf{Y})\nabla\log p(\mathbf{X}|\mathbf{Y})^T],
\end{align}
and the minimum mean squared error (MMSE) matrix  is defined as
\begin{align}
\mathrm{mmse}(\mathbf{X}|\mathbf{Y}) := \mathrm{E}[(\dv X-\mathrm{E}[\dv X|\dv Y])(\dv X-\mathrm{E}[\dv X|\dv Y])^H].
\end{align}
\end{lemma}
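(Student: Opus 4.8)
The plan is to split the displayed statement into its two inequalities, reduce each to a classical \emph{unconditional} fact, and then average over the conditioning variable $\mathbf{U}$ with a single Jensen step. Two identities make this clean: from the definitions given, $\mathrm{mmse}(\mathbf{X}|\mathbf{U}) = \mathrm{E}_{\mathbf{U}}[\mathrm{Cov}(\mathbf{X}|\mathbf{U}=u)]$ and $\mathbf{J}(\mathbf{X}|\mathbf{U}) = \mathrm{E}_{\mathbf{U}}[\mathbf{J}_u]$, where $\mathbf{J}_u$ denotes the (unconditional) Fisher information matrix of the law $p(\cdot|u)$; both follow at once by writing the defining expectation as $\mathrm{E}_{\mathbf{U}}\,\mathrm{E}_{\mathbf{X}|\mathbf{U}}[\,\cdot\,]$. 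Likewise $h(\mathbf{X}|\mathbf{U})$ is, by definition, the $\mathbf{U}$-average of $h(\mathbf{X}|\mathbf{U}=u)$.

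\emph{Upper bound.} For a fixed $\mathbf{U}=u$, set $\mathbf{K}_u := \mathrm{Cov}(\mathbf{X}|\mathbf{U}=u)$. The maximum-entropy property of the circularly symmetric complex Gaussian gives $h(\mathbf{X}|\mathbf{U}=u) \le \log|(\pi e)\mathbf{K}_u|$. Averaging over $u$ and invoking concavity of $\mathbf{A}\mapsto\log\det\mathbf{A}$ on the positive-definite cone (Jensen), one gets $h(\mathbf{X}|\mathbf{U}) = \mathrm{E}_{\mathbf{U}}[h(\mathbf{X}|\mathbf{U}=u)] \le \mathrm{E}_{\mathbf{U}}[\log|(\pi e)\mathbf{K}_u|] \le \log|(\pi e)\,\mathrm{E}_{\mathbf{U}}[\mathbf{K}_u]| = \log|(\pi e)\,\mathrm{mmse}(\mathbf{X}|\mathbf{U})|$, which is the right-hand inequality.

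\emph{Lower bound.} The unconditional ingredient is Dembo's vector Fisher-information/entropy inequality, $h(\mathbf{X}) \ge \log|(\pi e)\mathbf{J}(\mathbf{X})^{-1}|$. Applying it to $p(\cdot|u)$ for each $u$ gives $h(\mathbf{X}|\mathbf{U}=u) \ge \log|(\pi e)\mathbf{J}_u^{-1}|$. Averaging over $u$ and now using \emph{convexity} of $\mathbf{A}\mapsto-\log\det\mathbf{A}$ (the opposite direction of Jensen), one obtains $h(\mathbf{X}|\mathbf{U}) \ge \mathrm{E}_{\mathbf{U}}[\log|(\pi e)\mathbf{J}_u^{-1}|] \ge \log|(\pi e)\,\mathrm{E}_{\mathbf{U}}[\mathbf{J}_u]^{-1}| = \log|(\pi e)\mathbf{J}(\mathbf{X}|\mathbf{U})^{-1}|$, the left-hand inequality. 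The main obstacle is precisely this unconditional step, which I would cite from~\cite{Dembo:IT:91}; a self-contained route is the Gaussian-perturbation argument: with $\mathbf{X}_t = \mathbf{X} + \sqrt{t}\,\mathbf{Z}$ and $\mathbf{Z}\sim\mathcal{CN}(\mathbf{0},\mathbf{I})$ independent, let $g(t) := h(\mathbf{X}_t) - \log|(\pi e)(\mathbf{J}(\mathbf{X})^{-1} + t\mathbf{I})|$. De Bruijn's identity gives $\frac{d}{dt}h(\mathbf{X}_t) = \mathrm{tr}\,\mathbf{J}(\mathbf{X}_t)$, while the matrix form of Stam's Fisher-information inequality applied to the sum yields $\mathbf{J}(\mathbf{X}_t) \preceq (\mathbf{J}(\mathbf{X})^{-1} + t\mathbf{I})^{-1}$, so $g'(t) \le 0$; a scaling argument shows $g(t)\to 0$ as $t\to\infty$ (both $h(\mathbf{X}_t)$ and the $\log\det$ term behave like $\log|(\pi e)t\mathbf{I}|$), hence $g(0) \ge 0$, which is the claim. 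The only genuinely nontrivial inputs are de Bruijn's identity and the matrix Fisher-information inequality, both classical; everything else is the maximum-entropy bound and two applications of Jensen.
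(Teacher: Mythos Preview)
Your argument is correct. The paper does not actually prove this lemma; it is stated with citations to \cite{Dembo:IT:91,Ekrem:TIT:14} and used as a black box in Appendix~\ref{app:GaussSumCap}, so there is no in-paper proof to compare against. Your decomposition into the conditional maximum-entropy bound (upper inequality) and the conditional Fisher-information bound (lower inequality), each reduced to its unconditional version and then averaged with a single $\log\det$-Jensen step, is the standard route and matches what one finds in the cited references. One minor remark: both Jensen applications use the \emph{same} fact (concavity of $\log\det$, equivalently convexity of $-\log\det$), so your phrase ``the opposite direction of Jensen'' is slightly misleading---in the lower bound you apply Jensen to $\mathbf{J}_u$ rather than to $\mathbf{K}_u$, but the inequality direction is governed by the same concavity.
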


\begin{lemma}{\cite{Ekrem:TIT:14}}\label{lemm:Brujin}
Let $\mathbf{V}_1,\mathbf{V}_2$ be an arbitrary random vector with finite second moments, and $\mathbf{N}\sim \mc{CN}(\dv 0,\boldsymbol\Lambda_N)$. Assume $(\mathbf{V}_1,\mathbf{V}_2)$ and $\mathbf{N}$ are independent. We have
\begin{align}
\mathrm{mmse}(\mathbf{V}_2|\mathbf{V}_1,\mathbf{V}_2+\mathbf{N}) = \boldsymbol\Lambda_N -\boldsymbol\Lambda_N\mathbf{J}(\mathbf{V}_2+\mathbf{N}|\mathbf{V}_1)\boldsymbol\Lambda_N.
\end{align}
\end{lemma}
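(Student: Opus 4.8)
The plan is to reduce the stated conditional identity to its unconditional version and then establish the latter via a Bayesian (Tweedie‑type) score identity for additive Gaussian noise.

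\emph{Removing the conditioning variable $\dv V_1$.} Since $\dv N$ is independent of $(\dv V_1,\dv V_2)$, conditioning on $\dv V_1=\dv v_1$ leaves $\dv N\sim\mathcal{CN}(\dv 0,\boldsymbol\Lambda_N)$ independent of $\dv V_2$; hence $\mathrm{E}[\dv V_2\mid \dv V_1=\dv v_1,\,\dv V_2+\dv N=\dv z]$ is exactly the posterior mean of $\dv V_2$ under the law $p_{\dv V_2\mid\dv V_1=\dv v_1}$ in the additive Gaussian observation $\dv V_2+\dv N$. Taking expectation over $\dv V_1$ and using the tower property, $\mathrm{mmse}(\dv V_2\mid\dv V_1,\dv V_2+\dv N)=\mathrm{E}_{\dv V_1}\big[\mathrm{mmse}(\tilde{\dv V}_2\mid\tilde{\dv V}_2+\dv N)\big]$ with $\tilde{\dv V}_2\sim p_{\dv V_2\mid\dv V_1}$; and directly from the definition in Lemma~\ref{lem:FI_Ineq}, $\dv J(\dv V_2+\dv N\mid\dv V_1)=\mathrm{E}_{\dv V_1}\big[\dv J(\tilde{\dv V}_2+\dv N)\big]$. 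So it suffices to prove the unconditional statement $\mathrm{mmse}(\dv V_2\mid\dv Z)=\boldsymbol\Lambda_N-\boldsymbol\Lambda_N\dv J(\dv Z)\boldsymbol\Lambda_N$ with $\dv Z:=\dv V_2+\dv N$, and then integrate the result over $\dv v_1$.

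\emph{The unconditional identity.} Writing $p_{\dv Z}(\dv z)=\int p_{\dv V_2}(\dv v)\,\phi_{\boldsymbol\Lambda_N}(\dv z-\dv v)\,d\dv v$ with $\phi_{\boldsymbol\Lambda_N}$ the $\mathcal{CN}(\dv 0,\boldsymbol\Lambda_N)$ density, I would differentiate under the integral sign (legitimate since $\phi_{\boldsymbol\Lambda_N}$ is smooth with Gaussian tails and $\dv V_2$ has finite second moment) and use the Gaussian identity $\nabla_{\dv z}\phi_{\boldsymbol\Lambda_N}(\dv z-\dv v)=-\boldsymbol\Lambda_N^{-1}(\dv z-\dv v)\phi_{\boldsymbol\Lambda_N}(\dv z-\dv v)$. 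This gives the Bayesian score relation $\mathrm{E}[\dv V_2\mid\dv Z=\dv z]=\dv z+\boldsymbol\Lambda_N\nabla\log p_{\dv Z}(\dv z)$, equivalently $\mathrm{E}[\dv N\mid\dv Z]=-\boldsymbol\Lambda_N\nabla\log p_{\dv Z}(\dv Z)$ because $\dv N=\dv Z-\dv V_2$. Since $\dv V_2-\mathrm{E}[\dv V_2\mid\dv Z]=-(\dv N-\mathrm{E}[\dv N\mid\dv Z])$, we obtain $\mathrm{mmse}(\dv V_2\mid\dv Z)=\mathrm{mmse}(\dv N\mid\dv Z)$, and by the orthogonality principle $\mathrm{mmse}(\dv N\mid\dv Z)=\mathrm{E}[\dv N\dv N^H]-\mathrm{E}\big[\mathrm{E}[\dv N\mid\dv Z]\,\mathrm{E}[\dv N\mid\dv Z]^H\big]=\boldsymbol\Lambda_N-\boldsymbol\Lambda_N\,\mathrm{E}\big[\nabla\log p_{\dv Z}(\dv Z)\nabla\log p_{\dv Z}(\dv Z)^H\big]\,\boldsymbol\Lambda_N=\boldsymbol\Lambda_N-\boldsymbol\Lambda_N\dv J(\dv Z)\boldsymbol\Lambda_N$. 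Averaging over $\dv V_1$ as above then yields the lemma.

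\emph{Main obstacle.} The only delicate point is the bookkeeping with the complex circularly‑symmetric conventions: one must use Wirtinger calculus (equivalently, pass to the real $2m$‑dimensional representation of each complex coordinate and invoke circular symmetry) so that $\nabla\log\phi_{\boldsymbol\Lambda_N}$ produces precisely the factor $\boldsymbol\Lambda_N^{-1}(\dv z-\dv v)$ used above and so that the resulting $\dv J(\cdot)$ coincides with the matrix defined in Lemma~\ref{lem:FI_Ineq}; one must also check the mild dominated‑convergence conditions needed to differentiate under the integral. All of this is routine, and since the real‑valued version of the identity is exactly \cite{Ekrem:TIT:14}, the argument amounts to verifying that the complex case goes through verbatim.
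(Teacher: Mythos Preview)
The paper does not prove this lemma; it is quoted verbatim from \cite{Ekrem:TIT:14} and used as a black box in the proof of Theorem~\ref{th:GaussSumCap}. Your sketch is the standard Hatsell--Nolte/Tweedie derivation (score identity for additive Gaussian noise combined with the orthogonality principle) and is correct, so there is nothing to compare against in the paper itself.
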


First, we derive an outer bound on the capacity region of the memoryless Gaussian MIMO model described by~\eqref{mimo-gaussian-model} and~\eqref{input-covariance-matrix-Gaussian-model} under time-sharing of Gaussian inputs by deriving an outer bound on the rate region given in Theorem~\ref{th:MK_C_Main} under input constraints \eqref{eq:GaussInput} and \eqref{eq:powConst}. Then, we show that this outer bound is achievable by time-sharing of Gaussian inputs.

For a fixed $Q=q$, let us define $\dv Y_{k,q} :=  \dv H_{k,\mc L}\dv X_{\mc L,q} + \dv N_k$ and $\dv X_{\mc L,q} :=[\dv X_{1}^n,\ldots, \dv X_{L}^n|Q=q]^T$. For fixed Gaussian distribution $\mathbf{X}_{\mathcal{L},q}\sim\mathcal{CN}(\mathbf{0},\mathbf{K}_{\mathcal{L},q})$ and distribution $\prod_{k=1}^{K}p(\mathbf{\hat{y}}_k|\mathbf{y}_k,q)$, let us choose $\mathbf{B}_{k,q}$ satisfying $\mathbf{0}\preceq\mathbf{B}_{k,q}\preceq\mathbf{\Sigma}_{k}^{-1}$ such that for  $k\in \mc K$,
\begin{align}
\mathrm{mmse}(\mathbf{Y}_{k,q}|\mathbf{X}_{\mathcal{L},q},\mathbf{U}_{k,q}) = \mathbf{\Sigma}_{k}-\mathbf{\Sigma}_{k}\mathbf{B}_{k,q}\mathbf{\Sigma}_{k}.\label{eq:covB}
\end{align}

Such $\mathbf{B}_{k,q}$ always exists since $ \mathbf{0}\preceq\mathrm{mmse}(\mathbf{Y}_{k,q}|\mathbf{X}_{\mathcal{L},q},\mathbf{U}_{k,q})\preceq \mathbf{\Sigma}_{k}$ for all $q\in \mathcal{Q}$ and $k\in \mc K$.

Next, we derive the following equality. For $q\in \mathcal{Q}$, and for all $\mathcal{T}\subseteq \mathcal{L}$ and $\mathcal{S}\subseteq \mathcal{K}$, we have
\begin{align}
\mathbf{J}(\mathbf{X}_{\mathcal{T},q}|\mathbf{X}_{\mathcal{T}^c,q},\mathbf{U}_{S^c,q}) = \sum_{k\in\mathcal{S}^{c}}\mathbf{H}_{k,\mathcal{T}}^{H}
\mathbf{B}_{k,q}
\mathbf{H}_{k,\mathcal{T}}+\mathbf{K}^{-1}_{\mathcal{T},q}\label{eq:Fischerequality}.
\end{align}
Equality \eqref{eq:Fischerequality} is obtained as follows. Let us define, for all $\mathcal{T}\subseteq \mathcal{L}$ and $\mathcal{S}\subseteq \mathcal{K}$
\begin{align}
\mathbf{Y}_{\mathcal{S}^c,q} := \mathbf{H}_{{S}^c,\mathcal{T}}\mathbf{X}_{\mathcal{T},q}+\mathbf{H}_{\mathcal{S}^c,\mathcal{T}^c}\mathbf{X}_{\mathcal{T}^c,q}+\mathbf{N}_{\mathcal{S}^c}.
\end{align}
It follows from the MMSE estimation of Gaussian random vectors \cite{elGamal:book}, that
\begin{align}
\mathbf{X}_{\mathcal{T},q}
 &= \mathrm{E}[\mathbf{X}_{\mathcal{T},q}|\mathbf{X}_{\mathcal{T}^c,q},\mathbf{Y}_{\mathcal{S}^c,q}]+\mathbf{Z}_{\mathcal{T},\mathcal{S}^c} = \sum_{k\in \mathcal{S}^c}\mathbf{G}_{\mathcal{T},k}(\mathbf{Y}_{k,q}-\mathbf{H}_{k,\mathcal{T}^c}\mathbf{X}_{\mathcal{T}^c})+\mathbf{Z}_{\mathcal{T},\mathcal{S}^c,q},
\end{align}
where $\mathbf{Z}_{\mathcal{T},\mathcal{S}^c,q}\sim\mathcal{CN}(\mathbf{0},\boldsymbol\Lambda_{Z,q})$ is the estimation error,  with covariance matrix
\begin{align}
\boldsymbol\Lambda_{Z,q} =  \left( \mathbf{K}_{\mathcal{T},q}^{-1} +\sum_{k\in \mathcal{S}^c}\mathbf{H}_{k,\mathcal{T}}^H\mathbf{\Sigma}_{k}^{-1}\mathbf{H}_{k,\mathcal{T}}  \right)^{-1},
\end{align}
and
\begin{align}\label{eq:MMSEfilter}
\mathbf{G}_{\mathcal{T},k,q} = \boldsymbol\Lambda_{Z,q} \mathbf{H}^H_{k,\mathcal{T}}\mathbf{\Sigma}_k^{-1}.
\end{align}

Note that since  $\mathbf{Z}_{\mathcal{T},\mathcal{S}^c,q}$ and $\mathbf{X}_{\mathcal{T}^c,q},\mathbf{Y}_{\mathcal{S}^c,q}$ are Gaussian distributed, $\mathbf{Z}_{\mathcal{T},\mathcal{S}^c,q}$ and  $\mathbf{X}_{\mathcal{T}^c,q},\mathbf{Y}_{\mathcal{S}^c,q}$ are uncorrelated due to the orthogonality principle of the MMSE estimator \cite{elGamal:book}, and hence independent.
Therefore,  $\mathbf{Z}_{\mathcal{T},\mathcal{S}^c,q}$ is also independent of $\mathbf{U}_{\mathcal{S}^c,q}$. Then, by Lemma~\ref{lemm:Brujin}, we have
\begin{align}
\mathbf{J}(\mathbf{X}_{\mathcal{T},q}|\mathbf{X}_{\mathcal{T}^c,q},\mathbf{U}_{S^c,q})
&=\boldsymbol\Lambda_{Z,q}^{-1} - \boldsymbol\Lambda_{Z,q}^{-1} \text{mmse}
\left(
 \sum_{k\in \mathcal{S}^c}\mathbf{G}_{\mathcal{T},k}
  (\mathbf{Y}_k-\mathbf{H}_{k,\mathcal{T}^c}
  \mathbf{X}_{\mathcal{T}^c,q})
  \Big|\mathbf{X}_{\mathcal{L},q}, \mathbf{U}_{\mathcal{S}^c,q}
  \right)  \boldsymbol\Lambda_{Z,q}^{-1} \\
  &=\boldsymbol\Lambda_{Z,q}^{-1} - \boldsymbol\Lambda_{Z,q}^{-1} \text{mmse}
\left(
 \sum_{k\in \mathcal{S}^c}\mathbf{G}_{\mathcal{T},k}
\mathbf{Y}_k
  \Big|\mathbf{X}_{\mathcal{L},q}, \mathbf{U}_{\mathcal{S}^c,q}
  \right)  \boldsymbol\Lambda_{Z,q}^{-1} \\
  &=\boldsymbol\Lambda_{Z,q}^{-1} - \boldsymbol\Lambda_{Z,q}^{-1}
\left(
 \sum_{k\in \mathcal{S}^c}\mathbf{G}_{\mathcal{T},k}
\text{mmse}\left(\mathbf{Y}_k
  \Big|\mathbf{X}_{\mathcal{L},q}, \mathbf{U}_{\mathcal{S}^c,q} \right)\mathbf{G}_{\mathcal{T},k}^H
  \right)  \boldsymbol\Lambda_{Z,q}^{-1} \label{eq:CrossTerms}\\
  &=\boldsymbol\Lambda_{Z,q}^{-1} -
 \sum_{k\in \mathcal{S}^c}\mathbf{H}_{k,\mathcal{T}}^H
\left(\mathbf{\Sigma}_{k}^{-1}-\mathbf{B}_{k} \right)\mathbf{H}_{k,\mathcal{T}}\label{eq:CovSubs}
 \\
 &=\mathbf{K}_{\mathcal{T},q}^{-1} +
 \sum_{k\in \mathcal{S}^c}\mathbf{H}_{k,\mathcal{T}}^H
\mathbf{B}_{k}\mathbf{H}_{k,\mathcal{T}},
 \end{align}
where \eqref{eq:CrossTerms} follows since the cross terms are zero due to the Markov chains,
\begin{align}
(\mathbf{U}_{k,q},\mathbf{Y}_k)\mkv \mathbf{X}_{\mc L, q} \mkv (\mathbf{U}_{\mathcal{K}/k,q},\mathbf{Y}_{\mathcal{K}/k}),
\end{align}
 for $q\in \mc Q$ and $k\in \mc K$; and \eqref{eq:CovSubs} is due to \eqref{eq:covB} and \eqref{eq:MMSEfilter}.

We proceed to derive the outer bound. From  \eqref{eq:MK_C_Main}, we have for $k\in \mc K$ and $q\in \mathcal{Q}$,
\begin{align}
I(\mathbf{Y}_{k,q};\mathbf{U}_{k,q}|\mathbf{X}_{\mathcal{L},q},Q=q)& = \log|(\pi e)\boldsymbol\Sigma_k| -h(\mathbf{Y}_{k,q}|\mathbf{X}_{\mathcal{L},q},\mathbf{U}_{k,q},Q=q)\\
& \geq \log|(\pi e)\boldsymbol\Sigma_k| -\log|(\pi e )~\mathrm{mmse}(\mathbf{Y}_{k,q}|\mathbf{X}_{\mathcal{L},q},\mathbf{U}_{k,q}) |\\
&\geq \log\frac{|\mathbf{\Sigma}_{k}^{-1}|}{|\mathbf{\Sigma}_{k}^{-1}-\mathbf{B}_{k,q}|}.\label{eq:firstIneq}
\end{align}
On the other hand,
\begin{align}
I(\mathbf{X}_{\mathcal{T},q};\mathbf{U}_{S^c,q}|\mathbf{X}_{\mathcal{T}^c,q},Q=q)& = h(\mathbf{X}_{\mathcal{T},q}|Q=q)-h(\mathbf{X}_{\mathcal{T},q}|\mathbf{X}_{\mathcal{T}^c,q},\mathbf{U}_{S^c,q},Q=q)\\
&\leq \log|\mathbf{K}_{\mathcal{T},q}|-\log|\mathbf{J}^{-1}(\mathbf{X}_{\mathcal{T},q}|\mathbf{X}_{\mathcal{T}^c,q},\mathbf{U}_{S^c,q})|\label{eq:FI_Ineq},\\
& \leq \log|\mathbf{K}_{\mathcal{T},q}|+
\log
\left|\sum_{k\in\mathcal{S}^{c}}\mathbf{H}_{k,\mathcal{T}}^{H}
\mathbf{B}_{k,q}
\mathbf{H}_{k,\mathcal{T}}+\mathbf{K}^{-1}_{\mathcal{T},q}\right|\label{eq:secondtIneq},
\end{align}
where \eqref{eq:FI_Ineq} is due to Lemma \ref{lem:FI_Ineq}; and \eqref{eq:secondtIneq} is due to \eqref{eq:Fischerequality}.

Substituting \eqref{eq:firstIneq} and \eqref{eq:secondtIneq} in \eqref{eq:MK_C_Main} for each $\mathcal{T}\subseteq \mathcal{L}$, we have
\begin{align}
I(\mathbf{Y}_k;\mathbf{U}_k|\mathbf{X}_{\mathcal{L}},Q)
&= \sum_{q\in \mathcal{Q}}p(q)I(\mathbf{Y}_{k,q};\mathbf{U}_{k,q}|\mathbf{X}_{\mathcal{L},q},Q=q)\\
&\geq \mathrm{E}_Q\left[\log\frac{|\mathbf{\Sigma}_{k}^{-1}|}{|\mathbf{\Sigma}_{k}^{-1}-\mathbf{B}_{k,q}|}\right]\label{eq:boundQ1},
\end{align}
and
\begin{align}
I(\mathbf{X}_{\mathcal{T}};\mathbf{U}_{S^c}|\mathbf{X}_{\mathcal{T}^c,q},Q) &= \sum_{q\in \mathcal{Q}}p(q)I(\mathbf{X}_{\mathcal{T},q};\mathbf{U}_{S^c,q}|\mathbf{X}_{\mathcal{T}^c,q},Q=q)\\
&\leq \mathrm{E}_Q\left[ \log|\mathbf{K}_{\mathcal{T},q}|+
\log
\left|\sum_{k\in\mathcal{S}^{c}}\mathbf{H}_{k,\mathcal{T}}^{H}
\mathbf{B}_{k,q}
\mathbf{H}_{k,\mathcal{T}}+\mathbf{K}^{-1}_{\mathcal{T},q}\right|\right]\label{eq:boundQ2}.
\end{align}

This gives an outer bound on the capacity region of the memoryless Gaussian MIMO model described by~\eqref{mimo-gaussian-model} and~\eqref{input-covariance-matrix-Gaussian-model} under time-sharing of Gaussian inputs as given in \eqref{eq:GaussSumCap}.

The direct part of Theorem~ \ref{th:GaussSumCap} follows by noting that this outer bound is achieved by evaluating \eqref{eq:MK_C_Main}, for $Q=q$, with
$\mathbf{X}_{l}|Q=q\sim\mathcal{CN}(\mathbf{0},\mathbf{K}_{l,q})$ and $\mathbf{U}_{k,q}\sim\mathcal{CN}(\mathbf{Y}_{k,q},\mathbf{Q}_{k,q})$, where $\mathbf{B}_{k,q} = (\mathbf{\Sigma}_k+\mathbf{Q}_{k,q})^{-1}$ for some $\mathbf{0}\preceq\mathbf{B}_{k,q}\preceq\mathbf{\Sigma}_k^{-1}$ as given in \eqref{eq:covB}. \qed

%%%%%%%%%%%%%%%%%%%%%%%%%%%%%%%%%%%%%%%%%%%%%%%%%%%%%%%%%%%%%%%%%%%%%%%%%%%%%%%%%%%%%%%%%%%%%%%%%%%%%%%%%%%%%%%%%%%%%%%%%%%%%%%%%%%%%%%%%%%%%%%%%%%%%%%%%%%%%%%%%%%%%%%%%%
%%%%%%%%%%%%%%%%%%%%%%%%%%%%%%%%%%%%%%%%%%%%%%%%%%%%%%%%%%%%%%%%%%%%%%%%%%%%%%%%%%%%%%%%%%%%%%%%%%%%%%%%%%%%%%%%%%%%%%%%%%%%%%%%%%%%%%%%%%%%%%%%%%%%%%%%%%%%%%%%%%%%%%%%%%
%%%%%%%%%%%%%%%%%%%%%%%%%%%%%%%%%%%%%%%%%%%%%%%%%%%%%%%%%%%%%%%%%%%%%%%%%%%%%%%%%%%%%%%%%%%%%%%%%%%%%%%%%%%%%%%%%%%%%%%%%%%%%%%%%%%%%%%%%%%%%%%%%%%%%%%%%%%%%%%%%%%%%%%%%%
\section{Proof of Proposition \ref{th:GaussSumCap_EqualK}}\label{app:GaussSumCap_EqualK}

To prove Proposition~\ref{th:GaussSumCap_EqualK} we show that under channel input constraint \eqref{eq:EqualCov}, the capacity region $\mc C_{\text{G}}(C_{\mc K})$ in Theorem \ref{th:GaussSumCap} is outer bounded by the region  $\mc C_{\text{G}}^{\text{no-ts}}(C_{\mc K})$ in \eqref{capacity-Gaussian-example-no-time-sharing}. Then, the result follows since $\mc C_{\text{G}}^{ \text{no-ts}}(C_{\mc K})$ is achievable with CF-SD without time-sharing, i.e.,  $Q=\emptyset$, and Gaussian channel inputs satisfying \eqref{eq:EqualCov}.
We use the following lemma, which can be readily proven by the application of Weyl's inequality \cite{Horn1985}.
% We use the following lemma proved in Appendix \ref{app:AB}.

\begin{lemma}\label{lem:AB}
Let $\mathbf{A}$ and $\mathbf{B}$ be two $m\times m$ positive-definite matrices  satisfying $\mathbf{B}\succeq \mathbf{A}$. Then for any $m\times m$ positive-definite matrix $\mathbf{C}$, we have
$|\mathbf{I}+\mathbf{B}\mathbf{C}|\geq|\mathbf{I}+\mathbf{A}\mathbf{C}|$.
\end{lemma}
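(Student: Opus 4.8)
The plan is to reduce the claimed determinant inequality to a statement about eigenvalues, via a similarity transformation that symmetrizes the product $\mathbf{B}\mathbf{C}$ (respectively $\mathbf{A}\mathbf{C}$). First I would observe that since $\mathbf{C}\succ \dv 0$, we may write $\mathbf{C} = \mathbf{C}^{1/2}\mathbf{C}^{1/2}$ with $\mathbf{C}^{1/2}\succ \dv 0$, and then
\[
|\mathbf{I} + \mathbf{B}\mathbf{C}| = |\mathbf{I} + \mathbf{C}^{1/2}\mathbf{B}\mathbf{C}^{1/2}|,
\]
using the identity $|\mathbf{I}+\mathbf{M}\mathbf{N}| = |\mathbf{I}+\mathbf{N}\mathbf{M}|$ with $\mathbf{M}=\mathbf{B}\mathbf{C}^{1/2}$, $\mathbf{N}=\mathbf{C}^{1/2}$; the same identity gives $|\mathbf{I}+\mathbf{A}\mathbf{C}| = |\mathbf{I} + \mathbf{C}^{1/2}\mathbf{A}\mathbf{C}^{1/2}|$. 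Now $\mathbf{C}^{1/2}\mathbf{B}\mathbf{C}^{1/2}$ and $\mathbf{C}^{1/2}\mathbf{A}\mathbf{C}^{1/2}$ are both Hermitian (positive definite), so their determinants are products of real positive eigenvalues, and it suffices to compare those eigenvalues.

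The key step is then the monotonicity: from $\mathbf{B}\succeq \mathbf{A}$ we get, by congruence with $\mathbf{C}^{1/2}$,
\[
\mathbf{C}^{1/2}\mathbf{B}\mathbf{C}^{1/2} \;\succeq\; \mathbf{C}^{1/2}\mathbf{A}\mathbf{C}^{1/2},
\]
since $\dv x^H \mathbf{C}^{1/2}(\mathbf{B}-\mathbf{A})\mathbf{C}^{1/2}\dv x = (\mathbf{C}^{1/2}\dv x)^H(\mathbf{B}-\mathbf{A})(\mathbf{C}^{1/2}\dv x)\ge 0$ for all $\dv x$. By Weyl's monotonicity theorem (or equivalently the Courant–Fischer min–max characterization of eigenvalues), $\mathbf{M}\succeq \mathbf{N}$ for Hermitian $\mathbf{M},\mathbf{N}$ implies $\lambda_i(\mathbf{M})\ge \lambda_i(\mathbf{N})$ for the ordered eigenvalues, $i=1,\dots,m$. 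Adding $\mathbf{I}$ shifts every eigenvalue by $1$, so $\lambda_i(\mathbf{I}+\mathbf{C}^{1/2}\mathbf{B}\mathbf{C}^{1/2})\ge \lambda_i(\mathbf{I}+\mathbf{C}^{1/2}\mathbf{A}\mathbf{C}^{1/2})>0$ for all $i$. Taking the product over $i$ yields
\[
|\mathbf{I}+\mathbf{C}^{1/2}\mathbf{B}\mathbf{C}^{1/2}| \;\ge\; |\mathbf{I}+\mathbf{C}^{1/2}\mathbf{A}\mathbf{C}^{1/2}|,
\]
which by the two determinant identities above is exactly $|\mathbf{I}+\mathbf{B}\mathbf{C}|\ge |\mathbf{I}+\mathbf{A}\mathbf{C}|$.

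There is no real obstacle here; the only point to be slightly careful about is that $\mathbf{B}\mathbf{C}$ itself need not be Hermitian, which is precisely why the symmetrizing congruence by $\mathbf{C}^{1/2}$ is introduced before invoking Weyl's inequality — this is the step the statement of the lemma alludes to when it says the result follows from Weyl's inequality. (Alternatively, one could avoid eigenvalues altogether and argue directly: $|\mathbf{I}+\mathbf{B}\mathbf{C}| = |\mathbf{C}|\,|\mathbf{C}^{-1}+\mathbf{B}|$ is monotone in $\mathbf{B}$ with respect to $\preceq$ because $\mathbf{X}\mapsto |\mathbf{X}|$ is monotone on positive-definite matrices, and $\mathbf{C}^{-1}+\mathbf{B}\succeq \mathbf{C}^{-1}+\mathbf{A}\succ\dv 0$; but the congruence argument is cleaner and self-contained.)
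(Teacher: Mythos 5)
Your proof is correct and follows essentially the same route as the paper's: symmetrize via the congruence $\mathbf{C}^{1/2}(\cdot)\mathbf{C}^{1/2}$ together with the identity $|\mathbf{I}+\mathbf{M}\mathbf{N}|=|\mathbf{I}+\mathbf{N}\mathbf{M}|$, then apply Weyl's monotonicity to the ordered eigenvalues and take the product. No gaps to report.
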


In the following we show $\mc C_{\text{G}}(C_{\mc K}) \subseteq \mc C_{\text{G}}^{\text{no-ts}}(C_{\mc K}) $. Let us  define $\bar{\mathbf{B}}_k:= \sum_{q\in \mathcal{Q}}p(q)\mathbf{B}_{k,q}$ for $ \dv 0\preceq \dv B_{k,q}\preceq \dv \Sigma_{k}^{-1}$, $k\in \mc K$, as in Theorem \ref{th:GaussSumCap}. Note that  $ \dv 0\preceq \bar{\mathbf{B}}_k\preceq \dv \Sigma_{k}^{-1}$. We have, from \eqref{eq:GaussSumCap}
\begin{align}
\sum_{q\in \mathcal{Q}}p(q)\log\frac{|\mathbf{\Sigma}_{k}^{-1}|}{|\mathbf{\Sigma}_{k}^{-1}-\mathbf{B}_{k,q}|}
&\geq \log\frac{|\mathbf{\Sigma}_{k}^{-1}|}{|\mathbf{\Sigma}_{k}^{-1}-\sum_{q\in \mathcal{Q}}p(q)\mathbf{B}_{k,q}|}\label{eq:logDetProp}\\
&=\log\frac{|\mathbf{\Sigma}_{k}^{-1}|}{|\mathbf{\Sigma}_{k}^{-1}-\bar{\mathbf{B}}_k|},\label{eq:logDetProp2}
\end{align}
where \eqref{eq:logDetProp} follows from the concavity of the log-det function and Jensen's Inequality \cite{Boyd2004}.

Similarly, from \eqref{eq:GaussSumCap} we have
\begin{align}
\sum_{q\in \mathcal{Q}}&p(q) \left(
\log \frac{
\left|\sum_{k\in\mathcal{S}^{c}}\mathbf{H}_{k,\mathcal{T}}^{H}
\mathbf{B}_{k,q}
\mathbf{H}_{k,\mathcal{T}}+\mathbf{K}^{-1}_{\mathcal{T},q}\right|}{\left|\mathbf{K}_{q,\mathcal{T}}^{-1}\right|}\right)\label{eq:secondtIneq_app}\\
&= \sum_{q\in \mathcal{Q}}p(q) \left(\log\left|\mathbf{\tilde{K}}_{\mathcal{T}}\right|+
\log
\left|\sum_{k\in\mathcal{S}^{c}}\mathbf{H}_{k,\mathcal{T}}^{H}
\mathbf{B}_{k,q}
\mathbf{H}_{k,\mathcal{T}}+\mathbf{\tilde{K}}^{-1}_{\mathcal{T}}\right|\right)\label{eq:secondEqualK}\\
&\leq\log \left|\tilde{\mathbf{K}}_{\mathcal{T}}\sum_{k\in\mathcal{S}^{c}}\mathbf{H}_{k,\mathcal{T}}^{H}
\bar{\mathbf{B}}_{k}
\mathbf{H}_{k,\mathcal{T}}+\mathbf{I}\right|\label{eq:secondtIneq_Jensen}\\
&\leq\log \left|\mathbf{K}_{\mathcal{T}}\sum_{k\in\mathcal{S}^{c}}\mathbf{H}_{k,\mathcal{T}}^{H}
\bar{\mathbf{B}}_{k}
\mathbf{H}_{k,\mathcal{T}}+\mathbf{I}\right|\label{eq:secondtIneq_4},
\end{align}
where \eqref{eq:secondEqualK} follows from the channel input constraint \eqref{eq:EqualCov}; \eqref{eq:secondtIneq_Jensen} is due to the concavity of the log-det function and Jensen's inequality; \eqref{eq:secondtIneq_Jensen} follows due to the definition of $\mathbf{\bar{B}}_{k}$; \eqref{eq:secondtIneq_4} follows due to Lemma \ref{lem:AB}, since $\sum_{k\in\mathcal{S}^{c}}\mathbf{H}_{k,\mathcal{T}}^{H}
\bar{\mathbf{B}}_{k}
\mathbf{H}_{k,\mathcal{T}}$ is positive-definite and
$\tilde{\mathbf{K}}_{\mathcal{T}} \preceq \mathbf{K}_{\mathcal{T}}$.

This shows that  $\mc C_{\text{G}}(C_{\mc K}) \subseteq \mc C_{\text{G}}^{\text{no-ts}}(C_{\mc K})$. The proof is completed by noting that  $\mc C_{\text{G}}^{\text{no-ts}}(C_{\mc K}) \subseteq \mc C_{\text{G}}(C_{\mc K})$, and therefore we have $\mc C_{\text{G}}^{\text{no-ts}}(C_{\mc K}) = \mc C_{\text{G}}(C_{\mc K})$. \qed

\section{Proof of Proposition \ref{th:GaussSumCap_highSNR}}\label{app:GaussSumCap_highSNR}

In order to prove Proposition \ref{th:GaussSumCap_highSNR}, we derive an outer bound on the capacity region under time-sharing of Gaussian inputs $\mc C_{\text{G}}(C_{\mc K})$ in Theorem \ref{th:GaussSumCap}, which we denote by $\mc C_{\text{G}}^{\mathrm{out}}(C_{\mc K})$. Then, we derive an inner bound on $\mc C_{\text{G}}^{ \text{no-ts}}(C_{\mc K})$, the  region obtained by setting $Q=\emptyset$ in the region of Theorem~\ref{th:GaussSumCap}, denoted by $\mc C_{\text{G}}^{\text{in,no-ts}}(C_{\mc K})$. We show that if $(R_1,\ldots, R_L)$ lies in the outer bound $\mc C_{\text{G}}^{\mathrm{out}}(C_{\mc K})$, then the rate tuple $((R_1-\Delta_{\epsilon}),\ldots, (R_L-\Delta_{\epsilon}))$ lies in the inner bound $\mc C_{\text{G}}^{\mathrm{in}}(C_{\mc K})$, where $\Delta_{\epsilon}\geq 0$. Finally we show, that in the high SNR regime, i.e., for $\epsilon\rightarrow 0$, the gap vanishes, i.e.,  $\Delta_{\epsilon}\rightarrow 0$.

The  derivations of the bounds in this section use the following equality
\begin{align}
\log(\epsilon^{-1} \lambda + 1 ) = \log (\epsilon^{-1} \lambda) + \log(\epsilon \lambda^{-1}+1) \quad \text{for } \lambda,\epsilon > 0,\label{eq:logEquality}
\end{align}
 and the following upper and lower bound:
\begin{align}
\frac{x}{1+x}\leq \log (1 + x)\leq x \quad \text{for } x>-1.\label{eq:logEquality2}
\end{align}

First, let us define the outer bound $\mc C_{\text{G}}^{\mathrm{out}}(C_{\mc K})$ as the set of rate tuples $(R_1,\ldots, R_L)$ satisfying that for all $ \mathcal{T} \subseteq \mathcal{L}$ and all $\mathcal{S} \subseteq \mathcal{K}$,
{\small
\begin{align}
\sum_{t\in\mathcal{T}}R_{t} &\leq
 \sum_{k\in \mathcal{S}}\left[C_k + \log|\mathbf{I}-\bar{\mathbf{B}}_k| \right]
+ \log
\left|\mathbf{K}_{\mathcal{T}}\sum_{k\in\mathcal{S}^{c}}\mathbf{H}_{k,\mathcal{T}}^{H}
\dv\Sigma_k^{-1/2}\bar{\mathbf{B}}_{k}\dv\Sigma_k^{-1/2}
\mathbf{H}_{k,\mathcal{T}}\right|+
\epsilon \sum_{q\in \mathcal{Q}}p(q)\mathrm{Tr}\{\mathbf{A}_{\mc T,\mc S,q}^{-1}\}\\
&:= f_{\mathrm{out}}(\mc{T}, \mc S)\label{eq:HighSNR_OutRegion},
\end{align}}
for some $\mathbf{0}\preceq \tilde{\mathbf{B}}_{k,q}\preceq \mathbf{I}$ for $k\in \mc K$, $q\in \mc Q$,  and
where we define $\bar{\mathbf{B}}_k:= \sum_{q\in \mathcal{Q}}p(q)\mathbf{\tilde{B}}_{k,q}$ and the $M\times M$ matrix, for $q\in \mc{Q}$ and all $\mc T\subseteq \mc L$ and $\mc S\subseteq \mc K$  given by
\begin{align}
\mathbf{A}_{\mc T, \mc S,q} := \mathbf{K}_{\mathcal{T},q}^{1/2}\sum_{k\in\mathcal{S}^{c}}\mathbf{H}_{k,\mathcal{T}}^{H}\tilde{\dv\Sigma}_k^{-1/2}
\tilde{\mathbf{B}}_{k,q}\tilde{\dv\Sigma}_k^{-1/2}
\mathbf{H}_{k,\mathcal{T}}\mathbf{K}_{\mathcal{T},q}^{1/2}\label{eq:defAq}.
\end{align}
It follows from  \eqref{eq:OBound_highSNR_eigen} below, that we can assume $\mathbf{A}_{\mc T,\mc S,q}\succ \mathbf{0}$ without loss in generality.

Next, we show $\mc C_{\text{G}}(C_{\mc K})\subseteq\mc C_{\text{G}}^{\mathrm{out}}(C_{\mc K})$. Let us define $\tilde{\dv B}_{k,q} := \dv\Sigma_k^{1/2}\dv B_{k,q}\dv\Sigma_k^{1/2}$. Note that $\tilde{\dv B}_{k,q}$ satisfies  $\dv 0 \preceq  \tilde{\dv B}_{q,k}\preceq \dv I$ for $q\in \mc Q$ and $k \in \mc K$.
We have, from Theorem \ref{th:GaussSumCap},
\begin{align}
\mathrm{E}_{Q}\left[\log\frac{|\mathbf{\Sigma}_k^{-1}|}{|\mathbf{\Sigma}^{-1}_k-\mathbf{B}_{k,Q}|}\right]
&=\sum_{q\in \mathcal{Q}}p(q)\log\frac{|\mathbf{\Sigma}_{k}^{-1}|}{|\mathbf{\Sigma}_{k}^{-1}-\mathbf{B}_{k,q}|}\\
&=-\sum_{q\in \mathcal{Q}}p(q)\log|\dv I-\tilde{\mathbf{B}}_{k,q}|\geq -\log|\dv I-\bar{\mathbf{B}}_{k}|,\label{eq:logDetProp22}
\end{align}
where \eqref{eq:logDetProp22} follows from the concavity of the log-det function and Jensen's Inequality \cite{Boyd2004}.

On the other hand, we have
\begin{align}
\mathrm{E}_{Q}&\left[ \log \frac{
|\sum_{k\in\mathcal{S}^{c}}\mathbf{H}_{k,\mathcal{T}}^{H}
\mathbf{B}_{k,Q}
\mathbf{H}_{k,\mathcal{T}}+\mathbf{K}^{-1}_{\mathcal{T},Q}|
}{
|\mathbf{K}_{\mathcal{T},Q}^{-1}|,
}\right]\\
&= \sum_{q\in \mathcal{Q}}p(q)
\log
\left|\mathbf{K}_{\mathcal{T},q}\sum_{k\in\mathcal{S}^{c}}\mathbf{H}_{k,\mathcal{T}}^{H}\dv\Sigma_k^{-1/2}
\tilde{\mathbf{B}}_{k,q}\dv\Sigma_k^{-1/2}
\mathbf{H}_{k,\mathcal{T}}+\mathbf{I}\right|\label{eq:OBound_highSNR_def}\\
&= \sum_{q\in \mathcal{Q}}p(q)
\log
\left|\frac{1}{\epsilon}\mathbf{A}_{\mc T, \mc S,q} +\mathbf{I}\right|\label{eq:OBound_highSNR_def2}\\
%
%&= \sum_{q\in \mathcal{Q}}p(q) \sum_{m= 1}^M
%\log
%\left(\frac{1}{\epsilon}\lambda_{m}(\mathbf{A}_{\mc T, \mc S,q} +\mathbf{I})\right)\label{eq:OBound_highSNR_def2}\\
%
&= \sum_{q\in \mathcal{Q}}p(q) \sum_{m= 1}^M
\log
\left(\frac{1}{\epsilon}\lambda_{m}(\mathbf{A}_{\mc T, \mc S,q}) + 1\right)\label{eq:OBound_highSNR_eigen}\\
&\leq
 \sum_{q\in \mathcal{Q}}p(q) \sum_{m= 1}^M\left(
\log
\left(\frac{1}{\epsilon}\lambda_{m}(\mathbf{A}_{\mc T, \mc S,q})\right)+
\left(\frac{\epsilon}{\lambda_{m}(\mathbf{A}_{\mc T, \mc S,q})} \right)
\right))\label{eq:OBound_highSNR_UpBound}\\
&= \sum_{q\in \mathcal{Q}}p(q) \sum_{m= 1}^M
\log
\left(\frac{1}{\epsilon}\lambda_{m}(\mathbf{A}_{\mc T, \mc S,q})\right)+
\epsilon \sum_{q\in \mathcal{Q}}p(q)\mathrm{Tr}\{\mathbf{A}_{\mc T, \mc S,q}^{-1}\}\label{eq:OBound_highSNR_trace}\\
&= \sum_{q\in \mathcal{Q}}p(q)
\log
\left|\frac{1}{\epsilon}\mathbf{K}_{\mathcal{T},q}^{1/2}\sum_{k\in\mathcal{S}^{c}}\mathbf{H}_{k,\mathcal{T}}^{H}\tilde{\dv\Sigma}_k^{-1/2}
\tilde{\mathbf{B}}_{k,q}\tilde{\dv\Sigma}_k^{-1/2}
\mathbf{H}_{k,\mathcal{T}}\mathbf{K}_{\mathcal{T},q}^{1/2}\right|+
\epsilon \sum_{q\in \mathcal{Q}}p(q)\mathrm{Tr}\{\mathbf{A}_{\mc T, \mc S,q}^{-1}\}\label{eq:OBound_highSNR_Jensen}\\
&\leq
\log
\left|\sum_{q\in \mathcal{Q}}p(q) \mathbf{K}_{\mathcal{T},q}\right| + \log\left|\frac{1}{\epsilon}\sum_{k\in\mathcal{S}^{c}}\mathbf{H}_{k,\mathcal{T}}^{H}\tilde{\dv\Sigma}_k^{-1/2}
\bar{\mathbf{B}}_{k}\tilde{\dv\Sigma}_k^{-1/2}
\mathbf{H}_{k,\mathcal{T}}\right|+
\epsilon \sum_{q\in \mathcal{Q}}p(q)\mathrm{Tr}\{\mathbf{A}_{\mc T, \mc S,q}^{-1}\}\label{eq:OBound_highSNR_Jensen}\\
&\leq
\log
\left| \mathbf{K}_{\mathcal{T}}\right| + \log\left|\frac{1}{\epsilon}\sum_{k\in\mathcal{S}^{c}}\mathbf{H}_{k,\mathcal{T}}^{H}\tilde{\dv\Sigma}_k^{-1/2}
\bar{\mathbf{B}}_{k}\tilde{\dv\Sigma}_k^{-1/2}
\mathbf{H}_{k,\mathcal{T}}\right|+
\epsilon \sum_{q\in \mathcal{Q}}p(q)\mathrm{Tr}\{\mathbf{A}_{\mc T, \mc S,q}^{-1}\}\label{eq:OBound_highSNR_PowerConst}
\end{align}
where \eqref{eq:OBound_highSNR_def} follows from the definition of $\tilde{\dv B}_{k,q}$ and since $\dv K_{\mc T,q}$ is definite positive; \eqref{eq:OBound_highSNR_def2} follows from the definition in \eqref{eq:defAq} and
$\dv \Sigma_k = \epsilon \tilde{\dv\Sigma}_k$; \eqref{eq:OBound_highSNR_eigen} is due to $\lambda_m(\dv A_{\mc T,q}+ \dv I) = \lambda_m(\dv A_{\mc T,q})+1$, $m=[1\!:\!M]$; \eqref{eq:OBound_highSNR_UpBound} is due to \eqref{eq:logEquality} and \eqref{eq:logEquality2}; \eqref{eq:OBound_highSNR_trace} is due to $\mathrm{Tr}\{\dv A^{-1}\}= \sum_{i=1}^{M}\lambda^{-1}(\dv A)$ for a $M\times M$ matrix $\dv A$; \eqref{eq:OBound_highSNR_Jensen} is due to Jensen's inequality; and \eqref{eq:OBound_highSNR_PowerConst} is due to the power constraint \eqref{eq:powConst} and  Weyl's inequality \cite{Horn1985}.

Combining \eqref{eq:logDetProp22} and \eqref{eq:OBound_highSNR_PowerConst} with \eqref{eq:GaussSumCap}, we obtain \eqref{eq:HighSNR_OutRegion}, and thus $\mc C_{\text{G}}(C_{\mc K})\subseteq \mc C_{\text{G}}^{\mathrm{out}}(C_{\mc K})$.

Next, let us define the inner bound  $\mc C_{\text{G}}^{\mathrm{in}}(C_{\mc K})$ as the set of rate tuples $(R_1,\ldots, R_L)$ satisfying that for all $ \mathcal{T} \subseteq \mathcal{L}$ and all $\mathcal{S} \subseteq \mathcal{K}$,
{\small
\begin{align}
\sum_{t\in\mathcal{T}}R_{t} &\leq
 \sum_{k\in \mathcal{S}}\left[C_k + \log|\mathbf{I}-\bar{\mathbf{B}}_k| \right]
+ \log
\left|\mathbf{K}_{\mathcal{T}}\sum_{k\in\mathcal{S}^{c}}\mathbf{H}_{k,\mathcal{T}}^{H}
\dv\Sigma_k^{-1/2}\bar{\mathbf{B}}_{k}\dv\Sigma_k^{-1/2}
\mathbf{H}_{k,\mathcal{T}}\right|+
\epsilon\mathrm{Tr}\{(\mathbf{A}_{\mc T,\mc S}+ \epsilon \dv I)^{-1}\}\\
&:= f_{\mathrm{in}}(\mc{T}, \mc S)\label{eq:HighSNR_InRegion},
\end{align}}
for some $\mathbf{0}\preceq \bar{\mathbf{B}}_{k}\preceq \mathbf{I}$ and
where we define, for $q\in \mc Q$ and all $\mc T\subseteq \mc L$ and $\mc S\subseteq \mc K$, the $M\times M$ matrix given by
\begin{align}
\bar{\mathbf{A}}_{\mc T, \mc S} := {\mathbf{K}}_{\mathcal{T}}^{1/2}\sum_{k\in\mathcal{S}^{c}}\mathbf{H}_{k,\mathcal{T}}^{H}\tilde{\dv\Sigma}_k^{-1/2}
\bar{\mathbf{B}}_{k}\tilde{\dv\Sigma}_k^{-1/2}
\mathbf{H}_{k,\mathcal{T}}{\mathbf{K}}_{\mathcal{T}}^{1/2}\label{eq:defAq_Inner}.
\end{align}

Next, we show that $\mc C_{\text{G}}^{\text{in,no-ts}}(C_{\mc K})\subseteq \mc C_{\text{G}}^{\text{no-ts}}(C_{\mc K})$, where $\mc C_{\text{G}}^{\text{no-ts}}(C_{\mc K})$ is given in \eqref{eq:GaussSumCap_EqualK}.  Let us define $\tilde{\dv B}_{k,q} := \dv\Sigma_k^{1/2}\dv B_{k,q}\dv\Sigma_k^{1/2}$. We have from \eqref{eq:GaussSumCap_EqualK},
\begin{align}
\log\frac{|\mathbf{\Sigma}_k^{-1}|}{|\mathbf{\Sigma}^{-1}_k-\mathbf{B}_k|} = -\log|\dv I-\bar{\mathbf{B}}_k|
\end{align}
and
\begin{align}
\log \frac{
|\sum_{k\in\mathcal{S}^{c}}\mathbf{H}_{k,\mathcal{T}}^{H}
\mathbf{B}_{k}
\mathbf{H}_{k,\mathcal{T}}+\mathbf{K}^{-1}_{\mathcal{T}}|
}{
|\mathbf{K}_{\mathcal{T}}^{-1}|,
}
&=
\log
\left|\mathbf{K}_{\mathcal{T}}\sum_{k\in\mathcal{S}^{c}}\mathbf{H}_{k,\mathcal{T}}^{H}\dv\Sigma_k^{-1/2}
\bar{\mathbf{B}}_{k}\dv\Sigma_k^{-1/2}
\mathbf{H}_{k,\mathcal{T}}+\mathbf{I}\right|\label{eq:InBound_highSNR_def}\\
&= \log
\left|\frac{1}{\epsilon}\bar{\mathbf{A}}_{\mc T, \mc S} +\mathbf{I}\right|\label{eq:InBound_highSNR_def2}\\
&=\sum_{m= 1}^M
\log
\left(\frac{1}{\epsilon}\lambda_{m}(\bar{\mathbf{A}}_{\mc T, \mc S}) + 1\right)\label{eq:InBound_highSNR_eigen}\\
&\geq
 \sum_{m= 1}^M\left(
\log
\left(\frac{1}{\epsilon}\lambda_{m}(\bar{\mathbf{A}}_{\mc T, \mc S})\right)+
\left(\frac{\frac{\epsilon}{\lambda_{m}(\bar{\mathbf{A}}_{\mc T, \mc S})}}{1+\frac{\epsilon}{\lambda_{m}(\bar{\mathbf{A}}_{\mc T, \mc S})}} \right)
\right)\label{eq:InBound_highSNR_lowBound}\\
&=
\sum_{m= 1}^M
\log
\left(\frac{1}{\epsilon}\lambda_{m}(\bar{\mathbf{A}}_{\mc T, \mc S})\right)+
\epsilon \sum_{m= 1}^M
\frac{ 1 }{ \lambda_{m}(\bar{\mathbf{A}}_{\mc T, \mc S})+\epsilon }
\label{eq:InBound_highSNR_UpBound}\\
&= \sum_{m= 1}^M
\log
\left(\frac{1}{\epsilon}\lambda_{m}(\bar{\mathbf{A}}_{\mc T, \mc S})\right)+
\epsilon \mathrm{Tr}\{(\bar{\mathbf{A}}_{\mc T, \mc S}+\epsilon \dv I)^{-1}\}\label{eq:InBound_highSNR_trace}
\end{align}
where \eqref{eq:InBound_highSNR_eigen} follows since $\lambda_m(\bar{\mathbf{A}}_{\mc T, \mc S})+ \dv I = \lambda_m(\bar{\mathbf{A}}_{\mc T, \mc S}) + 1$, $m\in[1\!:\!M]$,  \eqref{eq:InBound_highSNR_lowBound} is due to inequalities \eqref{eq:logEquality} and \eqref{eq:logEquality2}. This shows that $\mc C_{\text{G}}^{\text{in,no-ts}}(C_{\mc K})\subseteq \mc C_{\text{G}}^{\text{no-ts}}(C_{\mc K})$.

Now, we show that if $(R_1,\ldots, R_L)\in \mc C_{\text{G}}^{\mathrm{out}}(C_{\mc K})$, then $(R_1-\Delta_{\epsilon},\ldots, R_L-\Delta_{\epsilon})\in \mc C_{\text{G}}^{\text{in,no-ts}}(C_{\mc K})$, where  we define
\begin{align}\label{eq:DeltaOptim}
\Delta_{\epsilon} := \max_{\mc S\subseteq \mc L, \mc T\subseteq \mc K} \Delta_{\epsilon}(\mc T,\mc S),
\end{align}
and
\begin{align}
\Delta_{\epsilon}(\mc T,\mc S):= \frac{\epsilon  \sum_{q\in \mathcal{Q}}p(q)\mathrm{Tr}\{\mathbf{A}_{\mc T, \mc S,q}^{-1}\} - \epsilon \mathrm{Tr}\{(\bar{\mathbf{A}}_{\mc T, \mc S}+\epsilon \dv I)^{-1}\}}{|\mc T|}.
\end{align}

Then, for any rate tuple $(R_1,\ldots, R_L)\in \mc C_{\text{G}}^{\mathrm{out}}(C_{\mc K})$ we have
\begin{align}
\sum_{t\in \mc T}(R_t-\Delta_{\epsilon})&= \sum_{t\in \mc T}R_t-|\mc T|\Delta_{\epsilon}\\
&\leq f_{\mathrm{out}}(\mc S,\mc T)-|\mc T|\Delta_{\epsilon}\label{eq:InOuterBouns}\\
&\leq f_{\mathrm{out}}(\mc S,\mc T)-|\mc T|\Delta_{\epsilon}(\mc T,\mc S)\label{eq:LowBoundDelta}\\
&= f_{\mathrm{out}}(\mc S,\mc T) - \left(\epsilon \sum_{q\in \mathcal{Q}}p(q)\mathrm{Tr}\{\mathbf{A}_{\mc T, \mc S,q}^{-1}\} - \epsilon \mathrm{Tr}\{(\bar{\mathbf{A}}_{\mc T, \mc S}+\epsilon \dv I)^{-1}\}\right)\\
&= f_{\mathrm{in}}(\mc S,\mc T),
\end{align}
where \eqref{eq:InOuterBouns} follows since $(R_1,\ldots, R_L)\in \mc C_{\text{G}}^{\mathrm{out}}(C_{\mc K})$; \eqref{eq:LowBoundDelta} follows since $\Delta_{\epsilon}\geq \Delta_{\epsilon}(\mc T, \mc S)$ for all $\mc T\subseteq L$ and $\mc S\subseteq K$ due to its definition in \eqref{eq:DeltaOptim}.
This shows that  $(R_1-\Delta_{\epsilon},\ldots, R_L-\Delta_{\epsilon})\in \mc C_{\text{G}}^{\text{in,no-ts}}(C_{\mc K})$.

Next, we show that in the high SNR regime, i.e., $\epsilon \rightarrow 0$,  we have $\Delta_{\epsilon}\rightarrow 0$. We have
\begin{align}
\lim_{\epsilon \rightarrow 0}\Delta_{\epsilon} &= \lim_{\epsilon \rightarrow 0}\max_{\mc S\subseteq \mc L, \mc T\subseteq \mc K} \Delta_{\epsilon}(\mc T,\mc S)\\
&\leq  \lim_{\epsilon \rightarrow 0}\max_{\mc S\subseteq \mc L, \mc T\subseteq \mc K}[\epsilon \cdot \sum_{q\in \mathcal{Q}}p(q)\mathrm{Tr}\{\mathbf{A}_{\mc T, \mc S,q}^{-1}\}] \label{eq:boundDeltastar}\\
&= \lim_{\epsilon \rightarrow 0}~\epsilon \cdot \max_{\mc S\subseteq \mc L, \mc T\subseteq \mc K}[ \sum_{q\in \mathcal{Q}}p(q)\mathrm{Tr}\{\mathbf{A}_{\mc T, \mc S,q}^{-1}\}] \\
&=0,\label{eq:boundDeltastar2}
\end{align}
where \eqref{eq:boundDeltastar} follows since $\mathrm{Tr}\{(\bar{\mathbf{A}}_{\mc T, \mc S}+\epsilon \dv I)^{-1}\}\geq 0$ since $\bar{\mathbf{A}}_{\mc T, \mc S}+\epsilon \dv I\succ \dv 0$; and \eqref{eq:boundDeltastar2} follows since $0 \leq \mathrm{Tr}\{\mathbf{A}_{\mc T, \mc S,q}^{-1}\}]< \infty$ since $\mathbf{A}_{\mc T, \mc S,q}\succ \dv 0$ and it is independent of $\epsilon$.

This completes the proof of Proposition \ref{th:GaussSumCap_highSNR}.\qed

\end{appendices}

\bibliographystyle{ieeetran}
\bibliography{ref}

% APPENDICES TO BE RELEGATED TO A SEPARATE FILE

\end{document}